\newcommand{\cmp}{Comm. Math. Phys.~}
\newcommand{\jmp}{J. Math. Phys.~}
\newcommand{\jpa}{J. Phys. A~}
\newcommand{\natphy}{Nature Phys.~}
\newcommand{\njp}{New. J. Phys.~}
\newcommand{\prl}{Phys. Rev. Lett.~}
\newcommand{\pra}{Phys. Rev. A~}
\newcommand{\pre}{Phys. Rev. E~}
\newcommand{\pla}{Phys. Lett. A~}
\newcommand{\sci}{Science}
\definecolor{myurlcolor}{rgb}{0,0,0.7}
\newcommand{\tinyspace}{\mspace{1mu}}
\newcommand{\op}[1]{\operatorname{#1}}
\newcommand{\abs}[1]{\left\lvert\tinyspace #1 \tinyspace\right\rvert}
\newcommand{\norm}[1]{\left\lVert\tinyspace #1 \tinyspace\right\rVert}
\renewcommand{\det}{\operatorname{det}}
\renewcommand{\t}{{\scriptscriptstyle\mathsf{T}}}
\newcommand{\setft}[1]{\mathrm{#1}}
\newcommand{\density}[1]{\setft{D}\left(#1\right)}
\newcommand{\rank}{\op{rank}}
\newcommand{\sign}{\op{sign}}
\newcommand{\supp}{{\operatorname{supp}}}
\def\vol{\mathrm{vol}}
\def\haar{\mathrm{Haar}}
\def \dif {\mathrm{d}}
\def \diag {\mathrm{diag}}
\def \vol {\mathrm{vol}}
\def\complex{\mathbb{C}}
\def\real{\mathbb{R}}
\def\I{\mathbb{1}}
\newenvironment{mylist}[1]{\begin{list}{}{
    \setlength{\leftmargin}{#1}
    \setlength{\rightmargin}{0mm}
    \setlength{\labelsep}{2mm}
    \setlength{\labelwidth}{8mm}
    \setlength{\itemsep}{0mm}}}
    {\end{list}}
\def\ot{\otimes}
\newcommand{\inner}[2]{\langle #1 , #2\rangle}
\newcommand{\Inner}[2]{\left\langle #1 , #2\right\rangle}
\newcommand{\Innerm}[3]{\left\langle #1 \left| #2 \right| #3 \right\rangle}
\newcommand{\Pa}[1]{\left(#1\right)}
\newcommand{\Br}[1]{\left[#1\right]}
\newcommand{\set}[1]{\{#1\}}
\newcommand{\Set}[1]{\left\{#1\right\}}
\newcommand{\ket}[1]{|#1\rangle}
\DeclareMathOperator{\trace}{Tr}
\newcommand{\Ptr}[2]{\trace_{#1}\Pa{#2}}
\newcommand{\Tr}[1]{\Ptr{}{#1}}
\def\cU{\mathcal{U}}
\def\bsA{\boldsymbol{A}}\def\bsB{\boldsymbol{B}}\def\bsC{\boldsymbol{C}}
\def\bsM{\boldsymbol{M}}\def\bsO{\boldsymbol{O}}
\def\bsP{\boldsymbol{P}}\def\bsQ{\boldsymbol{Q}}\def\bsS{\boldsymbol{S}}\def\bsT{\boldsymbol{T}}
\def\bsU{\boldsymbol{U}}\def\bsX{\boldsymbol{X}}
\def\bsa{\boldsymbol{a}}\def\bsb{\boldsymbol{b}}\def\bsc{\boldsymbol{c}}
\def\bsu{\boldsymbol{u}}\def\bsv{\boldsymbol{v}}\def\bsy{\boldsymbol{y}}
\def\bsz{\boldsymbol{z}}
\def\rD{\mathrm{D}}
\def\O{\textsf{O}}
\def\U{\textsf{U}}
\newtheorem{thrm}{Theorem}[section]
\newtheorem{lem}[thrm]{Lemma}
\newtheorem{prop}[thrm]{Proposition}
\theoremstyle{definition}
\numberwithin{equation}{section}
\newcounter{questionnumber}
\begin{document}

\title{\large\bf Uncertainty Regions of Observables and State-Independent Uncertainty Relations}

\author{Lin Zhang$^1$\footnote{E-mail: godyalin@163.com},\,\ Shunlong Luo$^2$,\,\ Shao-Ming Fei$^3$,\,\ and Junde Wu$^4$\\
  {\it\small $^1$Institute of Mathematics, Hangzhou Dianzi University, Hangzhou 310018, PR~China}\\
  {\it\small $^2$Academy of Mathematics and Systems Science, Chinese
Academy of Sciences, Beijing 100190, China } \\
{\it\small $^3$School of Mathematical Sciences, Capital Normal
University, Beijing 100048, China}\\
{\it\small $^4$School of Mathematical Sciences, Zhejiang University,
Hangzhou 310027, China}}
\date{}
\maketitle \tableofcontents

\begin{abstract}

The optimal state-independent lower bounds for the sum of variances
or deviations of observables are of significance for the growing
number of experiments that reach the uncertainty limited regime. We
present a framework for computing the tight uncertainty relations of
variance or deviation via determining the uncertainty regions, which
are formed by the tuples of two or more of quantum observables in
random quantum states induced from the uniform Haar measure on the
purified states. From the analytical formulae of these uncertainty
regions, we present state-independent uncertainty inequalities
satisfied by the sum of variances or deviations of two, three and
arbitrary many observables, from which experimentally friend
entanglement detection criteria are derived for bipartite and
tripartite systems.\\~\\
\noindent{\bf Keywords:} Uncertainty of observable; Probability
density function; Uncertainty region; State-independent uncertainty
relation; Harish-Chandra-Itzykson-Z\"{u}ber integral

\end{abstract}

\newpage

\section{Introduction}

The uncertainty principle, apart from serving as a hallmark of the
quantum world, has wide applications and implications in both
theoretical and practical investigations of quantum mechanics. Ever
since its birth in 1927 \cite{Heisenberg1927}, various uncertainty
relations, as concrete realizations of the uncertainty principle,
have been extensively and intensively studied. In particular,
recently, the state-independent uncertainty relations have attracted
a lot of attentions
\cite{Dammeier2015njp,Li2015,Guise2018pra,Giorda2019pra,Xiao2019pra,Sponar2020pra}.
Whether deeper principles underlie quantum uncertainty and
nonlocality has been listed as one of the challenging scientific
problems on the occasion of celebrating the 125th anniversary of the
academical journal \emph{Science} \cite{Seife2005}. Thus it is of
fundamental significance to explore the intrinsic uncertainty of
given quantum mechanical observables due to its connections with
entanglement detection
\cite{Hofman2003pra,Guhne2004prl,Guhne2009pra,Schwonnek2017prl,Qian2018qip,Zhao2019prl}
and quantum nonlocality \cite{Oppenheim2010}.

The most celebrated uncertainty relation was initially conceived for
position and momentum observables by Heisenberg
\cite{Heisenberg1927}. A general form was the
Robertson-Schr\"{o}dinger uncertainty relation
\cite{Kennard1927,Weyl1928,Robertson1929,Schrodinger1930}:
\begin{eqnarray*}
(\Delta_\rho\bsA )^2 (\Delta _\rho\bsB)^2 \geqslant \frac 14(
\langle\set{\bsA _0,\bsB _0}\rangle_\rho
^2+\langle[\bsA,\bsB]\rangle_\rho^2 ),
\end{eqnarray*}
where $\langle\bsX\rangle_\rho=\Tr{\bsX\rho}$ is the expectation
value of $\bsX$ with respect to the state $\rho$, $(\Delta_\rho\bsA
)^2 =(\langle \bsA^2 \rangle _\rho -\langle \bsA \rangle _\rho ^2)$
is the corresponding variance of $\bsA$, $\bsA_0=\bsA -\langle \bsA
\rangle_\rho,$ $\bsB_0=\bsB -\langle \bsB \rangle_\rho$, $\set{\bsA,
\bsB}=\bsA\bsB+\bsB\bsA$ and $[\bsA, \bsB]=\bsA\bsB-\bsB\bsA$ denote
the anti-commutator (symmetric Jordan product) and commutator
(anti-symmetric Lie product), respectively.

Although the above uncertainty relation captures certain features of
the uncertainty principle in a very appealing, intuitive and
succinct way, the state-dependent lower bound and the variances
capture limited information of uncertainty for given pair of
observables. Recently, Busch and Reardon-Smitha proposed to consider
the uncertainty region of two observables $\bsA$ and $\bsB$ instead
of finding bounds on some particular choice of uncertainty
functionals \cite{Busch2019}, which provides apparently more
information about the uncertainty of the two observables $\bsA$ and
$\bsB$. Later some Vasudevrao \emph{et al} also followed this line
and conducted specific computations about uncertainty regions
\cite{Vasu2021arxiv}.

In this paper, we use the probability theory and random matrices to
study the uncertainty regions of two, three and multiple qubit
observables. The probabilistic method and random matrix theory has
many useful applications such as in evaluating the average entropy
of a subsystem \cite{R0,R1,R2,R3}, studying the non-additivity of
quantum channel capacity \cite{Hastings2009} and random quantum pure
states
\cite{Christandl2014,Dartois2020,Zhang2018pla,Zhang2019jpa,Venuti2013}.
Motivated by these works, we derive analytical formulas concerning
the expectation and uncertainty (variance) of quantum observables in
random quantum states. We identify the uncertainty regions as the
supports of such probability distribution functions. From these
analytical results on uncertainty regions, we present the optimal
state-independent lower bounds for the sum of variances or the
deviations, which are just the optimization problems over the
uncertainty regions. The paper is organized as follows. In
Section~\ref{sec2}, we present our main results, their proofs are
developed in Section~\ref{sec3}.

\section{Main results}\label{sec2}

A qubit observable can be parameterized as
$\bsA=a_0\I+\bsa\cdot\boldsymbol{\sigma}$, $(a_0,\bsa)\in\real^4$,
where $\I$ is the identity matrix on the qubit Hilbert space
$\complex^2$, and $\boldsymbol{\sigma}=(\sigma_1,\sigma_2,\sigma_3)$
is the vector of the standard Pauli matrices. The two eigenvalues of
$\bsA$ are $\lambda_k(\bsA)=a_0+(-1)^k\abs{\bsa}$ $(k=1,2)$ with
$\abs{\bsa}=(a_1^2+a_2^2+a_3^2)^{1/2} >0$ the length of the vector
$\bsa =(a_1,a_2,a_3)\in \real^3$.

Any qubit state $\rho$ can be purified to a pure state on
$\complex^2\ot \complex^2$. The set of  pure states on
$\complex^2\ot \complex^2$ can be represented as
$\Set{\bsU\ket{\Phi}: \bsU\in \U(\complex^2\ot \complex^2)}$ with
$\ket{\Phi} \in \complex^2\ot \complex^2$ any fixed pure state, and
$\U(\complex^2\ot \complex^2)$ the full unitary group on
$\complex^2\ot \complex^2$, endowed with the standard Haar measure.
Denote $\rD(\complex^2)$ the set of all quantum (pure or mixed)
states on $\complex^2$. The probability measure $\dif\mu(\rho)$ can
be derived from this Haar measure by taking partial trace over
$\complex^2$ of the pure states on $\complex^2\ot\complex^2$.

Let $\bsA_k$ $(k=1,\ldots,n)$ be a set of qubit observables. The
uncertainty region $\cU_{\Delta\bsA_1,\ldots,\Delta\bsA_n}$ of an
$n$-tuple of uncertainties $\Delta\bsA_k$ is defined by
\begin{eqnarray}\label{un}
\Set{(\Delta_\rho\bsA_1,\ldots,\Delta_\rho\bsA_n)\in\real^n_{\geqslant0}:\rho\in\rD(\complex^2)}.
\end{eqnarray}
Here $\real_{\geqslant0}=[0,+\infty)$. From \eqref{un}, tight
state-independent variance and deviation uncertainty relations can
be obtained,
\begin{eqnarray}
\sum^n_{k=1}(\Delta_\rho\bsA_k)^2&\geqslant&
\min_{(x_1,\ldots,x_n)\in\cU_{\Delta\bsA_1,\ldots,\Delta\bsA_n}}\sum^n_{k=1}x^2_k,\\
\sum^n_{k=1}\Delta_\rho\bsA_k&\geqslant&
\min_{(x_1,\ldots,x_n)\in\cU_{\Delta\bsA_1,\ldots,\Delta\bsA_n}}\sum^n_{k=1}x_k.
\end{eqnarray}

We first investigate the uncertainty regions for a pair of
observables $\bsA =a_0\I+\bsa\cdot\boldsymbol{\sigma}$ and $\bsB
=b_0\I+\bsb\cdot\boldsymbol{\sigma}$ with $(a_0,\bsa),~(b_0,\bsb)\in
\real^4$. Denote
\begin{eqnarray*}
\bsT_{\bsa,\bsb} = \Pa{\begin{array}{cc}
                         \Inner{\bsa}{\bsa} & \Inner{\bsa}{\bsb} \\
                         \Inner{\bsb}{\bsa} & \Inner{\bsb}{\bsb}
                       \end{array}}.
\end{eqnarray*}

\begin{lem}\label{lem1}
The uncertainty region
$\cU_{\Delta\bsA,\Delta\bsB}=\Set{(\Delta_\rho \bsA,\Delta_\rho
\bsB)\in\real^2_{\geqslant0}: \rho\in\density{\complex^2}}$ of
$\bsA$ and $\bsB$, is determined by the following inequality:
\begin{eqnarray}
\abs{\bsb}^2x^2+\abs{\bsa}^2y^2+2\abs{\Inner{\bsa}{\bsb}}
\sqrt{(\abs{\bsa}^2-x^2)(\abs{\bsb}^2-y^2)}\geqslant
\abs{\bsa}^2\abs{\bsb}^2+\Inner{\bsa}{\bsb}^2,
\end{eqnarray}
where $x\in[0,\abs{\bsa}]$, $y\in[0,\abs{\bsb}]$ and
$\set{\bsa,\bsb}$ is linearly independent.
\end{lem}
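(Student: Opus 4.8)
The plan is to pass to the Bloch representation of qubit states and reduce the problem to understanding the image of the Bloch ball under a linear map, followed by a short sign optimization. Writing an arbitrary state as $\rho=\tfrac12(\I+\bsr\cdot\boldsymbol{\sigma})$ with $\abs{\bsr}\leqslant1$, I would first record the elementary identity $(\bsa\cdot\boldsymbol{\sigma})^2=\abs{\bsa}^2\I$, from which $\Tr{\bsA\rho}=a_0+\Inner{\bsa}{\bsr}$ and $\Tr{\bsA^2\rho}=a_0^2+\abs{\bsa}^2+2a_0\Inner{\bsa}{\bsr}$ follow by the trace rules for Pauli matrices. Subtracting gives the key reduction
\[
(\Delta_\rho\bsA)^2=\abs{\bsa}^2-\Inner{\bsa}{\bsr}^2,\qquad (\Delta_\rho\bsB)^2=\abs{\bsb}^2-\Inner{\bsb}{\bsr}^2,
\]
so each variance is translation invariant and depends on $\rho$ only through the two linear functionals $u\defeq\Inner{\bsa}{\bsr}$ and $v\defeq\Inner{\bsb}{\bsr}$. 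Setting $x=\Delta_\rho\bsA$ and $y=\Delta_\rho\bsB$, the region is thus the image of the set of admissible $(u,v)$ under the relations $u^2=\abs{\bsa}^2-x^2$ and $v^2=\abs{\bsb}^2-y^2$.

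Next I would determine exactly which pairs $(u,v)$ are admissible as $\bsr$ ranges over the closed unit ball. The map $\bsr\mapsto(\Inner{\bsa}{\bsr},\Inner{\bsb}{\bsr})$ is linear with Gram matrix $\bsT_{\bsa,\bsb}$, and since $\set{\bsa,\bsb}$ is linearly independent this map has rank $2$ and $\bsT_{\bsa,\bsb}$ is invertible. A standard minimum-norm computation — the least-norm preimage of $(u,v)$ has squared length $(u,v)\,\bsT_{\bsa,\bsb}^{-1}(u,v)^{\t}$ — shows that the image of the unit ball is precisely the filled ellipse
\[
\Set{(u,v)\in\real^2:(u,v)\,\bsT_{\bsa,\bsb}^{-1}(u,v)^{\t}\leqslant1}
=\Set{(u,v):\abs{\bsb}^2u^2-2\Inner{\bsa}{\bsb}uv+\abs{\bsa}^2v^2\leqslant\det\bsT_{\bsa,\bsb}}.
\]
Because $\rho$ ranges over all of $\density{\complex^2}$, every point of this ellipse is attained, so admissibility of a target $(x,y)\in[0,\abs{\bsa}]\times[0,\abs{\bsb}]$ amounts to asking whether some sign choice $(u,v)=(\pm\sqrt{\abs{\bsa}^2-x^2},\,\pm\sqrt{\abs{\bsb}^2-y^2})$ lands inside this ellipse.

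Finally I would carry out the sign optimization. Only the cross term $-2\Inner{\bsa}{\bsb}uv$ depends on the relative sign, and it is minimized by taking $uv$ to have the sign of $\Inner{\bsa}{\bsb}$, which replaces $\Inner{\bsa}{\bsb}uv$ by $\abs{\Inner{\bsa}{\bsb}}\sqrt{(\abs{\bsa}^2-x^2)(\abs{\bsb}^2-y^2)}$. Substituting $u^2=\abs{\bsa}^2-x^2$ and $v^2=\abs{\bsb}^2-y^2$ into the ellipse inequality and rearranging, using $\det\bsT_{\bsa,\bsb}=\abs{\bsa}^2\abs{\bsb}^2-\Inner{\bsa}{\bsb}^2$, then yields exactly the stated inequality; the achievable region is where it holds. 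The first and last steps are routine, so the substantive point — and the step most prone to error — is the geometric claim that the admissible $(u,v)$ fill the entire ellipse (which is where linear independence of $\set{\bsa,\bsb}$ enters, guaranteeing invertibility of $\bsT_{\bsa,\bsb}$), together with the observation that the sign freedom reduces the four candidate points to the single "same-sign" extremal choice, thereby converting ellipse membership into the one-sided inequality of the lemma.
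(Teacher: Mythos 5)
Your proof is correct, but it takes a genuinely different route from the paper's. You work directly in the Bloch representation: the identity $(\Delta_\rho\bsA)^2=\abs{\bsa}^2-\Inner{\bsa}{\bsr}^2$ reduces everything to the image of the Bloch ball under the linear map $\bsr\mapsto(\Inner{\bsa}{\bsr},\Inner{\bsb}{\bsr})$, which your minimum-norm argument correctly identifies as the filled ellipse $(u,v)\bsT_{\bsa,\bsb}^{-1}(u,v)^\t\leqslant1$; the sign optimization then gives the one-sided inequality. The paper instead runs a random-matrix/probabilistic argument (Propositions~\ref{prop1}--\ref{prop8}): it computes the characteristic function of $\Tr{\bsA\rho}$ via the Harish-Chandra--Itzykson-Z\"{u}ber integral, inverts it with Bessel-function identities to obtain the explicit joint density $f_{\langle\bsA\rangle,\langle\bsB\rangle}(r,s)$, pushes this forward to $f_{\Delta\bsA,\Delta\bsB}(x,y)$ via delta-function calculus, and reads the uncertainty region off as the support of that density. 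The two arguments converge at the same place: the paper's support condition on the four points $(r_\pm(x),s_\pm(y))$ with $r_\pm(x)=a_0\pm\sqrt{\abs{\bsa}^2-x^2}$ is exactly your condition that some sign choice of $(u,v)$ lies in the ellipse, and the sets $D^{\pm}_{\bsa,\bsb}$ in the paper are your two relative-sign cases. What your approach buys is economy and directness: the lemma as stated asks for the set of achievable pairs, which you characterize without any measure-theoretic detour, and the role of linear independence (invertibility of $\bsT_{\bsa,\bsb}$) is transparent. What the paper's approach buys is the density itself, which is needed elsewhere (the plotted PDFs, the volume computation, and the uniform framework that scales to three and $n$ observables in Lemmas~\ref{lem2} and~\ref{lem3}).
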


From Lemma~\ref{lem1}, the boundary curve of the uncertainty region
$\cU_{\Delta\bsA,\Delta\bsB}$ is easily obtained. Clearly, the
boundary curve equations are independent of $(a_0,b_0)$. Denote
$\theta$ the angle between $\bsa$ and $\bsb$. We see that, for
$\abs{\bsa}=\abs{\bsb}=1$,
\begin{eqnarray}
x^2+y^2+2\abs{\cos\theta}\sqrt{(1-x^2)(1-y^2)}\geqslant
1+\cos^2\theta.
\end{eqnarray}
In particular, for
$\theta\in\Set{\frac\pi{16},\frac\pi8,\frac\pi4,\frac{3\pi}8,\frac{7\pi}{16},\frac\pi2}$,
our result covers that in \cite{Busch2019,Li2015} perfectly.
Obviously, $\cU_{\Delta\bsA,\Delta\bsB}\equiv \cU(\theta)$ is
determined only by the angle $\theta$ between $\bsa$ and $\bsb$ if
$\abs{\bsa}=\abs{\bsb}=1$, and $\cU(\theta)=\cU(\pi-\theta)$ for
$\theta\in[0,\pi]$. Moreover, we can also get the volume (the area
for the 2-dimensional domain) of the uncertainty region
$\cU(\theta)(\theta\in[0,\pi/2])$,
\begin{eqnarray}\label{vol2}
\vol(\cU(\theta))=\frac12(\pi-3\theta) \sin\theta-\cos\theta+1.
\end{eqnarray}
Hence in general,
$\vol(\cU_{\Delta\bsA,\Delta\bsB})=\abs{\bsa}\abs{\bsb}\vol(\cU(\theta))$.
From \eqref{vol2}, we have that the maximal uncertainty region is
attained at $\theta_0\simeq0.741758<\frac\pi4$,
$$
\max_{\theta\in[0,\pi/2]}\vol[\cU(\theta)]=\vol[\cU(\theta_0)]\simeq0.572244.
$$

From the uncertainty region given in Lemma~\ref{lem1}, we can derive
the optimal \emph{state-independent} lower bound for the sum of
variances $(\Delta_\rho\bsA)^2+(\Delta_\rho\bsB)^2$ and the sum of
standard deviation $\Delta_\rho\bsA+\Delta_\rho\bsB$.

\begin{thrm}\label{th1}
The sum of the variances and the standard
deviations satisfy the following tight inequalities with
state-independent lower bounds,
\begin{eqnarray}\label{th21}
(\Delta_\rho\bsA)^2+(\Delta_\rho\bsB)^2
\geqslant\min\Set{x^2+y^2:(x,y)\in\cU_{\Delta\bsA,\Delta\bsB}}=\lambda_{\min}(\bsT_{\bsa,\bsb})
\end{eqnarray}
and
\begin{eqnarray}\label{th211}
\Delta_\rho\bsA+\Delta_\rho\bsB\geqslant\min\Set{x+y:(x,y)\in\cU_{\Delta\bsA,\Delta\bsB}}=\frac{\abs{\bsa\times\bsb}}{\max(\abs{\bsa},\abs{\bsb})},
\end{eqnarray}
where $\lambda_{\min}(\bsT_{\bsa,\bsb})$ stands for the minimal
eigenvalue of $\bsT_{\bsa,\bsb}$.
\end{thrm}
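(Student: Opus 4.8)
The plan is to avoid the implicit boundary inequality of Lemma~\ref{lem1} and instead optimize the two objectives directly over the Bloch ball. Writing any qubit state as $\rho=\tfrac12(\I+\bsr\cdot\boldsymbol{\sigma})$ with $\bsr\in\real^3$ and $\abs{\bsr}\leqslant1$, and using $(\bsa\cdot\boldsymbol{\sigma})^2=\abs{\bsa}^2\I$, one obtains the variance formula $(\Delta_\rho\bsA)^2=\abs{\bsa}^2-(\bsa\cdot\bsr)^2$, and likewise for $\bsB$. Since the uncertainty region $\cU_{\Delta\bsA,\Delta\bsB}$ is by definition the image of the unit ball under $\bsr\mapsto(\Delta_\rho\bsA,\Delta_\rho\bsB)$, minimizing either objective over $\cU_{\Delta\bsA,\Delta\bsB}$ is the same as minimizing it over $\bsr$ in the ball, turning each bound into an ordinary constrained optimization in $\real^3$.

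For the variance sum \eqref{th21}, I would write $(\Delta_\rho\bsA)^2+(\Delta_\rho\bsB)^2=\abs{\bsa}^2+\abs{\bsb}^2-\bsr^\t M\bsr$, where $M=\bsa\bsa^\t+\bsb\bsb^\t$ is positive semidefinite. Minimizing the left side is thus equivalent to maximizing the quadratic form $\bsr^\t M\bsr$ over the unit ball, whose maximum is $\lambda_{\max}(M)$, attained on the unit sphere. Writing $M=VV^\t$ with $V=[\,\bsa\ \ \bsb\,]$ a $3\times2$ matrix, the nonzero spectra of $VV^\t$ and $V^\t V=\bsT_{\bsa,\bsb}$ coincide, so $\lambda_{\max}(M)=\lambda_{\max}(\bsT_{\bsa,\bsb})$. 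Hence the minimum equals $\abs{\bsa}^2+\abs{\bsb}^2-\lambda_{\max}(\bsT_{\bsa,\bsb})=\trace(\bsT_{\bsa,\bsb})-\lambda_{\max}(\bsT_{\bsa,\bsb})=\lambda_{\min}(\bsT_{\bsa,\bsb})$, which is exactly \eqref{th21}.

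For the deviation sum \eqref{th211} the objective $\Delta_\rho\bsA+\Delta_\rho\bsB=\sqrt{\abs{\bsa}^2-(\bsa\cdot\bsr)^2}+\sqrt{\abs{\bsb}^2-(\bsb\cdot\bsr)^2}$ carries square roots, so I would first shrink the search space. Each summand decreases as the length of $\bsr$ grows along a fixed direction, so the minimum is attained with $\abs{\bsr}=1$; moreover the objective depends on $\bsr$ only through its projection onto $\spn\set{\bsa,\bsb}$, so the optimal $\bsr$ may be taken in that plane. Parametrizing $\bsr$ as a unit vector making angle $\phi$ with $\bsa$, and letting $\theta$ be the angle between $\bsa$ and $\bsb$, the objective collapses to the one-variable function $f(\phi)=\abs{\bsa}\,\abs{\sin\phi}+\abs{\bsb}\,\abs{\sin(\phi-\theta)}$.

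The main obstacle is showing that this non-smooth function attains its global minimum at the endpoints $\phi\in\set{0,\theta}$ (the eigenstate alignments) rather than at an interior stationary point. The key observation is that on each interval where the absolute values resolve with fixed signs—for instance $[0,\theta]$, where $f(\phi)=\abs{\bsa}\sin\phi+\abs{\bsb}\sin(\theta-\phi)$—one has $f''(\phi)=-f(\phi)\leqslant0$, so $f$ is concave and its minimum over the interval sits at an endpoint. Comparing the endpoint values $f(0)=\abs{\bsb}\sin\theta$ and $f(\theta)=\abs{\bsa}\sin\theta$ over all such intervals yields the global minimum $\min(\abs{\bsa},\abs{\bsb})\sin\theta=\abs{\bsa}\abs{\bsb}\sin\theta/\max(\abs{\bsa},\abs{\bsb})=\abs{\bsa\times\bsb}/\max(\abs{\bsa},\abs{\bsb})$, establishing \eqref{th211}; taking $\bsr=\bsa/\abs{\bsa}$ when $\abs{\bsa}\geqslant\abs{\bsb}$ (an eigenstate of $\bsA$, giving $\Delta_\rho\bsA=0$ and $\Delta_\rho\bsB=\abs{\bsb}\sin\theta$) exhibits a minimizer and confirms tightness. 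The degenerate case of parallel $\bsa,\bsb$ makes both sides vanish and is immediate.
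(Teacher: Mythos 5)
Your proof is correct, but it takes a genuinely different route from the paper's. The paper proves Theorem~\ref{th1} as a constrained optimization over the region described in Lemma~\ref{lem1}: it substitutes $X=\sqrt{\abs{\bsa}^2-x^2}$, $Y=\sqrt{\abs{\bsb}^2-y^2}$ to convert the minimization into a maximization of $X^2+Y^2$ over the ellipse $(X,\pm Y)\bsT^{-1}_{\bsa,\bsb}(X,\pm Y)^\t=1$, diagonalizes $\bsT_{\bsa,\bsb}$ to read off $\lambda_{\max}$, and for the deviation sum works with the explicit boundary curve $y=\frac{\abs{\bsb}}{\abs{\bsa}}(-\cos\theta\, x+\sin\theta\sqrt{\abs{\bsa}^2-x^2})$, using its concavity to locate the minimizing line $x+y=R$ at the axis intercepts. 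You instead bypass Lemma~\ref{lem1} entirely and optimize over the Bloch ball: the identity $(\Delta_\rho\bsA)^2=\abs{\bsa}^2-(\bsa\cdot\bsr)^2$ reduces \eqref{th21} to maximizing $\bsr^\t(\bsa\bsa^\t+\bsb\bsb^\t)\bsr$ on the unit ball, and the $VV^\t$ versus $V^\t V$ spectrum argument cleanly identifies the answer as $\lambda_{\max}(\bsT_{\bsa,\bsb})$; your reduction of \eqref{th211} to the concave one-variable function $\abs{\bsa}\abs{\sin\phi}+\abs{\bsb}\abs{\sin(\phi-\theta)}$ on sign-definite intervals is the same concavity phenomenon the paper exploits, just seen in the angle variable rather than on the boundary curve. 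Your approach is more elementary and self-contained (no Haar-measure or density-function machinery is needed for this theorem), and it exhibits explicit minimizing states, confirming tightness; what it does not give you is the full uncertainty region and the probability density on it, which is the paper's primary object and the reason the paper routes the proof through Lemma~\ref{lem1}.
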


In particular, if $\abs{\bsa}=\abs{\bsb}=1$, \eqref{th21} and
\eqref{th211} are reduced to
\begin{eqnarray*}
(\Delta_\rho\bsA)^2+(\Delta_\rho\bsB)^2 &\geqslant&
1-\abs{\inner{\bsa}{\bsb}},\\
\Delta_\rho\bsA+\Delta_\rho\bsB &\geqslant&\abs{\bsa\times\bsb}.
\end{eqnarray*}
Note that
$[\bsA,\bsB]=\bsA\bsB-\bsB\bsA=2\mathrm{i}(\bsa\times\bsb)\cdot\boldsymbol{\sigma}$,
$\abs{\Inner{\bsa}{\bsb}}=\abs{\cos\theta}$ and
$\abs{\bsa\times\bsb}=\abs{\sin\theta}$. One recovers the
uncertainty relations \cite{Busch2019},
\begin{eqnarray*}
(\Delta_\rho\bsA)^2+(\Delta_\rho\bsB)^2 &\geqslant& 1 -
\sqrt{1-\frac14\norm{[\bsA,\bsB]}^2},\\\nonumber
\Delta_\rho\bsA+\Delta_\rho\bsB&\geqslant&\frac12\norm{[\bsA,\bsB]}.
\end{eqnarray*}
As an illustration, in Fig. \ref{theta} we plot for any fixed
angle $\theta\in(0,\tfrac\pi2)$ the uncertainty region
$\cU_{\Delta\bsA,\Delta\bsB}$ given in Lemma~\ref{lem1} and those points where
the minimizations in Theorem~\ref{th1} are achieved, where the coordinate of the red point is
$(\sqrt{(1-\cos\theta)/2},\sqrt{(1-\cos\theta)/2})$, and the
coordinates of the two purple points are $(\sin\theta,0)$ and
$(0,\sin\theta)$, respectively.
\begin{figure}[ht]\centering
{\begin{minipage}[b]{0.5\linewidth}
\includegraphics[width=1\textwidth]{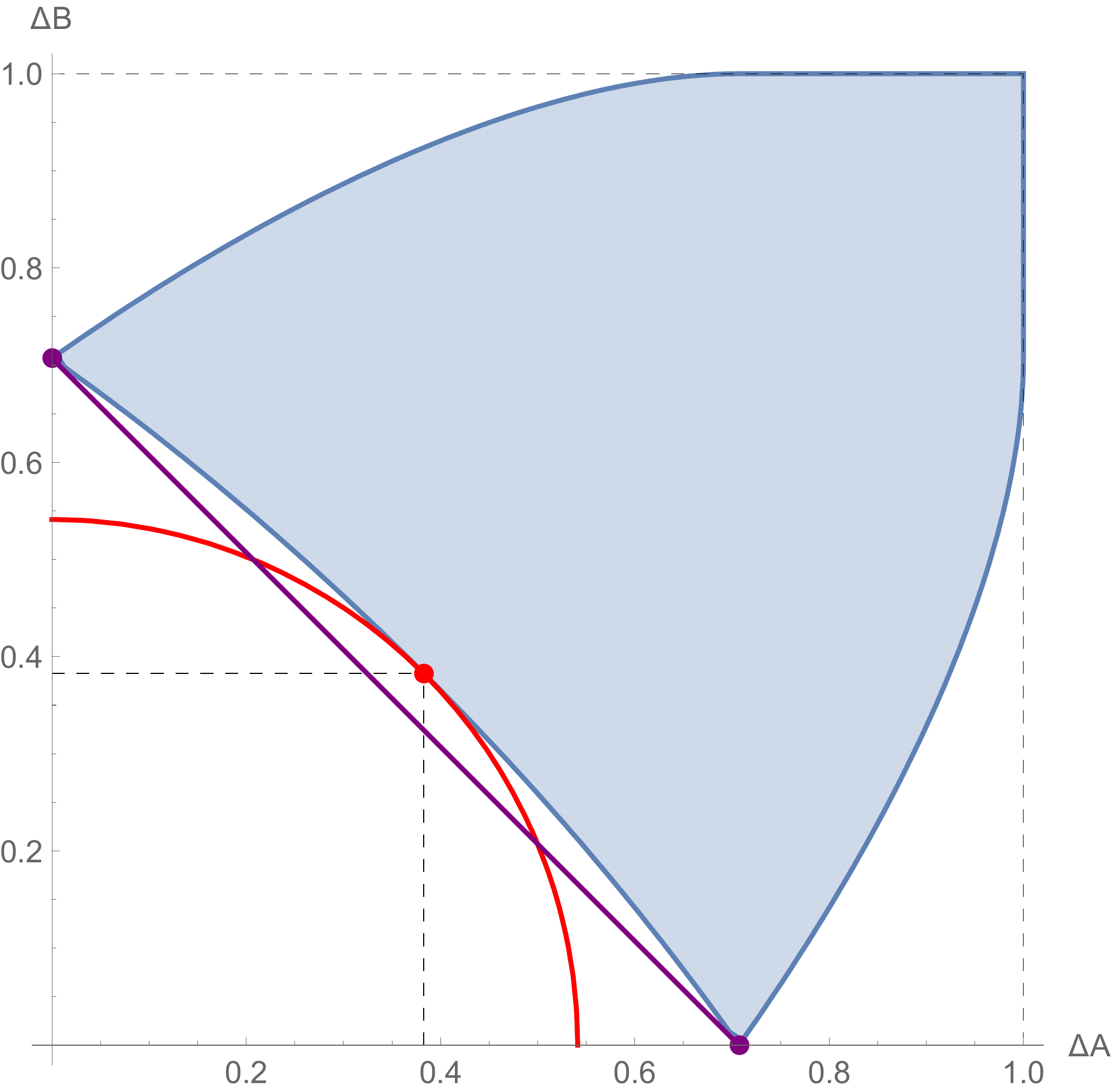}
\end{minipage}}
\caption{The uncertainty region $\cU_{\Delta\bsA,\Delta\bsB}$ in
Lemma~\ref{lem1} and those points where the minimizations in
Theorem~\ref{th1} are achieved.}\label{theta}
\end{figure}

We now turn to the uncertainty region for a triple
$(\bsA,\bsB,\bsC)$ of qubit observables
$\bsA=a_0\I+\bsa\cdot\boldsymbol{\sigma}$,
$\bsB=b_0\I+\bsb\cdot\boldsymbol{\sigma}$ and
$\bsC=c_0\I+\bsc\cdot\boldsymbol{\sigma}$ with $(a_0,\bsa)$,
$(b_0,\bsb)$, $(c_0,\bsc)\in\real^4$, and $\set{\bsa,\bsb,\bsc}$
being linearly independent. Denote
$\bsu_{\epsilon_b,\epsilon_c}(x,y,z)=((\abs{\bsa}^2-x^2)^{1/2},\epsilon_b(\abs{\bsb}^2-y^2)^{1/2},
\epsilon_c(\abs{\bsc}^2-z^2)^{1/2})$ and
\begin{eqnarray*}
\bsT_{\bsa,\bsb,\bsc}=\Pa{\begin{array}{ccc}
                               \Inner{\bsa}{\bsa} & \Inner{\bsa}{\bsb} & \Inner{\bsa}{\bsc} \\
                               \Inner{\bsb}{\bsa} & \Inner{\bsb}{\bsb} & \Inner{\bsb}{\bsc} \\
                               \Inner{\bsc}{\bsa} & \Inner{\bsc}{\bsb} & \Inner{\bsc}{\bsc}
                             \end{array}}.
\end{eqnarray*}
Let $\gamma$, $\beta$ and $\alpha$ be the angles between $\bsa$ and
$\bsb$, $\bsa$ and $\bsc$, $\bsb$ and $\bsc$, respectively, where
$\alpha,\beta,\gamma\in(0,\pi)$. Set $\phi(t_1,t_2,t_3)=\cos
(t_1)-\cos (t_2)\cos (t_3)$. We have the following result:

\begin{lem}\label{lem2}
The uncertainty region
$\cU_{\Delta\bsA,\Delta\bsB,\Delta\bsC}=\Set{(\Delta_\rho\bsA,\Delta_\rho\bsB,\Delta_\rho\bsC)\in\real^3_{\geqslant0}:
\rho\in\rD(\complex^2)}$ is determined by the following inequality:
\begin{eqnarray}
\bsu_{\epsilon_b,\epsilon_c}(x,y,z)\bsT^{-1}_{\bsa,\bsb,\bsc}\bsu^\t_{\epsilon_b,\epsilon_c}(x,y,z)\leqslant1,
\end{eqnarray}
where $\epsilon_b,\epsilon_c\in\set{\pm1}$ are independent of each
other, which is given by the union of solutions of the following
inequalities if $\abs{\bsa}=\abs{\bsb}=\abs{\bsc}=1$:
\begin{eqnarray}\label{eq:3DUR}
&&\abs{\phi(\gamma,\alpha,\beta)\sqrt{1-x^2}
+\epsilon\phi(\alpha,\beta,\gamma)\sqrt{1-z^2}}\sqrt{1-y^2}+
\epsilon\phi(\beta,\gamma,\alpha)\sqrt{(1-z^2)(1-x^2)}\notag\\
&&+\frac12\Br{\sin^2(\alpha)x^2 + \sin^2(\beta)y^2 +
\sin^2(\gamma)z^2}\geqslant1-\cos(\alpha)\cos(\beta)\cos(\gamma)
\end{eqnarray}
under the conditions $\alpha<\beta+\gamma$, $\beta<\gamma+\alpha$,
$\gamma<\alpha+\beta$ and $\alpha+\beta+\gamma<2\pi$, where
$\epsilon\equiv\epsilon_c\in\set{\pm1}$.
\end{lem}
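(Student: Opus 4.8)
The plan is to parameterize qubit states by their Bloch vectors and reduce membership in the uncertainty region to a single quadratic constraint governed by the Gram matrix $\bsT_{\bsa,\bsb,\bsc}$. Writing an arbitrary state as $\rho=\tfrac12(\I+\bsr\cdot\boldsymbol{\sigma})$ with $\bsr\in\real^3$, $\abs{\bsr}\leqslant1$, a direct computation using $(\bsa\cdot\boldsymbol{\sigma})^2=\abs{\bsa}^2\I$ gives $(\Delta_\rho\bsA)^2=\abs{\bsa}^2-\Inner{\bsa}{\bsr}^2$, and analogously for $\bsB,\bsC$; in particular these are independent of $a_0,b_0,c_0$. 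Setting $x=\Delta_\rho\bsA$, $y=\Delta_\rho\bsB$, $z=\Delta_\rho\bsC$, this means $\Inner{\bsa}{\bsr}=\pm(\abs{\bsa}^2-x^2)^{1/2}$ and likewise for $\bsb,\bsc$, so that the sign-dependent vector $\bsu_{\epsilon_b,\epsilon_c}$ records precisely the triple $(\Inner{\bsa}{\bsr},\Inner{\bsb}{\bsr},\Inner{\bsc}{\bsr})$. Since flipping $\bsr\mapsto-\bsr$ preserves $\abs{\bsr}$ and reverses all three signs simultaneously, one may fix the first sign to $+1$ and leave $\epsilon_b,\epsilon_c$ free.

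First I would encode these three scalar equations as $\bsM^\t\bsr=\bsu^\t_{\epsilon_b,\epsilon_c}$, where $\bsM$ is the $3\times3$ matrix with columns $\bsa,\bsb,\bsc$. Linear independence of $\set{\bsa,\bsb,\bsc}$ makes $\bsM$ invertible, so $\bsr$ is uniquely determined by $\bsu$ and the only remaining requirement is $\abs{\bsr}^2\leqslant1$. Because $\bsM^\t\bsM=\bsT_{\bsa,\bsb,\bsc}$, this yields $\abs{\bsr}^2=\bsu_{\epsilon_b,\epsilon_c}(\bsM^\t\bsM)^{-1}\bsu^\t_{\epsilon_b,\epsilon_c}=\bsu_{\epsilon_b,\epsilon_c}\bsT^{-1}_{\bsa,\bsb,\bsc}\bsu^\t_{\epsilon_b,\epsilon_c}$. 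A point $(x,y,z)$ therefore lies in the region exactly when this quantity is $\leqslant1$ for some admissible sign choice, which is the first displayed inequality.

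For the unit-length specialization I would clear the denominator using positive definiteness of $\bsT_{\bsa,\bsb,\bsc}$ (equivalently $\det\bsT_{\bsa,\bsb,\bsc}>0$) and rewrite $\bsu\bsT^{-1}\bsu^\t\leqslant1$ as $\bsu\,\mathrm{adj}(\bsT_{\bsa,\bsb,\bsc})\,\bsu^\t\leqslant\det\bsT_{\bsa,\bsb,\bsc}$. With $\abs{\bsa}=\abs{\bsb}=\abs{\bsc}=1$ the off-diagonal entries of $\bsT$ are $\cos\gamma,\cos\beta,\cos\alpha$, so the diagonal entries of $\mathrm{adj}(\bsT_{\bsa,\bsb,\bsc})$ are $\sin^2\alpha,\sin^2\beta,\sin^2\gamma$ and its off-diagonal entries are $-\phi(\gamma,\alpha,\beta)$, $-\phi(\beta,\gamma,\alpha)$, $-\phi(\alpha,\beta,\gamma)$, while $\det\bsT_{\bsa,\bsb,\bsc}=1-\cos^2\alpha-\cos^2\beta-\cos^2\gamma+2\cos\alpha\cos\beta\cos\gamma$. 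Substituting $\bsu=((1-x^2)^{1/2},\epsilon_b(1-y^2)^{1/2},\epsilon_c(1-z^2)^{1/2})$, collecting the constant part via the identity $\sin^2\alpha+\sin^2\beta+\sin^2\gamma-\det\bsT_{\bsa,\bsb,\bsc}=2(1-\cos\alpha\cos\beta\cos\gamma)$, and dividing by $2$, turns the quadratic form into \eqref{eq:3DUR} apart from the handling of $\epsilon_b$.

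The decisive step, and the one I expect to be the main obstacle, is the elimination of $\epsilon_b$. Both $\epsilon_b$-dependent contributions carry the common factor $\epsilon_b\sqrt{1-y^2}$, so they combine into $\epsilon_b\sqrt{1-y^2}\,\bigl(\phi(\gamma,\alpha,\beta)\sqrt{1-x^2}+\epsilon_c\phi(\alpha,\beta,\gamma)\sqrt{1-z^2}\bigr)$. Since the region is the union over $\epsilon_b\in\set{\pm1}$, a point qualifies as soon as the larger of the two values clears the bound; choosing $\epsilon_b$ to match the sign of the bracket replaces this term by its absolute value, giving exactly the first term of \eqref{eq:3DUR} with $\epsilon=\epsilon_c$. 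Finally I would note that the stated angle constraints $\alpha<\beta+\gamma$, $\beta<\gamma+\alpha$, $\gamma<\alpha+\beta$ and $\alpha+\beta+\gamma<2\pi$ are precisely the conditions under which three unit vectors with the prescribed pairwise angles exist and $\bsT_{\bsa,\bsb,\bsc}\succ0$, which both legitimizes clearing $\det\bsT_{\bsa,\bsb,\bsc}$ without reversing the inequality and guarantees that the region is nondegenerate.
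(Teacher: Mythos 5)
Your proposal is correct, but it takes a genuinely different and considerably more elementary route than the paper's. The paper derives the joint probability density $f_{\Delta\bsA,\Delta\bsB,\Delta\bsC}$ of the three uncertainties under the Haar-induced measure (Propositions~\ref{prop9} and \ref{prop10}, which rest on the Harish-Chandra--Itzykson--Z\"{u}ber integral, Fourier/Bessel identities and delta-function calculus) and then reads off the uncertainty region as the support of that density. You instead observe that the region is by definition the image of the Bloch ball under $\rho\mapsto(\Delta_\rho\bsA,\Delta_\rho\bsB,\Delta_\rho\bsC)$, and that the identity $(\Delta_\rho\bsA)^2=\abs{\bsa}^2-\Inner{\bsa}{\bsr}^2$ converts membership into solvability of $\bsM^\t\bsr=\bsu^\t_{\epsilon_b,\epsilon_c}$ with $\abs{\bsr}\leqslant1$, which via $\bsM^\t\bsM=\bsT_{\bsa,\bsb,\bsc}$ is exactly $\bsu_{\epsilon_b,\epsilon_c}\bsT^{-1}_{\bsa,\bsb,\bsc}\bsu^\t_{\epsilon_b,\epsilon_c}\leqslant1$; the $\bsr\mapsto-\bsr$ symmetry correctly justifies fixing the first sign to $+1$ while leaving $\epsilon_b,\epsilon_c$ free. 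Your subsequent steps --- the adjugate computation, the identity $\sin^2\alpha+\sin^2\beta+\sin^2\gamma-\det\bsT_{\bsa,\bsb,\bsc}=2(1-\cos\alpha\cos\beta\cos\gamma)$, the elimination of $\epsilon_b$ by choosing its sign to match the bracket, and the triangle-type angle conditions equivalent to $\det\bsT_{\bsa,\bsb,\bsc}>0$ --- coincide with the paper's reduction from that point on. What your argument buys is brevity and transparency: no random-matrix or measure-theoretic input is needed to identify the region. What the paper's heavier machinery buys is the explicit density functions on the region, which are wanted for the broader program of the paper but which your argument does not produce.
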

In particular, e.g., for $\alpha=\beta=\gamma=\frac\pi2$, the
uncertainty region is just
$\set{(x,y,z)\in\real^3_{\geqslant0}:x^2+y^2+z^2\geqslant2}\cap[0,1]^3$.
Clearly, our equations include the results in \cite{Busch2019} as
special cases, see Fig.~\ref{fig:3DUR}.
\begin{figure}[ht]
\subfigure[The uncertainty region for $\cU(\pi/2,\pi/2,\pi/2)$]
{\begin{minipage}[b]{.45\linewidth}
\includegraphics[width=1\textwidth]{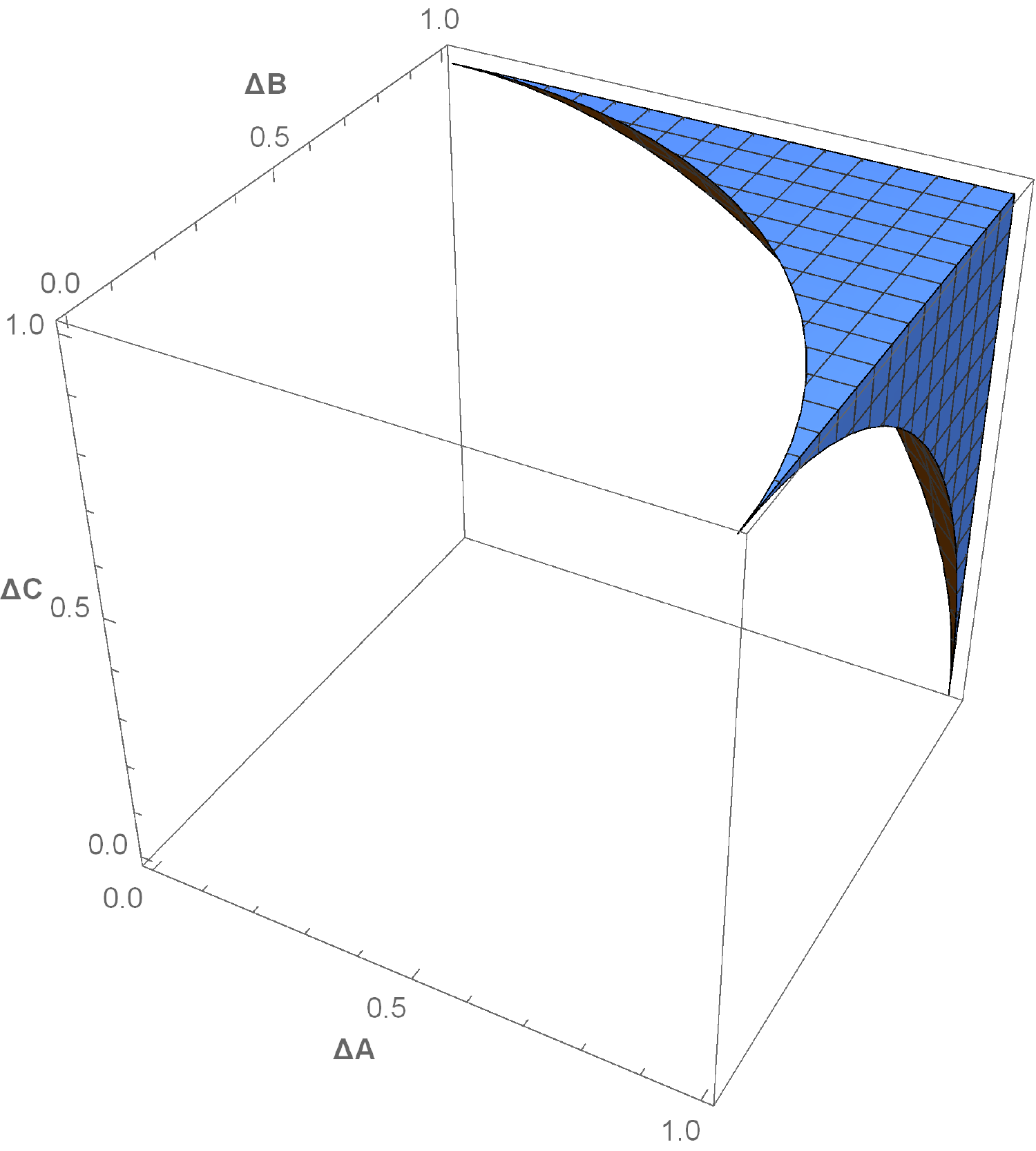}
\end{minipage}}
\subfigure[The uncertainty region for $\cU(\pi/4,\pi/4,\pi/4)$]
{\begin{minipage}[b]{.45\linewidth}
\includegraphics[width=1\textwidth]{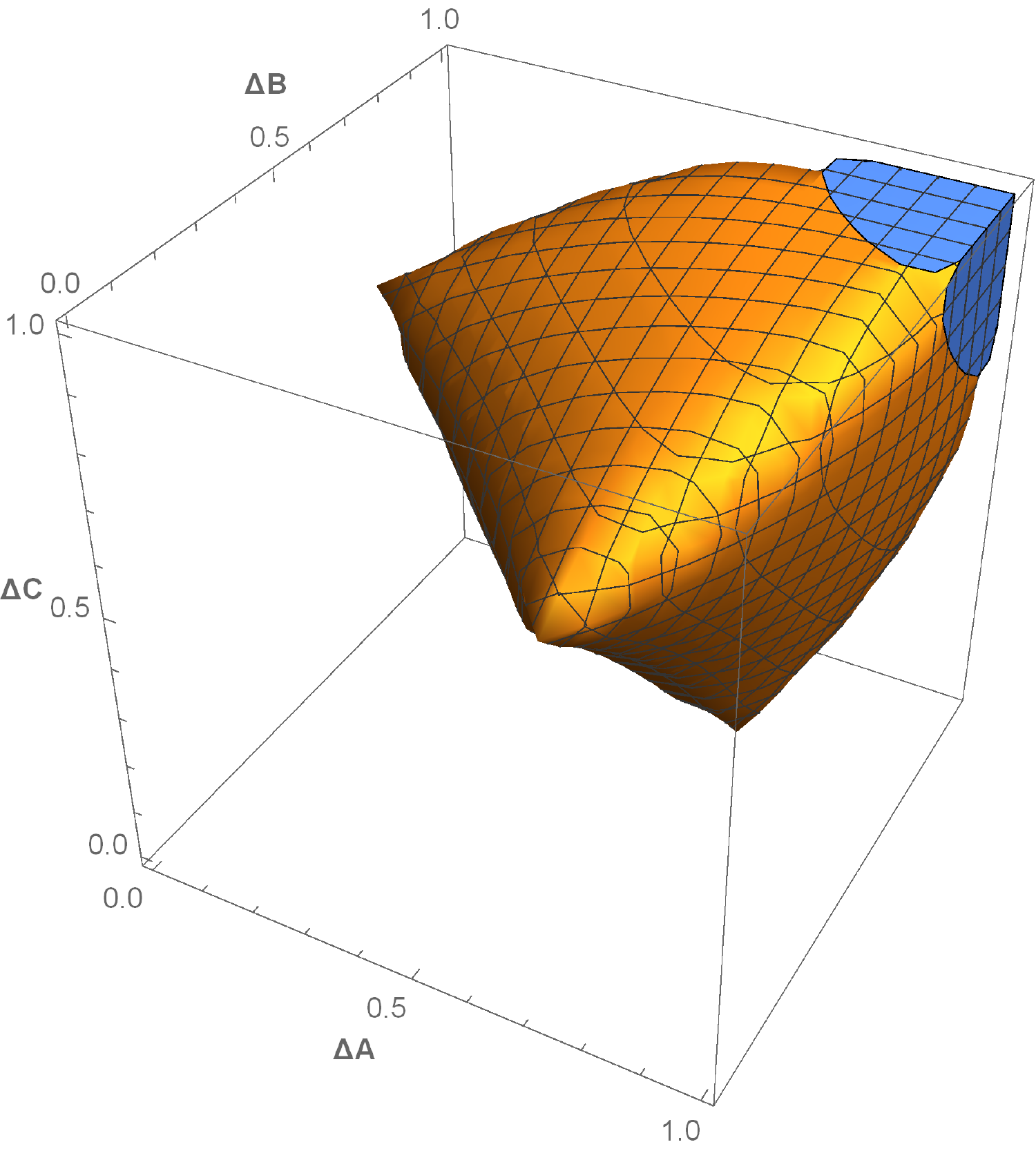}
\end{minipage}}
\subfigure[The uncertainty region for $\cU(5\pi/8,5\pi/8,5\pi/8)$]
{\begin{minipage}[b]{.45\linewidth}
\includegraphics[width=1\textwidth]{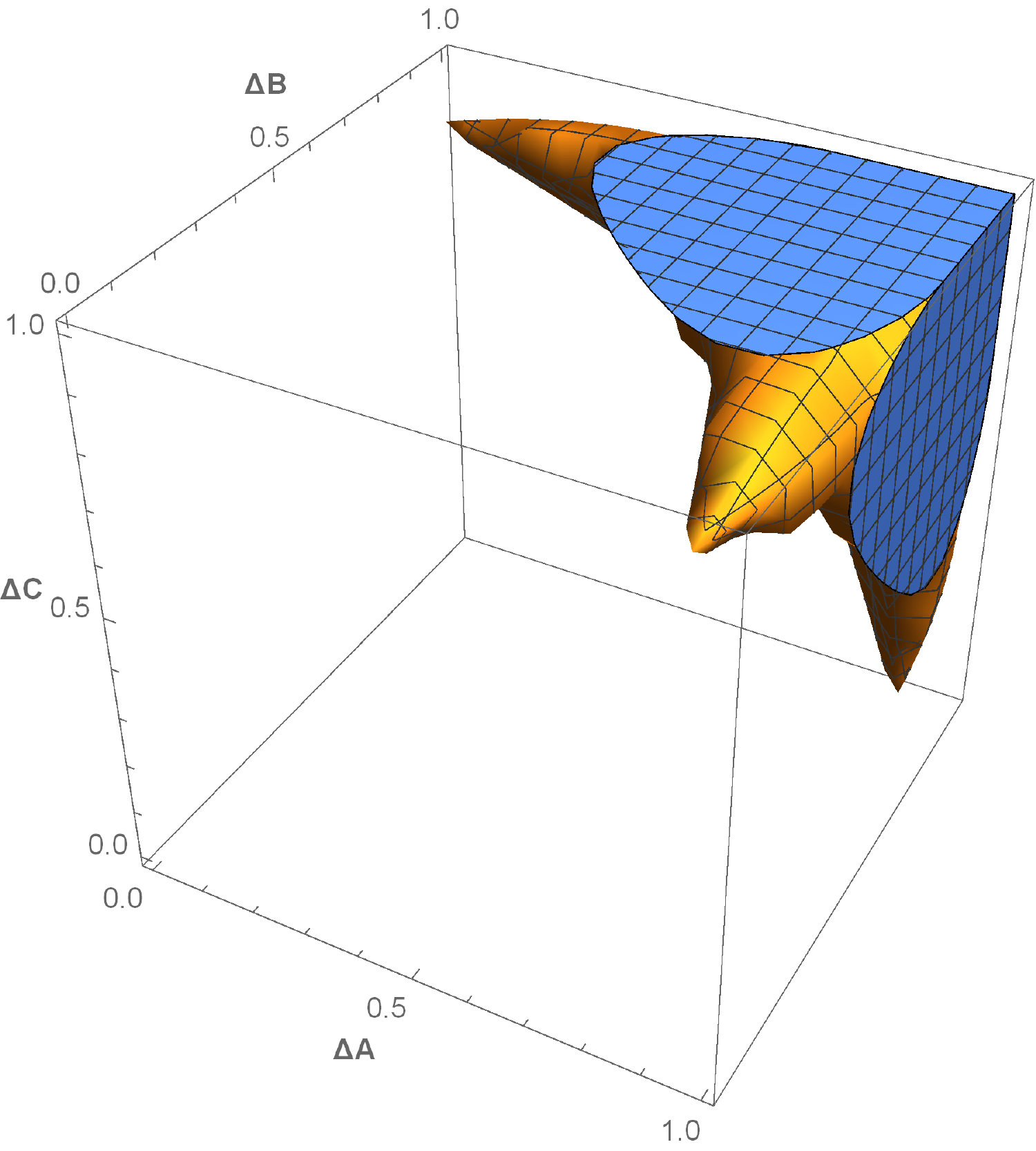}
\end{minipage}}
\subfigure[The uncertainty region for $\cU(\pi/2,\pi/3,\pi/4)$]
{\begin{minipage}[b]{.45\linewidth}
\includegraphics[width=1\textwidth]{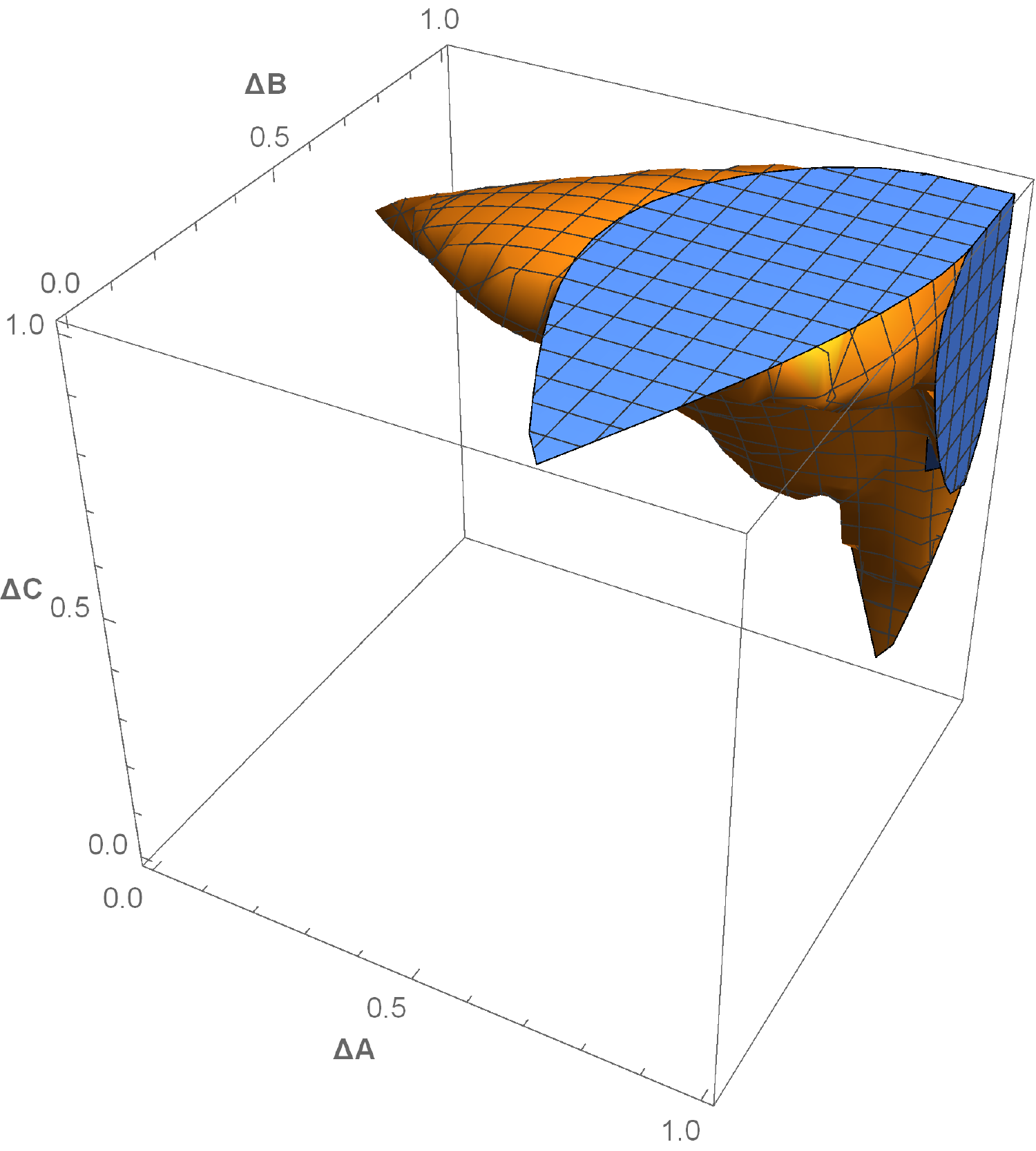}
\end{minipage}}
\caption{The uncertainty regions
$\cU_{\Delta\bsA,\Delta\bsB,\Delta\bsC}\equiv\cU(\alpha,\beta,\gamma)$
for a triple of qubit observables
$\bsA=a_0\I+\bsa\cdot\boldsymbol{\sigma}$,
$\bsB=b_0\I+\bsb\cdot\boldsymbol{\sigma}$ and
$\bsC=c_0\I+\bsc\cdot\boldsymbol{\sigma}$, where
$\abs{\bsa}=\abs{\bsb}=\abs{\bsc}=1$,
$\Inner{\bsa}{\bsb}=\cos\gamma$, $\Inner{\bsa}{\bsc}=\cos\beta$ and
$\Inner{\bsb}{\bsc}=\cos\alpha$.} \label{fig:3DUR}
\end{figure}

Analogously, the state-independent lower bound for
$(\Delta_\rho\bsA)^2+(\Delta_\rho\bsB)^2+(\Delta_\rho\bsC)^2$ can be
obtained from \eqref{eq:3DUR}.

\begin{thrm}\label{th2}
The variances of the observables $\bsA$,
$\bsB$ and $\bsC$ satisfy
\begin{eqnarray}\label{eq:zhanglin2021}
(\Delta_\rho\bsA)^2+(\Delta_\rho\bsB)^2+(\Delta_\rho\bsC)^2&\geqslant&
\min\set{x^2+y^2+z^2:(x,y,z)\in\cU_{\Delta\bsA,\Delta\bsB,\Delta\bsC}}\notag\\
&=&\Tr{\bsT_{\bsa,\bsb,\bsc}}-\lambda_{\max}(\bsT_{\bsa,\bsb,\bsc}),
\end{eqnarray}
where $\lambda_{\max}(\bsT_{\bsa,\bsb,\bsc})$ stands for the maximal
eigenvalue of $\bsT_{\bsa,\bsb,\bsc}$.
\end{thrm}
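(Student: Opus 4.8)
The plan is to bypass the somewhat unwieldy description of $\cU_{\Delta\bsA,\Delta\bsB,\Delta\bsC}$ in Lemma~\ref{lem2} and instead optimize directly over the underlying states, exploiting the closed form of the variance of a qubit observable. Writing an arbitrary $\rho\in\density{\complex^2}$ as $\rho=\tfrac12(\I+\br\cdot\boldsymbol{\sigma})$ with $\abs{\br}\leqslant1$, a direct computation using $(\bsa\cdot\boldsymbol{\sigma})^2=\abs{\bsa}^2\I$ gives $(\Delta_\rho\bsA)^2=\abs{\bsa}^2-(\bsa\cdot\br)^2$, and likewise for $\bsB$ and $\bsC$. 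Since the uncertainty region is by definition the image of the Bloch ball $\set{\br:\abs{\br}\leqslant1}$ under the map $\rho\mapsto(\Delta_\rho\bsA,\Delta_\rho\bsB,\Delta_\rho\bsC)$, minimizing $x^2+y^2+z^2$ over $\cU_{\Delta\bsA,\Delta\bsB,\Delta\bsC}$ is the same as minimizing $(\Delta_\rho\bsA)^2+(\Delta_\rho\bsB)^2+(\Delta_\rho\bsC)^2$ over $\abs{\br}\leqslant1$.

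Next I would package the three Bloch vectors as the rows of a matrix $\bsM$, so that $\bsT_{\bsa,\bsb,\bsc}=\bsM\bsM^\t$ is exactly their Gram matrix. With this notation $(\bsa\cdot\br)^2+(\bsb\cdot\br)^2+(\bsc\cdot\br)^2=\abs{\bsM\br}^2=\br^\t(\bsM^\t\bsM)\br$, whereas $\abs{\bsa}^2+\abs{\bsb}^2+\abs{\bsc}^2=\Tr{\bsT_{\bsa,\bsb,\bsc}}$. Hence the objective becomes $x^2+y^2+z^2=\Tr{\bsT_{\bsa,\bsb,\bsc}}-\br^\t(\bsM^\t\bsM)\br$, and minimizing it over the ball amounts to maximizing the Rayleigh quotient $\br^\t(\bsM^\t\bsM)\br$ subject to $\abs{\br}\leqslant1$.

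The final step is spectral: since $\bsM^\t\bsM$ is positive semidefinite, its Rayleigh quotient over $\abs{\br}\leqslant1$ is maximized at $\abs{\br}=1$ (that is, at a pure state) and equals $\lambda_{\max}(\bsM^\t\bsM)$. Because $\set{\bsa,\bsb,\bsc}$ is linearly independent, $\bsM$ is invertible, so the $3\times3$ matrices $\bsM^\t\bsM$ and $\bsM\bsM^\t=\bsT_{\bsa,\bsb,\bsc}$ are similar and share their spectrum; in particular $\lambda_{\max}(\bsM^\t\bsM)=\lambda_{\max}(\bsT_{\bsa,\bsb,\bsc})$. Substituting back yields $\min\set{x^2+y^2+z^2:(x,y,z)\in\cU_{\Delta\bsA,\Delta\bsB,\Delta\bsC}}=\Tr{\bsT_{\bsa,\bsb,\bsc}}-\lambda_{\max}(\bsT_{\bsa,\bsb,\bsc})$, as claimed.

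I expect no serious obstacle; the only points needing care are the two reductions that keep the argument clean. First, optimizing over the uncertainty region may be replaced by optimizing over the Bloch ball, which is immediate from the definition of $\cU$ as an image, with compactness of the ball guaranteeing that the minimum is attained. Second, one must identify the relevant eigenvalue of $\bsM^\t\bsM$ with that of the Gram matrix $\bsT_{\bsa,\bsb,\bsc}$, which is where linear independence is used. As a consistency check, the same computation carried out with two observables returns $\Tr{\bsT_{\bsa,\bsb}}-\lambda_{\max}(\bsT_{\bsa,\bsb})=\lambda_{\min}(\bsT_{\bsa,\bsb})$, recovering \eqref{th21} of Theorem~\ref{th1}.
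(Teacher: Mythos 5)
Your proof is correct, and it takes a genuinely different route from the paper's. The paper proves Theorem~\ref{th2} by starting from the explicit inequality description of $\cU_{\Delta\bsA,\Delta\bsB,\Delta\bsC}$ in Lemma~\ref{lem2} (itself the endpoint of a long chain of probability-density computations), substituting $X=\sqrt{\abs{\bsa}^2-x^2}$, $Y=\sqrt{\abs{\bsb}^2-y^2}$, $Z=\sqrt{\abs{\bsc}^2-z^2}$, and then orthogonally diagonalizing $\bsT_{\bsa,\bsb,\bsc}$ to reduce the problem to maximizing ${X'}^2+{Y'}^2+{Z'}^2$ on the ellipsoid $\sum_k {X'_k}^2/\lambda_k=1$. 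You instead bypass Lemma~\ref{lem2} entirely: since $\cU_{\Delta\bsA,\Delta\bsB,\Delta\bsC}$ is by definition the image of $\density{\complex^2}$ under $\rho\mapsto(\Delta_\rho\bsA,\Delta_\rho\bsB,\Delta_\rho\bsC)$, the Bloch-ball identity $(\Delta_\rho\bsA)^2=\abs{\bsa}^2-\Inner{\bsa}{\bsr}^2$ turns the whole minimization into the Rayleigh-quotient problem $\max_{\abs{\bsr}\leqslant1}\bsr^\t(\bsM^\t\bsM)\bsr=\lambda_{\max}(\bsM^\t\bsM)=\lambda_{\max}(\bsM\bsM^\t)=\lambda_{\max}(\bsT_{\bsa,\bsb,\bsc})$, where $\bsM$ has rows $\bsa,\bsb,\bsc$. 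Your argument is shorter, self-contained (it does not depend on the correctness of the region's boundary description or on the $\epsilon_b,\epsilon_c$ sign bookkeeping), and extends verbatim to $n$ observables --- indeed the $\bsM\bsM^\t$ versus $\bsM^\t\bsM$ spectral identity is exactly the device the paper deploys later in its proof of Theorem~\ref{th3}. What the paper's route buys is consistency with its broader program: it exhibits the minimizer as a point on the explicitly parametrized boundary surface of the uncertainty region, which is information your shortcut does not produce.
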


We analyze the state-independent lower bound of
\eqref{eq:zhanglin2021} in two cases:

(i) If $\alpha=\beta=\gamma=\theta\in[0,\pi/2]$, then the three
eigenvalues of $\bsT_{\bsa,\bsb,\bsc}$ are given by
$\set{1-\cos\theta,1-\cos\theta,1+2\cos\theta}$, namely, the maximal
eigenvalue is $\lambda_{\max}(\bsT_{\bsa,\bsb,\bsc})=1+2\cos\theta$.
Therefore, the uncertainty relation \eqref{eq:zhanglin2021} simply
becomes
$(\Delta_\rho\bsA)^2+(\Delta_\rho\bsB)^2+(\Delta_\rho\bsC)^2\geqslant
2(1-\cos\theta)$;

(ii) If $\beta=\gamma=\frac\pi2$ and $\alpha\in[0,\pi/2]$, the three
eigenvalues of $\bsT_{\bsa,\bsb,\bsc}$ are
$\set{1,1-\cos\alpha,1+\cos\alpha}$. The maximal eigenvalue is
$\lambda_{\max}(\bsT_{\bsa,\bsb,\bsc})=1+\cos\alpha$. In this case
the uncertainty relation \eqref{eq:zhanglin2021} becomes
$(\Delta_\rho\bsA)^2+(\Delta_\rho\bsB)^2+(\Delta_\rho\bsC)^2\geqslant2-\cos\alpha$.

We consider now the uncertainty regions for multiple qubit
observables. For an $n$-tuple of qubit observables
$(\bsA_1,\ldots,\bsA_n)$, where
$\bsA_k=a^{(k)}_0\I+\bsa_k\cdot\boldsymbol{\sigma}$ with
$(a^{(k)}_0, \bsa_k)\in\real^4$, $k=1,\ldots,n$, denote
$\bsT_{\bsa_1,\ldots,\bsa_n}=(\inner{\bsa_i}{\bsa_j})$. Note that
$\set{\bsa_1,\bsa_2,...,\bsa_n}$ has at most three vectors that are
linearly independent. Without loss of generality, we assume
$\set{\bsa_1,\bsa_2,\bsa_3}$ is linearly independent. The rest
vectors can be linearly expressed by $\set{\bsa_1,\bsa_2,\bsa_3}$,
$\bsa_l=\kappa_{l1}\bsa_1+\kappa_{l2}\bsa_2+\kappa_{l3}\bsa_3$, for
some coefficients $\kappa_{lj}$, $l=4,\cdots,n$, $j=1,2,3$. Set
$\bsu_{\epsilon_1,\epsilon_2,\epsilon_3}(x_1,x_2,x_3)=\Pa{\epsilon_1\sqrt{\abs{\bsa_1}^2-x^2_1},\epsilon_2\sqrt{\abs{\bsa_2}^2-x^2_2},
\epsilon_3\sqrt{\abs{\bsa_3}^2-x^2_3}}$.

\begin{lem}\label{lem3}
The uncertainty region
$\cU_{\Delta\bsA_1,\ldots,\Delta\bsA_n}$ of an $n$-tuple of qubit
observables $(\bsA_1,\ldots,\bsA_n)$ is determined by
$(x_1,\ldots,x_n)\in\cU_{\Delta\bsA_1,\ldots,\Delta\bsA_n}$
satisfying
\begin{eqnarray*}
\begin{cases}
\bsu_{\epsilon_1,\epsilon_2,\epsilon_3}(x_1,x_2,x_3)\bsT^{-1}_{\bsa_1,\bsa_2,\bsa_3}\bsu^\t_{\epsilon_1,\epsilon_2,\epsilon_3}(x_1,x_2,x_3)\leqslant1\\[2mm]
\epsilon_l\sqrt{\abs{\bsa_l}^2-x^2_l}=\sum^3_{j=1}\kappa_{lj}\epsilon_j\sqrt{\abs{\bsa_j}^2-x^2_j}
\end{cases}
\end{eqnarray*}
$\forall l=4,\ldots,n$, where $\epsilon_k\in\set{\pm1}$,
$x_k\in[0,\abs{\bsa_k}]$, $k=1,\ldots,n$.
\end{lem}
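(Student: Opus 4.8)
The plan is to reduce the statement to a linear-algebra fact about the Gram matrix $\bsT_{\bsa_1,\bsa_2,\bsa_3}$, following the same Bloch-vector strategy already used for Lemmas~\ref{lem1} and~\ref{lem2}, the only new feature being the consistency equations forced by the linear dependence of $\bsa_4,\ldots,\bsa_n$ on $\bsa_1,\bsa_2,\bsa_3$. First I would parametrize an arbitrary state as $\rho=\tfrac12(\I+\bsr\cdot\boldsymbol{\sigma})$ with $\bsr\in\real^3$ and $\abs{\bsr}\leqslant1$, so that $\rho$ ranges over $\density{\complex^2}$ exactly as $\bsr$ ranges over the closed unit ball. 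Using $(\bsa_k\cdot\boldsymbol{\sigma})^2=\abs{\bsa_k}^2\I$ together with $\Tr{\sigma_i}=0$, a direct computation gives $\langle\bsA_k\rangle_\rho=a^{(k)}_0+\bsa_k\cdot\bsr$ and $\langle\bsA^2_k\rangle_\rho=(a^{(k)}_0)^2+\abs{\bsa_k}^2+2a^{(k)}_0\,\bsa_k\cdot\bsr$, whence
\begin{eqnarray*}
(\Delta_\rho\bsA_k)^2=\abs{\bsa_k}^2-(\bsa_k\cdot\bsr)^2 .
\end{eqnarray*}
This is the crucial identity: each variance depends only on $\bsr$ and is independent of the constants $a^{(k)}_0$, which is why the whole region is controlled by the vectors $\bsa_k$ alone.

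Setting $x_k=\Delta_\rho\bsA_k\geqslant0$ turns the identity into $\bsa_k\cdot\bsr=\epsilon_k\sqrt{\abs{\bsa_k}^2-x^2_k}$ for some sign $\epsilon_k\in\set{\pm1}$, which already forces $x_k\in[0,\abs{\bsa_k}]$. The next step is to decide when such sign-and-value data actually come from an admissible $\bsr$. Since $\set{\bsa_1,\bsa_2,\bsa_3}$ is linearly independent, the matrix $M$ whose rows are $\bsa_1,\bsa_2,\bsa_3$ is invertible with $MM^\t=\bsT_{\bsa_1,\bsa_2,\bsa_3}$, and the three equations for $j=1,2,3$ determine $\bsr$ uniquely as $\bsr=M^{-1}\bsu^\t_{\epsilon_1,\epsilon_2,\epsilon_3}(x_1,x_2,x_3)$. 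Then $\abs{\bsr}^2=\bsu_{\epsilon_1,\epsilon_2,\epsilon_3}(M^{-1})^\t M^{-1}\bsu^\t_{\epsilon_1,\epsilon_2,\epsilon_3}=\bsu_{\epsilon_1,\epsilon_2,\epsilon_3}\bsT^{-1}_{\bsa_1,\bsa_2,\bsa_3}\bsu^\t_{\epsilon_1,\epsilon_2,\epsilon_3}$, using $(M^{-1})^\t M^{-1}=(MM^\t)^{-1}=\bsT^{-1}_{\bsa_1,\bsa_2,\bsa_3}$, so the physical constraint $\abs{\bsr}^2\leqslant1$ is exactly the first inequality in the lemma.

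For the dependent vectors I would expand $\bsa_l=\sum^3_{j=1}\kappa_{lj}\bsa_j$ for $l\geqslant4$, giving $\bsa_l\cdot\bsr=\sum^3_{j=1}\kappa_{lj}\epsilon_j\sqrt{\abs{\bsa_j}^2-x^2_j}$; equating this with $\bsa_l\cdot\bsr=\epsilon_l\sqrt{\abs{\bsa_l}^2-x^2_l}$ produces the second (equality) condition, which says precisely that the coordinates $x_4,\ldots,x_n$ are not free but are pinned down by $x_1,x_2,x_3$ and the chosen sign pattern. Conversely, given $(x_1,\ldots,x_n)$ and signs satisfying both conditions, I would define $\bsr$ by the same formula; the inequality guarantees $\abs{\bsr}\leqslant1$, so $\rho=\tfrac12(\I+\bsr\cdot\boldsymbol{\sigma})$ is a legitimate state, and substituting back reproduces $\Delta_\rho\bsA_k=x_k$ for every $k$. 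Taking the union over all admissible $\set{\pm1}$ choices of $\epsilon_1,\epsilon_2,\epsilon_3$ then yields the full region.

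The main obstacle is the bookkeeping of signs. For $j=1,2,3$ the signs $\epsilon_j$ may be chosen freely, but for $l\geqslant4$ the sign $\epsilon_l$ is not independent: it is constrained by the consistency equation, which need not be solvable for an arbitrary $(x_1,x_2,x_3)$. The delicate point is therefore to phrase the characterization so that the region is exactly the union, over the free signs $\epsilon_1,\epsilon_2,\epsilon_3$, of the slices cut out by the inequality together with the equality constraints, and to make sure that boundary states with $\abs{\bsr}=1$ are included without double counting. Once the variance identity and the Gram-matrix reformulation are in place, this is essentially forced, and the remaining verifications are routine.
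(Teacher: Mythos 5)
Your proof is correct, but it takes a genuinely different and more elementary route than the paper. The paper derives Lemma~\ref{lem3} as the support of an explicit joint probability density: Proposition~\ref{prop11} computes $f_{\langle\bsA_1\rangle,\ldots,\langle\bsA_n\rangle}$ via Fourier representations of delta functions (resting on the Harish--Chandra--Itzykson--Z\"uber integral and the earlier Propositions~\ref{prop4}--\ref{prop9}), Proposition~\ref{prop12} converts this into the density of the uncertainties, and the region is then read off as the closure of the set where that density is nonzero, with the delta-function factors $\delta\big((r_l-a^{(l)}_0)-\sum_j\kappa_{lj}(r_j-a^{(j)}_0)\big)$ producing the equality constraints. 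You instead bypass the probabilistic apparatus entirely: the Bloch parametrization $\rho=\tfrac12(\I+\bsr\cdot\boldsymbol{\sigma})$, the identity $(\Delta_\rho\bsA_k)^2=\abs{\bsa_k}^2-(\bsa_k\cdot\bsr)^2$, and the Gram-matrix computation $\abs{\bsr}^2=\bsu\,\bsT^{-1}_{\bsa_1,\bsa_2,\bsa_3}\bsu^\t$ give the set-theoretic region directly, with the consistency equations for $l\geqslant4$ falling out of $\bsa_l\cdot\bsr=\sum_j\kappa_{lj}(\bsa_j\cdot\bsr)$, and your converse direction (reconstructing $\bsr$ from admissible data) is exactly what is needed to show the characterization is tight. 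Your argument is shorter, self-contained, and matches the region as literally defined in \eqref{un} over all $\rho\in\rD(\complex^2)$, without having to argue that the Haar-induced measure has full support; what it does not deliver is the density functions themselves, which are a separate stated goal of the paper and the reason the authors route the proof through Propositions~\ref{prop11} and~\ref{prop12}. Your closing remarks on the sign bookkeeping are on target: the region is the union over the free signs $\epsilon_1,\epsilon_2,\epsilon_3$, with $\epsilon_l$ for $l\geqslant4$ determined (when solvable) by the consistency equation, which is precisely how the lemma's system of conditions should be read.
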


Correspondingly, we have the following result:

\begin{thrm}\label{th3}
The variances of the observables
$\bsA_k(k=1,\ldots,n)$ satisfy
\begin{eqnarray}
\sum^n_{k=1}(\Delta_\rho\bsA_k)^2\geqslant
\min_{(x_1,\ldots,x_n)\in\cU_{\Delta\bsA_1,\ldots,\Delta\bsA_n}}\sum^n_{k=1}x^2_k=\Tr{\bsT_{\bsa_1,\ldots,\bsa_n}}-\lambda_{\max}(\bsT_{\bsa_1,\ldots,\bsa_n}),
\end{eqnarray}
where $\lambda_{\max}(\bsT_{\bsa_1,\ldots,\bsa_n})$ stands for the
maximal eigenvalue of $\bsT_{\bsa_1,\ldots,\bsa_n}$.
\end{thrm}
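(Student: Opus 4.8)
The plan is to bypass the semialgebraic description of the region in Lemma~\ref{lem3} and minimize the objective directly over states, exploiting that $\cU_{\Delta\bsA_1,\ldots,\Delta\bsA_n}$ is, by \eqref{un}, nothing but the image of $\rD(\complex^2)$ under $\rho\mapsto(\Delta_\rho\bsA_1,\ldots,\Delta_\rho\bsA_n)$. The claimed inequality is then immediate, since for each $\rho$ the tuple $(\Delta_\rho\bsA_1,\ldots,\Delta_\rho\bsA_n)$ lies in this image, and the substance of the theorem is the evaluation of the minimum. Because the objective $\sum_k x_k^2$ pulls back to $\sum_k(\Delta_\rho\bsA_k)^2$, minimizing over the region is the same as minimizing $\sum_k(\Delta_\rho\bsA_k)^2$ over $\rho\in\rD(\complex^2)$, and this is what I would compute.

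First I would parameterize $\rho=\tfrac12(\I+\bsr\cdot\boldsymbol{\sigma})$ by its Bloch vector $\bsr\in\real^3$ with $\abs{\bsr}\leqslant1$, the pure states being exactly those with $\abs{\bsr}=1$. Using $(\bsa_k\cdot\boldsymbol{\sigma})^2=\abs{\bsa_k}^2\I$, a short computation gives $\langle\bsA_k\rangle_\rho=a^{(k)}_0+\inner{\bsa_k}{\bsr}$ and $\langle\bsA_k^2\rangle_\rho=(a^{(k)}_0)^2+\abs{\bsa_k}^2+2a^{(k)}_0\inner{\bsa_k}{\bsr}$, so that the offsets $a^{(k)}_0$ cancel in the variance and
\begin{eqnarray*}
(\Delta_\rho\bsA_k)^2 &=& \abs{\bsa_k}^2-\inner{\bsa_k}{\bsr}^2 .
\end{eqnarray*}
Writing $\bsM$ for the $n\times3$ matrix whose $k$-th row is $\bsa_k^\t$, summing over $k$ yields $\sum_k(\Delta_\rho\bsA_k)^2=\Tr{\bsT_{\bsa_1,\ldots,\bsa_n}}-\bsr^\t\bsS\bsr$, where $\bsS=\bsM^\t\bsM=\sum_k\bsa_k\bsa_k^\t$ is $3\times3$ and I have used $\Tr{\bsT_{\bsa_1,\ldots,\bsa_n}}=\sum_k\abs{\bsa_k}^2$.

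Minimizing this expression over the Bloch ball is equivalent to maximizing the positive semidefinite quadratic form $\bsr^\t\bsS\bsr$ over $\abs{\bsr}\leqslant1$; the maximum is attained at a unit vector aligned with a leading eigenvector of $\bsS$ and equals $\lambda_{\max}(\bsS)$, with the maximizer a bona fide pure state, so the minimum is achieved. The final step is the spectral identification: since $\bsT_{\bsa_1,\ldots,\bsa_n}=\bsM\bsM^\t$ and $\bsS=\bsM^\t\bsM$ share the same nonzero eigenvalues and are both positive semidefinite, and $\bsT$ has rank at most $3$ because the $\bsa_k$ span a space of dimension at most $3$, one has $\lambda_{\max}(\bsS)=\lambda_{\max}(\bsT_{\bsa_1,\ldots,\bsa_n})$. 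Substituting back gives the asserted value $\Tr{\bsT_{\bsa_1,\ldots,\bsa_n}}-\lambda_{\max}(\bsT_{\bsa_1,\ldots,\bsa_n})$.

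Every computation here is routine; the one place calling for care is this passage between the $3\times3$ matrix $\bsS$ and the $n\times n$ Gram matrix $\bsT$, namely verifying that the largest eigenvalue is unchanged under $\bsM^\t\bsM\mapsto\bsM\bsM^\t$ and that the optimal Bloch vector is admissible as a state. As a consistency check I would confirm that the formula reproduces the earlier results: for $n=2$ it collapses to $\lambda_{\min}(\bsT_{\bsa,\bsb})$, recovering \eqref{th21}, and for $n=3$ it reproduces Theorem~\ref{th2}.
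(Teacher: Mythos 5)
Your proof is correct, and it takes a genuinely different route from the paper's. The paper derives Theorem~\ref{th3} from the semialgebraic description of the region in Lemma~\ref{lem3}: it substitutes $X_k=\sqrt{\abs{\bsa_k}^2-x_k^2}$, rewrites the constrained maximum of $\sum_k X_k^2$ using $\bsP=(\bsa_1,\bsa_2,\bsa_3)$ and the linear relations $\bsa_l=\sum_j\kappa_{lj}\bsa_j$, and only then lands on $\max_{\abs{\bsv}\leqslant1}\bsv^\t\bigl(\sum_j\bsa_j\bsa_j^\t\bigr)\bsv$. You instead observe that, since $\cU_{\Delta\bsA_1,\ldots,\Delta\bsA_n}$ is by definition \eqref{un} the image of $\rD(\complex^2)$, minimizing $\sum_k x_k^2$ over the region is the same as minimizing $\sum_k(\Delta_\rho\bsA_k)^2$ over $\rho$, and you compute this directly on the Bloch ball via $(\Delta_\rho\bsA_k)^2=\abs{\bsa_k}^2-\inner{\bsa_k}{\bsr}^2$. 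The two arguments converge at the very last step: both reduce to the identity $\lambda_{\max}(\bsM\bsM^\t)=\lambda_{\max}(\bsM^\t\bsM)$ relating the $3\times3$ matrix $\sum_k\bsa_k\bsa_k^\t$ to the $n\times n$ Gram matrix $\bsT_{\bsa_1,\ldots,\bsa_n}$. Your route is shorter, avoids the inversion of $\bsT_{\bsa_1,\bsa_2,\bsa_3}$ and the case analysis on the rank of the span of the $\bsa_k$, and makes the attainability of the bound (at a pure state aligned with a leading eigenvector of $\sum_k\bsa_k\bsa_k^\t$) completely transparent; what it does not give you is the uncertainty region itself, which the paper needs independently for Lemma~\ref{lem3}, for the deviation bound of Theorem~\ref{th1}, and for the entanglement criteria, so the paper's detour through the region is motivated by those other uses rather than by this particular minimization. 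Your consistency checks ($n=2$ giving $\lambda_{\min}(\bsT_{\bsa,\bsb})$, $n=3$ giving Theorem~\ref{th2}) are both correct.
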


As applications of our state-independent uncertainty relations, we
consider the entanglement detection. As shown in
\cite{Hofman2003pra,Guhne2009pra,Schwonnek2017prl}, every
state-independent uncertainty relation gives rise to a nonlinear
entanglement witness. We consider the tripartite scenario: three
parties, Alice, Bob and Charlie perform local measurements
$\bsA_i,\bsB_i$ and $\bsC_i$, where $i=1,2,3$, on an unknown
tripartite quantum state $\rho_{ABC}\equiv\rho$, acting on
$\complex^2\ot\complex^2\ot\complex^2$, respectively. Their goal is
to decide if $\rho$ is fully separable or not. They measure the
composite observables $\bsM_i$ $(i=1,2,3)$ given by
\begin{eqnarray*}
\bsM_i=\bsA_i\ot\I_{BC}+\I_A\ot\bsB_i\ot\I_C + \I_{AB}\ot\bsC_i.
\end{eqnarray*}

Note that
\begin{eqnarray*}
(\Delta_\rho\bsM_i)^2&=&2(\Tr{\bsA_i\ot\bsB_i\rho_{AB}}-\Tr{\bsA_i\rho_A}\Tr{\bsB_i\rho_B})\\
&&+2(\Tr{\bsB_i\ot\bsC_i\rho_{BC}}-\Tr{\bsB_i\rho_B}\Tr{\bsC_i\rho_C})\\
&&+2(\Tr{\bsA_i\ot\bsC_i\rho_{AC}}-\Tr{\bsA_i\rho_A}\Tr{\bsC_i\rho_C})\\
&&+(\Delta_{\rho_A}\bsA_i)^2+(\Delta_{\rho_B}\bsB_i)^2+(\Delta_{\rho_C}\bsC_i)^2.
\end{eqnarray*}
If $\rho$ is of the form, $\rho=\rho_A\ot\rho_B\ot\rho_C$, then
$\Delta^2_\rho\bsM_i=\Delta^2_{\rho_A}\bsA_i+\Delta^2_{\rho_B}\bsB_i+\Delta^2_{\rho_C}\bsC_i$.
Thus for fully separable states $\rho=\sum_k p_k\rho_k$, where
$\rho_k=\rho_{k,A}\ot\rho_{k,B}\ot\rho_{k,C}$, we get
\begin{eqnarray*}
&&\Delta^2_\rho\bsM_1+\Delta^2_\rho\bsM_2+\Delta^2_\rho\bsM_3\\
&&\geqslant\sum_k p_k
\Pa{\Delta^2_{\rho_k}\bsM_1+\Delta^2_{\rho_k}\bsM_2+\Delta^2_{\rho_k}\bsM_3}\\
&&=\sum_k
p_k\Pa{\Delta^2_{\rho_{k,A}}\bsA_1+\Delta^2_{\rho_{k,A}}\bsA_2+\Delta^2_{\rho_{k,A}}\bsA_3}\\
&&~~~+\sum_k
p_k\Pa{\Delta^2_{\rho_{k,B}}\bsB_1+\Delta^2_{\rho_{k,B}}\bsB_2+\Delta^2_{\rho_{k,B}}\bsB_3}\\
&&~~~+\sum_k
p_k\Pa{\Delta^2_{\rho_{k,C}}\bsC_1+\Delta^2_{\rho_{k,C}}\bsC_2+\Delta^2_{\rho_{k,C}}\bsC_3}.
\end{eqnarray*}
Denote
$m^{(3)}_{X}=\min\set{x^2+y^2+z^2:(x,y,z)\in\cU_{\Delta\bsX_1,\Delta\bsX_2,\Delta\bsX_3}}$,
$X=A,B,C$, the optimal uncertainty bounds given by Theorem~\ref{th2}
for the observable triples $(\bsA_1,\bsA_2,\bsA_3)$,
$(\bsB_1,\bsB_2,\bsB_3)$ and $(\bsC_1,\bsC_2,\bsC_3)$, respectively.
We have

\begin{thrm}\label{th4}
If a tripartite state $\rho$ is fully
separable, then
\begin{eqnarray}\label{eq:entnc3nd}
&&\Delta^2_\rho\bsM_1+\Delta^2_\rho\bsM_2+\Delta^2_\rho\bsM_3\geqslant
m^{(3)}_{A}+m^{(3)}_{B}+m^{(3)}_{C}.
\end{eqnarray}
\end{thrm}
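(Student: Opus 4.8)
The plan is to combine the concavity of variance with the single-system bound from Theorem~\ref{th2}. The computation preceding the statement already supplies two of the facts I need: that for a product state $\rho_A\ot\rho_B\ot\rho_C$ the cross-covariance terms vanish so that $\Delta^2_\rho\bsM_i$ reduces to $\Delta^2_{\rho_A}\bsA_i+\Delta^2_{\rho_B}\bsB_i+\Delta^2_{\rho_C}\bsC_i$, and the explicit chain of (in)equalities that, for a fully separable $\rho=\sum_k p_k\rho_k$, lower-bounds $\sum_i\Delta^2_\rho\bsM_i$ by $\sum_k p_k$ times the three subsystem sums of local variances.

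The one ingredient I still need to justify is the very first inequality of that chain, namely $\Delta^2_\rho\bsM_i\geqslant\sum_k p_k\,\Delta^2_{\rho_k}\bsM_i$. I would derive this from concavity of the variance functional $\rho\mapsto\Delta^2_\rho\bsX=\Tr{\bsX^2\rho}-(\Tr{\bsX\rho})^2$: the first term is linear in $\rho$, while $-(\Tr{\bsX\rho})^2$ is the negative of the square of a real-linear functional and hence concave. Jensen's inequality applied to the convex combination $\rho=\sum_k p_k\rho_k$ then gives the bound for each fixed $\bsM_i$, and summing over $i=1,2,3$ reproduces the displayed chain.

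With the chain established, the closing step is to invoke Theorem~\ref{th2} subsystem by subsystem. For each $k$, the reduced state $\rho_{k,A}$ is a genuine qubit state, so Theorem~\ref{th2} gives $\sum_{i=1}^3\Delta^2_{\rho_{k,A}}\bsA_i\geqslant m^{(3)}_A$, and likewise $\sum_i\Delta^2_{\rho_{k,B}}\bsB_i\geqslant m^{(3)}_B$ and $\sum_i\Delta^2_{\rho_{k,C}}\bsC_i\geqslant m^{(3)}_C$. Substituting these and pulling out the state-independent constants yields
\begin{eqnarray*}
\sum_{i=1}^3\Delta^2_\rho\bsM_i\geqslant\sum_k p_k\Pa{m^{(3)}_A+m^{(3)}_B+m^{(3)}_C}=m^{(3)}_A+m^{(3)}_B+m^{(3)}_C,
\end{eqnarray*}
where the last equality uses $\sum_k p_k=1$, which is exactly \eqref{eq:entnc3nd}.

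I do not expect a serious analytic obstacle, since every step is either the concavity inequality or a direct call to Theorem~\ref{th2}. The subtle point to get right is the \emph{order} of operations: Theorem~\ref{th2} must be applied to each local reduced state $\rho_{k,X}$ before the probability average over $k$ is taken. Averaging the local states first and only then bounding would be weaker, because the optimal bound $m^{(3)}_X$ is in general attained only at an extremal qubit state and the averaged state need not realize it; preserving the convex-decomposition structure throughout is what keeps the final bound valid.
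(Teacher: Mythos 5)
Your proposal is correct and follows exactly the route the paper takes: the chain of inequalities displayed just before the theorem statement (concavity of variance applied to the convex decomposition $\rho=\sum_k p_k\rho_k$, reduction to local variances on product states, then Theorem~\ref{th2} applied to each reduced qubit state $\rho_{k,X}$ before averaging over $k$). Your explicit justification of the concavity step via the linearity of $\Tr{\bsX^2\rho}$ and the concavity of $-(\Tr{\bsX\rho})^2$ is a welcome filling-in of a step the paper only asserts.
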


From Theorem~\ref{th4}, one has that if \eqref{eq:entnc3nd} is
violated, $\rho$ must be entangled. In addition, denote
$$
m^{(3)}_M = \min_{\rho\in\rD(\complex^2)} \Pa{\Delta^2_\rho\bsM_1
+\Delta^2_\rho\bsM_2+\Delta^2_\rho\bsM_3},
$$
the uncertainty bound for the observable triple
$(\bsM_1,\bsM_2,\bsM_3)$. We have that if $\rho$ is not fully
separable, then
\begin{eqnarray}\nonumber
m^{(3)}_A+m^{(3)}_B+m^{(3)}_C>\Delta^2_\rho\bsM_1
+\Delta^2_\rho\bsM_2+\Delta^2_\rho\bsM_3\geqslant& m^{(3)}_M.
\end{eqnarray}

Instead of three measurements, we may also consider two measurements
with three observables each. Denote
$m^{(2)}_{X}=\min\set{x^2+y^2:(x,y)\in\cU_{\Delta\bsX_1,\Delta\bsX_2}}$
$(X=A,B,C)$ the optimal uncertainty bounds for the observables pairs
$(\bsA_1,\bsA_2)$, $(\bsB_1,\bsB_2)$ and $(\bsC_1,\bsC_2)$,
respectively, given by Theorem~\ref{th1}. Similarly, we get
\begin{eqnarray}\label{eq:entnc2nd}
\Delta^2_\rho\bsM_1+\Delta^2_\rho\bsM_2\geqslant
m^{(2)}_A+m^{(2)}_B+m^{(2)}_C.
\end{eqnarray}
If Eq.~\eqref{eq:entnc2nd} is violated, $\rho$ must be entangled.
Let $m^{(2)}_M$ be the uncertainty bound for the observable pair
$(\bsM_1,\bsM_2)$. We can draw a new criterion that $\rho$ is
entangled for
\begin{eqnarray}\nonumber
m^{(2)}_A+m^{(2)}_B+m^{(2)}_C>\Delta^2_\rho\bsM_1+\Delta^2_\rho\bsM_2\geqslant
m^{(2)}_M.
\end{eqnarray}

In \cite{Schwonnek2017prl}, the authors considered the entanglement
detect for bipartite scenario with two measurements and obtained
that
\begin{eqnarray}\label{pk}
\Delta^2_\rho\bsM_1+\Delta^2_\rho\bsM_2 \geqslant
m^{(2)}_A+m^{(2)}_B.
\end{eqnarray}
From Theorem~\ref{th4} we can also consider the entanglement detect
for bipartite systems with three measurements,
\begin{eqnarray}\nonumber
\bsM_i=\bsA_i\ot\I+\I\ot\bsB_i,\quad i=1,2,3.
\end{eqnarray}
We have
\begin{eqnarray}\label{ppp}
\Delta^2_\rho\bsM_1+\Delta^2_\rho\bsM_2+\Delta^2_\rho\bsM_3
\geqslant m^{(3)}_A+m^{(3)}_B,
\end{eqnarray}
where $m^{(3)}_A$ and $m^{(3)}_B$ are the ones given in Theorem~2.
The criterion (\ref{ppp}) is a new one that is different from
(\ref{pk}) given in \cite{Schwonnek2017prl}.

We have investigated uncertainty relations of quantum observables in
a random quantum state, by deriving explicitly the probability
distribution densities of uncertainty for two, three and multiple
qubit observables. As the supports of these density functions, the
uncertainty regions are analytically derived. The advantage of the
probabilistic approach used in the paper is that it gives a unified
framework from which one can obtain the correlations (PDF) among
uncertainties of multiple observables and derive analytically the
uncertainty regions. Various state-independent uncertainty relations
may be derived from the uncertainty regions. Throughout this paper,
we have focused on qubit observables. Our framework may be also
applied to the case of qudit obsevables with random mixed quantum
state ensembles.

\section{Proofs of main results}\label{sec3}
\subsection{Proof of Lemma~\ref{lem1}}\label{app-lemma1}

The proof of Lemma~\ref{lem1} will be essentially recognized as a
series of Propositions~\ref{prop1} -- \ref{prop8}. Note that in the
proof of Lemma~\ref{lem1}, we directly use Propositions~\ref{prop7}
and \ref{prop8}, which are in fact based on the previous
Propositions~\ref{prop1} -- \ref{prop6}.

\begin{prop}[Harish-Chandra-Itzykson-Z\"{u}ber integral \cite{Itzykson1980}]\label{prop1}
Let $\bsA$ and $\bsB$ be $n\times n$ Hermitian matrices with
eigenvalues $\lambda_1(\bsA)< \cdots< \lambda_n(\bsA)$ and
$\lambda_1(\bsB)< \cdots< \lambda_n(\bsB),$ then
\begin{eqnarray*}
\int_{\mathsf{U}(\complex^n)} e^{z\Tr{\bsA\bsU\bsB\bsU^\dagger}}
\dif\mu_{\haar}(\bsU) =C_n\frac{\det \Pa{
e^{z\lambda_i(\bsA)\lambda_j(\bsB)}}}{z^{\binom{n}{2}}
V(\lambda(\bsA)) V(\lambda(\bsB))}\quad (\forall z\in
\complex\backslash\set{0})
\end{eqnarray*}
where $\dif\mu_{\haar}$ is the Haar measure on the unitary group
$\mathsf{U}(\complex^n)$, $C_n=\prod^n_{k=1}\Gamma(k)$, and
$V(\lambda (\bsA))=\prod_{1\leqslant i< j\leqslant
n}(\lambda_j(\bsA)- \lambda_i(\bsA))$ is the so-called Vandermonde
determinant.
\end{prop}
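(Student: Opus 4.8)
The plan is to prove this by the heat-kernel method of Itzykson and Zuber, comparing two evaluations of the Euclidean heat kernel on the space $\her{\complex^n}$ of Hermitian matrices, which I regard as $\real^{n^2}$ equipped with the inner product $\Inner{\bsX}{\bsY}=\Tr{\bsX\bsY}$. Write $G_t(\bsX)=(2\pi t)^{-n^2/2}\exp(-\Tr{\bsX^2}/(2t))$ for the flat Gaussian kernel. First I would average $G_t(\bsA-\bsU\bsB\bsU^\dagger)$ over $\bsU$: expanding the exponent as $\Tr{\bsA^2}+\Tr{\bsB^2}-2\Tr{\bsA\bsU\bsB\bsU^\dagger}$ pulls out a Gaussian prefactor and leaves exactly $\int_{\U(\complex^n)}\exp(t^{-1}\Tr{\bsA\bsU\bsB\bsU^\dagger})\,\dif\mu_{\haar}(\bsU)$, which, after setting $z=1/t$ and analytically continuing in $z$, is the quantity we want.

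Second, I would compute the same heat kernel through its radial part. The key algebraic fact is that the Vandermonde polynomial $V(x)=\prod_{i<j}(x_i-x_j)$, being harmonic, conjugates the radial part of the Laplacian on $\her{\complex^n}$ into the free Laplacian: for a conjugation-invariant $f$ depending on its eigenvalues $x=(x_1,\ldots,x_n)$ only through $\tilde f(x)$, one checks (using $\sum_i\partial_{x_i}^2 V=0$ and $\partial_{x_i}V/V=\sum_{j\neq i}(x_i-x_j)^{-1}$) that $\Delta f=V(x)^{-1}\sum_i\partial_{x_i}^2\big(V(x)\tilde f(x)\big)$, where the drift coefficient $2/(x_i-x_j)$ is precisely the $\beta=2$ unitary case. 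Because $V\tilde f$ is antisymmetric, the corresponding heat semigroup is conjugate by $V$ to the free one, whose kernel is obtained by the reflection/method-of-images antisymmetrization of one-dimensional Gaussians. This yields the radial heat kernel in closed form $\dfrac{c_n}{t^{n/2}\,V(\lambda(\bsA))V(\lambda(\bsB))}\det\!\big(e^{-(\lambda_i(\bsA)-\lambda_j(\bsB))^2/(2t)}\big)$.

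Matching the two computations, the prefactors $\exp(-(\Tr{\bsA^2}+\Tr{\bsB^2})/(2t))$ cancel against the cross terms produced by expanding $(\lambda_i(\bsA)-\lambda_j(\bsB))^2$ inside the determinant, leaving precisely $\det(e^{z\lambda_i(\bsA)\lambda_j(\bsB)})$ in the numerator and $z^{-\binom n2}$ from collecting powers of $t$; the remaining absolute constant is fixed to $C_n=\prod_{k=1}^n\Gamma(k)$ by evaluating both sides in the degenerate regime $\bsB\to0$ (left side $=1$, right side a $0/0$ form resolved by Taylor-expanding each $e^{z\lambda_i(\bsA)\lambda_j(\bsB)}$ against the Vandermonde, i.e. l'H\^{o}pital). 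An alternative route I would keep in reserve is Duistermaat--Heckman exact stationary phase: $\bsU\mapsto\Tr{\bsA\bsU\bsB\bsU^\dagger}$ is the moment-map Hamiltonian for a torus action on the coadjoint orbit of $\bsB$, its critical points are indexed by permutations $\sigma\in S_n$ with values $\sum_i\lambda_i(\bsA)\lambda_{\sigma(i)}(\bsB)$, and the Hessian weights contribute $\prod_{i<j}(\lambda_i(\bsA)-\lambda_j(\bsA))(\lambda_{\sigma(i)}(\bsB)-\lambda_{\sigma(j)}(\bsB))$, so the localization sum over $\sigma$ reassembles into the ratio of the determinant and the two Vandermondes.

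The step I expect to be most delicate is the rigorous derivation of the radial heat-kernel formula — establishing the Vandermonde conjugation of the radial Laplacian and justifying the image-method antisymmetrization with the correct normalization — together with the bookkeeping of the overall constant $C_n$ and the analytic continuation from real $t>0$ to general $z\in\complex\setminus\set{0}$.
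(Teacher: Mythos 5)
The paper does not actually prove this proposition: it is imported verbatim from Itzykson--Z\"uber \cite{Itzykson1980} and used as a black box, so there is no in-paper argument to compare yours against. That said, your sketch is a faithful outline of the classical heat-kernel proof from that reference (with Duistermaat--Heckman localization correctly identified as the other standard route), and the key steps check out: the identity $\Delta f = V(x)^{-1}\sum_i\partial_{x_i}^2\bigl(V(x)\tilde f(x)\bigr)$ is exactly the $\beta=2$ radial Laplacian, the Karlin--McGregor antisymmetrization gives the radial kernel $c_n\,t^{-n/2}V(x)^{-1}V(y)^{-1}\det\bigl(e^{-(x_i-y_j)^2/2t}\bigr)$, whose power of $t$ combines with the flat normalization $(2\pi t)^{-n^2/2}$ to yield $z^{-\binom{n}{2}}$ because $n^2/2-n/2=\binom{n}{2}$, and the $\bsB\to 0$ normalization works since $\det\bigl(e^{z\lambda_i\mu_j}\bigr)\sim z^{\binom{n}{2}}V(\lambda)V(\mu)\big/\prod_{k=1}^{n}(k-1)!$, which forces $C_n=\prod_{k=1}^n\Gamma(k)$. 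The only substantive gaps are the ones you already flag as delicate --- rigorously identifying the $\bsU$-averaged flat kernel with the unique symmetric solution of the radial heat equation having the right $t\to 0^{+}$ initial data, and the final analytic continuation in $z$ (both sides are entire once the removable singularity at $z=0$ is noted, so agreement for $z=1/t$, $t>0$, suffices) --- so a complete write-up would need to fill those in, but the architecture is sound.
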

Any state $\rho\in \rD(\complex^n)$, the set of all quantum states
(pure or mixed)  on $\complex^n$, can be purified to a bipartite
pure state on $\complex^n\ot \complex^n$. The set of pure states on
$\complex^n\ot \complex^n$ can be represented as
$\Set{\bsU\ket{\Phi}: \bsU\in \U(\complex^n\ot \complex^n)}$ with
$\ket{\Phi} \in \complex^n\ot \complex^n$ any fixed pure state and
$\U(\complex^n\ot \complex^n)$ the full unitary group on
$\complex^n\ot \complex^n$, which is endowed with the standard Haar
measure. The induced measure $\dif\mu(\rho)$ on $\rD(\complex^n)$ is
derived from the above  Haar measure by taking partial trace over
$\complex^n$ of pure states on $\complex^n\ot \complex^n$.  By
spectral decomposition theorem, for generic $\rho\in
\rD(\complex^n)$, we have $\rho=\bsU
\diag(\lambda_1,\ldots,\lambda_n) \bsU^\dagger,$ where $0\leqslant
\lambda_1<\cdots<\lambda_n\leqslant \lambda_n, \sum _{j=1}^n\lambda
_j=1,$ and $\bsU\in\U(\complex^n)$. Let
$\boldsymbol{\lambda}=(\lambda_1,\lambda_2, \cdots, \lambda_n)$ and
$V(\boldsymbol{\lambda})=\prod_{1\leqslant i<j\leqslant
n}(\lambda_j-\lambda_i).$ The Haar-induced probability measure
$\dif\mu(\rho)$ on $\rD(\complex^n)$ can be factorized into  the
following product measure form
\begin{eqnarray*}
\dif\mu(\rho) = \dif\nu(\boldsymbol{\lambda})\times
\dif\mu_{\haar}(\bsU),
\end{eqnarray*}
where \cite{Zyczkowski2001jpa}, for $0<\lambda_1<\cdots<\lambda_n<
1$,
\begin{eqnarray*}
\dif\nu(\boldsymbol{\lambda}) = N_n\cdot\delta \Big
(1-\sum^n_{j=1}\lambda_j \Big ) V^2
(\lambda)\prod^n_{j=1}\dif\lambda_j
\end{eqnarray*}
is the Lebesgue measure supported on the simplex
$$
P_+ :=\Set{(\lambda_1, \lambda_2, \ldots, \lambda_n): 0<
\lambda_1<\lambda_2<\cdots <\lambda_n< 1, \sum_{j=1}^n\lambda_j=1},
$$
where
$$
N_n=\frac{\Gamma(n+1)\Gamma(n^2)}{\prod^{n-1}_{j=0}\Gamma(n-j+1)\Gamma(n-j)},
$$
and $\delta (\cdot )$ is the Dirac delta function. As usual,
$$
\Inner{\delta}{f}=\int_\real \delta(x)f(x)\dif x  =f(0),
$$
and $\delta_a(x)=\delta(x-a)$, $\Inner{\delta_a}{f} = f(a)$. Denote
the zero set of a function $g:\real\to\real$ as $Z(g)=\Set{x\in
\real:g(x)=0}.$ If $g:\real\to\real$ is a function with continuous
derivative such that $Z(g)\cap Z(g')=\emptyset$, then
\cite{lz2020ijtp}
\begin{eqnarray*}
\delta(g(x)) = \sum_{x\in Z(g)} \frac1{\abs{g'(x)}}\delta_x.
\end{eqnarray*}

\begin{prop}\label{prop2}
Let $\bsA$ be a non-degenerate positive matrix with eigenvalues
$\lambda_1(\bsA)<\cdots<\lambda_n(\bsA)$ and
$\lambda(\bsA)=(\lambda_1(\bsA),\ldots,\lambda_n(\bsA))$, then
\begin{eqnarray*}
\int_{\rD(\complex^n)} e^{-\mathrm{i}\alpha\Tr{\bsA\rho}} \dif\mu
(\rho) = \frac{C_n\cdot N_n}{(-\mathrm{i}\alpha)^{\binom{n}{2}}
V(\lambda(\bsA))}\int_{P_+}  \prod^n_{j=1} \dif\lambda_j
V(\lambda)\det ( e^{-\mathrm{i}\alpha\lambda_i(\bsA)\lambda_j} ).
\end{eqnarray*}
\end{prop}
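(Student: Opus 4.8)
The plan is to reduce the integral over $\rD(\complex^n)$ to a purely spectral integral by performing the unitary average first and then invoking the Harish-Chandra--Itzykson--Z\"{u}ber formula of Proposition~\ref{prop1}.

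First I would write a generic state in its spectral form $\rho=\bsU\diag(\boldsymbol\lambda)\bsU^\dagger$ and insert the measure factorization $\dif\mu(\rho)=\dif\nu(\boldsymbol\lambda)\times\dif\mu_{\haar}(\bsU)$ recalled above. Since $\Tr{\bsA\rho}=\Tr{\bsA\bsU\diag(\boldsymbol\lambda)\bsU^\dagger}$, the integrand is continuous and the total measure is finite, so Fubini's theorem permits interchanging the integrations and carrying out the $\bsU$-average for each fixed $\boldsymbol\lambda$:
\begin{equation*}
\int_{\rD(\complex^n)} e^{-\mathrm{i}\alpha\Tr{\bsA\rho}}\dif\mu(\rho)
=\int_{P_+}\Br{\int_{\U(\complex^n)} e^{-\mathrm{i}\alpha\Tr{\bsA\bsU\diag(\boldsymbol\lambda)\bsU^\dagger}}\dif\mu_{\haar}(\bsU)}\dif\nu(\boldsymbol\lambda).
\end{equation*}

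The second step applies Proposition~\ref{prop1} to the inner integral, taking $z=-\mathrm{i}\alpha$ and letting $\diag(\boldsymbol\lambda)$ play the role of $\bsB$, whose eigenvalues are precisely $\lambda_1<\cdots<\lambda_n$. On the open ordered simplex $P_+$ these are strictly distinct, and $\bsA$ is non-degenerate by hypothesis, so the HCIZ formula applies verbatim and the inner integral equals
\begin{equation*}
C_n\,\frac{\det\Pa{e^{-\mathrm{i}\alpha\lambda_i(\bsA)\lambda_j}}}{(-\mathrm{i}\alpha)^{\binom{n}{2}}V(\lambda(\bsA))\,V(\boldsymbol\lambda)}.
\end{equation*}

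Substituting this back and recalling $\dif\nu(\boldsymbol\lambda)=N_n\,\delta\Pa{1-\sum_j\lambda_j}V^2(\boldsymbol\lambda)\prod_j\dif\lambda_j$, the one decisive simplification is the Vandermonde cancellation $V^2(\boldsymbol\lambda)/V(\boldsymbol\lambda)=V(\boldsymbol\lambda)$; the delta function confines the integration to the probability simplex $P_+$, while the constants $C_n$, $N_n$, the power of $-\mathrm{i}\alpha$, and $V(\lambda(\bsA))^{-1}$ all factor out of the $\boldsymbol\lambda$-integral, producing exactly the claimed expression. I do not anticipate a serious obstacle: the computation is essentially a single substitution, and the only points that warrant an explicit line are the justification of the interchange of integrals (continuity of the integrand together with finiteness of $\mu$) and the remark that the strict ordering built into the definition of $P_+$ makes the distinct-eigenvalue hypothesis of Proposition~\ref{prop1} hold throughout the domain of integration, so that no degenerate-locus correction is required.
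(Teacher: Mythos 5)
Your proposal is correct and follows the same route as the paper's own proof: factorize $\dif\mu(\rho)=\dif\nu(\boldsymbol\lambda)\times\dif\mu_{\haar}(\bsU)$, apply the HCIZ formula of Proposition~\ref{prop1} with $z=-\mathrm{i}\alpha$ to the inner unitary average, and cancel one Vandermonde factor against $V^2(\boldsymbol\lambda)$ in $\dif\nu$. The added remarks on Fubini and on the strict ordering in $P_+$ guaranteeing distinct eigenvalues are sensible but not essentially different from what the paper does implicitly.
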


\begin{proof}
Since $\dif\mu(\rho) = \dif\nu(\boldsymbol{\lambda})\times
\dif\mu_{\haar}(\bsU)$, it follows from Proposition~\ref{prop1} that
\begin{eqnarray*}
\int_{\rD(\complex^n)} e^{-\mathrm{i}\alpha\Tr{\bsA\rho}} \dif\mu
(\rho) &=& \int_{P_+} \dif\nu(\boldsymbol{\lambda}) \int_{\U(n)}
\dif\mu_{\haar}(\bsU) e^{-\mathrm{i}\alpha\Tr{\bsA\bsU\Lambda
\bsU^\dagger}}   \notag\\
&=&C_n\int_{P_+}  \dif\nu(\boldsymbol{\lambda})
\frac{\det\Pa{e^{-\mathrm{i}\alpha\lambda_i(\bsA)\lambda_j}}}{(-\mathrm{i}\alpha)^{\binom{n}{2}}
V(\lambda(\bsA)) V(\boldsymbol{\lambda})}\\
&=&\frac{C_n}{(-\mathrm{i}\alpha)^{\binom{n}{2}}
V(\lambda(\bsA))}\int_{P_+} \dif\nu(\boldsymbol{\lambda})
\frac{\det \Pa{e^{-\mathrm{i}\alpha\lambda_i(\bsA)\lambda_j}}}{V(\boldsymbol{\lambda})} \\
&=& \frac{C_n\cdot
N_n}{(-\mathrm{i}\alpha)^{\binom{n}{2}}V(\lambda(\bsA))}\int_{P_+}
\prod^n_{j=1} \dif\lambda_j V(\boldsymbol{\lambda})\det\Pa{
e^{-\mathrm{i}\alpha\lambda_i(\bsA)\lambda_j}},
\end{eqnarray*}
which is the desired identity.
\end{proof}

Any qubit observable $\bsA$ can be parameterized as
\begin{equation*}\label{AA}
\bsA=a_0\I+\bsa\cdot\boldsymbol{\sigma},\qquad (a_0,\bsa)\in\real^4,
\end{equation*}
where $\I$ is the identity matrix on the qubit Hilbert space
$\complex^2$, and $\boldsymbol{\sigma}=(\sigma_1,\sigma_2,\sigma_3)$
is the vector of the standard Pauli matrices
\begin{eqnarray*}
\sigma_1=\Pa{\begin{array}{cc}
               0 & 1 \\
               1 & 0
             \end{array}
},\quad \sigma_2=\Pa{\begin{array}{cc}
               0 & -\mathrm{i} \\
               \mathrm{i} & 0
             \end{array}
},\quad \sigma_3=\Pa{\begin{array}{cc}
               1 & 0 \\
               0 & -1
             \end{array}
}.
\end{eqnarray*}
Without loss of generality, we assume that our qubit observable
$\bsA$ has simple eigenvalues
\begin{eqnarray*}\label{eq:2eig}
\lambda_k(\bsA)=a_0+(-1)^k\abs{\bsa}, \quad k=1,2
\end{eqnarray*}
with $\abs{\bsa}=\sqrt{a_1^2+a_2^2+a_3^2}>0$ being the length of
vector $\bsa =(a_1,a_2,a_3)\in \real^3$.

\begin{prop}\label{prop3}
For the qubit observable  $\bsA$ as above,  we have
\begin{eqnarray*}
\int_{\rD(\complex^2)} e^{-\mathrm{i}\Tr{\bsA\rho}}\dif\mu(\rho) =
3e^{-\mathrm{i}a_0}\frac{\sin(\abs{\bsa})-\abs{\bsa}\cos(\abs{\bsa})}{\abs{\bsa}^3}.
\end{eqnarray*}
\end{prop}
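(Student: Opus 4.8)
The plan is to specialize Proposition~\ref{prop2} to $n=2$ and $\alpha=1$, and then evaluate the resulting one-parameter integral by hand. First I would record the relevant constants. From $C_n=\prod_{k=1}^n\Gamma(k)$ we get $C_2=\Gamma(1)\Gamma(2)=1$, and evaluating $N_n$ at $n=2$ gives $N_2=\Gamma(3)\Gamma(4)/\Br{\Gamma(3)\Gamma(2)\cdot\Gamma(2)\Gamma(1)}=12/2=6$. Since $\binom{2}{2}=1$ and the eigenvalues are $\lambda_1(\bsA)=a_0-\abs{\bsa}$, $\lambda_2(\bsA)=a_0+\abs{\bsa}$, the Vandermonde factor of the observable is $V(\lambda(\bsA))=\lambda_2(\bsA)-\lambda_1(\bsA)=2\abs{\bsa}$. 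Substituting these into Proposition~\ref{prop2} collapses the prefactor to $C_2 N_2/\Pa{(-\mathrm{i})\,2\abs{\bsa}}=3\mathrm{i}/\abs{\bsa}$.

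Next I would reduce the simplex integral to one dimension. On $P_+$ for $n=2$ the constraint $\lambda_1+\lambda_2=1$ lets me set $\lambda_2=1-\lambda_1$ and integrate over $\lambda_1=:t\in(0,\tfrac12)$, with $V(\boldsymbol\lambda)=\lambda_2-\lambda_1=1-2t$. I would then expand the $2\times2$ determinant $\det\Pa{e^{-\mathrm{i}\lambda_i(\bsA)\lambda_j}}=e^{-\mathrm{i}\br{\lambda_1(\bsA)t+\lambda_2(\bsA)(1-t)}}-e^{-\mathrm{i}\br{\lambda_1(\bsA)(1-t)+\lambda_2(\bsA)t}}$. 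Using $\lambda_{1,2}(\bsA)=a_0\mp\abs{\bsa}$, both exponents split off a common $e^{-\mathrm{i}a_0}$, and the remaining two terms differ only in the sign of $\abs{\bsa}(1-2t)$, so the determinant simplifies to $-2\mathrm{i}\,e^{-\mathrm{i}a_0}\sin\!\Pa{\abs{\bsa}(1-2t)}$.

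The substitution $s=1-2t$ then turns the remaining integral into $\int_0^1 s\sin(\abs{\bsa}s)\,\dif s$, which integration by parts evaluates to $\Pa{\sin\abs{\bsa}-\abs{\bsa}\cos\abs{\bsa}}/\abs{\bsa}^2$. Collecting this with the prefactor $3\mathrm{i}/\abs{\bsa}$ and the factor $-\mathrm{i}$ picked up along the way (the product $\mathrm{i}\cdot(-\mathrm{i})=1$ restores a real amplitude) yields exactly $3e^{-\mathrm{i}a_0}\Pa{\sin\abs{\bsa}-\abs{\bsa}\cos\abs{\bsa}}/\abs{\bsa}^3$, as claimed.

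The computation is essentially routine once the set-up is in place; the only points demanding care are the bookkeeping of the constant $N_2$ and of the several factors of $\mathrm{i}$, which must cancel to leave a real function of $\abs{\bsa}$ multiplying $e^{-\mathrm{i}a_0}$, and the correct handling of the Dirac delta when passing from the two-variable simplex to the single integration variable $t\in(0,\tfrac12)$. These are the places where a sign or normalization slip would most easily occur, but no genuine analytic obstacle arises.
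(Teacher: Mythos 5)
Your proposal is correct and follows essentially the same route as the paper: specialize Proposition~\ref{prop2} to $n=2$ with $C_2=1$, $N_2=6$, $V(\lambda(\bsA))=2\abs{\bsa}$, use the delta constraint to reduce to a single integral over $\lambda_1\in(0,\tfrac12)$, expand the $2\times2$ determinant into $-2\mathrm{i}\,e^{-\mathrm{i}a_0}\sin\Pa{\abs{\bsa}(1-2\lambda_1)}$, and evaluate by parts. The only caveat is a harmless bookkeeping wobble: the substitution $s=1-2t$ carries a Jacobian factor $\tfrac12$, so the remaining integral is $\tfrac12\int_0^1 s\sin(\abs{\bsa}s)\,\dif s$ rather than $\int_0^1 s\sin(\abs{\bsa}s)\,\dif s$, which is exactly compensated when you quote the accumulated prefactor as $-\mathrm{i}$ instead of $-2\mathrm{i}$, so the final answer is unaffected.
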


\begin{proof}
From Proposition~\ref{prop2}, we have
\begin{eqnarray*}
\int_{\rD(\complex^2)}e^{-\mathrm{i}\Tr{\bsA\rho}}\dif\mu(\rho) &=&
\frac{C_2\cdot N_2}{(-\mathrm{i})V(\lambda(\bsA))}\int
\prod^2_{j=1}\dif\lambda_j\delta \Big (1-\sum^2_{j=1}\lambda_j\Big )
V(\boldsymbol{\lambda})\det (e^{-\mathrm{i}\alpha\lambda_i(\bsA)\lambda_j})\\
&=&
\frac{3\mathrm{i}}{\abs{\bsa}}\int^{\frac12}_0\dif\lambda_1(1-2\lambda_1)
\abs{\begin{array}{cc}
                                e^{-\mathrm{i}(a_0-\abs{\bsa})\lambda_1} & e^{-\mathrm{i}(a_0-\abs{\bsa})(1-\lambda_1)} \\
                                e^{-\mathrm{i}(a_0+\abs{\bsa})\lambda_1} & e^{-\mathrm{i}(a_0+\abs{\bsa})(1-\lambda_1)}
                              \end{array}
}\\
&=&3e^{-\mathrm{i}a_0}\frac{\sin(\abs{\bsa})-\abs{\bsa}\cos(\abs{\bsa})}{\abs{\bsa}^3}.
\end{eqnarray*}
This completes the proof.
\end{proof}

Now we derive the probability density functions of uncertainties of
observables. We first derive the probability density of the mean
value for an observable. Let $\bsA$ be a non-degenerate Hermitian
matrix with eigenvalues $\lambda_1(\bsA)<\cdots<\lambda_n(\bsA)$ and
$\lambda(\bsA)=(\lambda_1(\bsA),\ldots,\lambda_n(\bsA))$. The
probability density function of the mean value
$\langle\bsA\rangle_\rho =\Tr{\bsA\rho}$ is defined as
\begin{eqnarray*}
f_{\langle\bsA\rangle }(r) =\int_{\rD(\complex^n)}
\delta(r-\langle\bsA\rangle_\rho)\dif\mu(\rho).
\end{eqnarray*}
By using the integral representation of the Dirac delta function,
$\delta(r)=\frac1{2\pi}\int \dif\alpha e^{\mathrm{i}r\alpha},$ we
have
\begin{eqnarray*}
f_{\langle\bsA\rangle }(r) =\frac1{2\pi}\int _{\real}   \dif\alpha
e^{\mathrm{i}r\alpha} \int _{\rD(\complex^2)}  \dif\mu(\rho)
e^{-\mathrm{i}\alpha\Tr{\bsA\rho}}.
\end{eqnarray*}
By combining Proposition~\ref{prop1} and Proposition~\ref{prop2}, we
have
\begin{eqnarray*}\label{eq:L-transform}
f_{\langle\bsA\rangle }(r) &=& \frac{C_n\cdot N_n}{2\pi
V(\lambda(\bsA))}\int_{\real}
(-\mathrm{i}\alpha)^{-\binom{n}{2}}e^{\mathrm{i}r\alpha}I_n(\alpha)\dif\alpha,
\end{eqnarray*}
where
\begin{eqnarray*}
I_n(\alpha)=\int_{P_+} \prod^n_{j=1} \dif\lambda_j
V(\boldsymbol{\lambda})\det
(e^{-\mathrm{i}\alpha\lambda_i(\bsA)\lambda_j}).
\end{eqnarray*}
Since  the integration is over the simplex $P_+$, which can be
represented as
\begin{eqnarray*}
\begin{cases}
0<\lambda_1<\frac1n\\
\lambda_k<\lambda_{k+1}<\frac{1-(\lambda_1+\cdots+\lambda_k)}{n-k},\qquad k=1,\cdots,n-2,\\
\lambda_n=1-(\lambda_1+\cdots+\lambda_{n-1})
\end{cases}
\end{eqnarray*}
using the last identity to replace $\lambda_n$ in the integrand
$V(\boldsymbol{\lambda})\det\Pa{e^{-\mathrm{i}\alpha\lambda_i(\bsA)\lambda_j}}=\det\Pa{\sum^n_{k=1}\lambda^{i-1}_ke^{-\mathrm{i}\alpha\lambda_k\lambda_j(\bsA)}}$,
the integral $I_n(\alpha)$ is reduced to
\begin{eqnarray*}
I_n(\alpha)=\int^{\frac1n}_0\dif\lambda_1\int^{\frac{1-\lambda_1}{n-1}}_{\lambda_1}\dif\lambda_2\cdots
\int^{\frac{1-(\lambda_1+\cdots+\lambda_{n-2})}2}_{\lambda_{n-2}}\dif\lambda_{n-1}\det
\Big (
\sum^n_{k=1}\lambda^{i-1}_ke^{-\mathrm{i}\alpha\lambda_k\lambda_j(\bsA)}\Big
).
\end{eqnarray*}

\begin{prop}\label{prop4}
For a given qubit observable $\bsA$ with simple spectrum
$\lambda(\bsA)=(\lambda_1(\bsA),\lambda_2(\bsA)) $ with
$\lambda_1(\bsA)<\lambda_2(\bsA)),$ the probability distribution
density of $\langle\bsA\rangle_\rho=\Tr{\bsA\rho}$, where $\rho$ is
resulted from partially tracing over a subsystem $\complex^2$ of a
Haar-distributed random pure state on $\complex^2\ot\complex^2$, is
given by
\begin{eqnarray*}
f_{\langle\bsA\rangle}(r)=\frac{3!}{V^3
(\lambda(\bsA))}(r-\lambda_1(\bsA))(\lambda_2(\bsA)-r)\Pa{H(r-\lambda_1(\bsA))-H(r-\lambda_2(\bsA))},
\end{eqnarray*}
where $H$ is the  Heaviside function. The support of
$f_{\langle\bsA\rangle}(r)$ is the closed interval
$[\lambda_1(\bsA),\lambda_2(\bsA)]$.
\end{prop}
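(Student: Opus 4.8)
The plan is to continue directly from the inverse-transform representation established just above the statement, specialised to $n=2$. First I would record the constants: $\binom{2}{2}=1$, $C_2=\Gamma(1)\Gamma(2)=1$, and $N_2=\Gamma(3)\Gamma(4)/\Pa{\Gamma(3)\Gamma(2)\Gamma(2)\Gamma(1)}=6=3!$, so the prefactor $C_2N_2/\Pa{2\pi V(\lambda(\bsA))}$ collapses to $3/\Pa{\pi V(\lambda(\bsA))}$. For $n=2$ the simplex $P_+$ is one-dimensional: $\lambda_1\in(0,\tfrac12)$ with $\lambda_2=1-\lambda_1$ and $V(\boldsymbol{\lambda})=1-2\lambda_1$, so $I_2(\alpha)$ reduces to a single elementary integral over $\lambda_1$ of $(1-2\lambda_1)$ times the $2\times2$ determinant $e^{-\mathrm{i}\alpha(\lambda_1(\bsA)\lambda_1+\lambda_2(\bsA)\lambda_2)}-e^{-\mathrm{i}\alpha(\lambda_1(\bsA)\lambda_2+\lambda_2(\bsA)\lambda_1)}$.

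Writing $a=\lambda_1(\bsA)$, $b=\lambda_2(\bsA)$, $c=V(\lambda(\bsA))=b-a$, each exponent is affine in $\lambda_1$, so the $\lambda_1$-integral is a standard $\int_0^{1/2}(1-2\ell)e^{\pm\mathrm{i}\alpha c\ell}\dif\ell$. Carrying this out and collecting terms, I expect the half-integer phases $e^{\pm\mathrm{i}\alpha c/2}$ to cancel, leaving the clean closed form
\[
I_2(\alpha)=\frac{\mathrm{i}}{\alpha c}\Pa{e^{-\mathrm{i}\alpha a}+e^{-\mathrm{i}\alpha b}}+\frac{2}{\alpha^2 c^2}\Pa{e^{-\mathrm{i}\alpha b}-e^{-\mathrm{i}\alpha a}}.
\]
(Equivalently one can read $\int e^{-\mathrm{i}\alpha\Tr{\bsA\rho}}\dif\mu(\rho)$ off Proposition~\ref{prop3} via the rescaling $\bsA\mapsto\alpha\bsA$ and invert that characteristic function; the two routes agree.)

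Then I would evaluate $f_{\langle\bsA\rangle}(r)=\frac{3}{\pi c}\int_\real(-\mathrm{i}\alpha)^{-1}e^{\mathrm{i}r\alpha}I_2(\alpha)\dif\alpha$ by recognising inverse Fourier transforms. Multiplying through, the integrand is a combination of $\alpha^{-2}e^{\mathrm{i}\alpha(r-a)}$, $\alpha^{-2}e^{\mathrm{i}\alpha(r-b)}$ and $\mathrm{i}\alpha^{-3}e^{\mathrm{i}\alpha(r-b)}$, $\mathrm{i}\alpha^{-3}e^{\mathrm{i}\alpha(r-a)}$. With the ramp notation $s_+:=sH(s)$ I would use the distributional pairs $\frac1{2\pi}\int_\real\alpha^{-2}e^{\mathrm{i}\alpha s}\dif\alpha=-s_+$ and $\frac1{2\pi}\int_\real\mathrm{i}\alpha^{-3}e^{\mathrm{i}\alpha s}\dif\alpha=\tfrac12 s_+^2$ (both follow by differentiating $s_+$ and $\tfrac12 s_+^2$ down to $\delta$). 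Assembling and simplifying region by region then gives $f_{\langle\bsA\rangle}(r)=0$ for $r<a$ and $r>b$ and $f_{\langle\bsA\rangle}(r)=\frac{3!}{c^3}(r-a)(b-r)$ for $a<r<b$, which is exactly the claimed $\frac{3!}{V^3(\lambda(\bsA))}(r-\lambda_1(\bsA))(\lambda_2(\bsA)-r)$ packaged with $H(r-\lambda_1(\bsA))-H(r-\lambda_2(\bsA))$; the support is therefore $[\lambda_1(\bsA),\lambda_2(\bsA)]$.

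The main obstacle is the treatment of the $\alpha=0$ pole: individually the $\alpha^{-2}$ and $\alpha^{-3}$ pieces are non-integrable there, and the distributional inversions are defined only up to polynomial (homogeneous) ambiguities. The key point I would verify explicitly is that the singular $\alpha^{-2}$ and $\alpha^{-1}$ parts of $(-\mathrm{i}\alpha)^{-1}I_2(\alpha)$ cancel once the exponentials are Taylor-expanded about $\alpha=0$ — they must, since $f_{\langle\bsA\rangle}$ is a genuine bounded density — so the ambiguous $\delta$- and $\delta'$-at-the-origin contributions drop out and the region-wise assembly is unambiguous. As an independent sanity check I would note the purely geometric derivation: the purification-induced measure on $\density{\complex^2}$ is the Hilbert--Schmidt measure, i.e.\ the uniform measure on the Bloch ball, so writing $\langle\bsA\rangle_\rho=a_0+\bsa\cdot\br$ with $\br$ uniform in the unit ball reduces the problem to the one-dimensional marginal of a ball-uniform coordinate, whose density is $\tfrac34(1-t^2)$ on $[-1,1]$; the change of variables $r=a_0+\abs{\bsa}t$ reproduces the same parabolic arch and confirms the normalisation.
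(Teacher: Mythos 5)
Your proposal is correct and follows essentially the same route as the paper: both specialize the inverse-Fourier representation to $n=2$, arrive at the same closed form for $I_2(\alpha)$ (your expression agrees with the paper's after expanding its sum over $k$), and then invert the transform — the paper by symmetrizing $\varphi(\alpha)+\varphi(-\alpha)$ and integrating over $[0,\infty)$, you by the distributional pairs for $\alpha^{-2}$ and $i\alpha^{-3}$, with the cancellation of the $\alpha=0$ singularities that you flag being precisely the point the paper leaves implicit (the full integrand is a bounded characteristic function divided by nothing singular, so the region-wise assembly is indeed unambiguous, as your vanishing for $r>\lambda_2(\bsA)$ confirms). Your Bloch-ball cross-check (the induced measure is uniform on the ball, whose coordinate marginal is $\tfrac34(1-t^2)$) is a valid independent confirmation not present in the paper, but it plays no role in the main argument.
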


\begin{proof}
In particular,  when $n=2$, we have $C_2=1$ and $N_2=6$, therefore
\begin{eqnarray*}
I_2(\alpha)&=& \int^{\frac12}_0\dif\lambda_1 (1-2\lambda_1)
\abs{\begin{array}{cc}
                                e^{-\mathrm{i}\alpha\lambda_1(\bsA)\lambda_1} & e^{-\mathrm{i}\alpha\lambda_1(\bsA)(1-\lambda_1)} \\
                                e^{-\mathrm{i}\alpha\lambda_2(\bsA)\lambda_1} & e^{-\mathrm{i}\alpha\lambda_2(\bsA)(1-\lambda_1)}
                              \end{array}
}\notag\\
&=&\frac1{(\lambda_2(\bsA)-\lambda_1(\bsA))^2
\alpha^2}\sum^2_{k=1}e^{-\mathrm{i}\lambda_k(\bsA)\alpha}
(\mathrm{i}(\lambda_2(\bsA) -\lambda_1(\bsA))\alpha+(-1)^k2 ),
\end{eqnarray*}
and
\begin{eqnarray*}
f_{\langle\bsA\rangle }(r) = \frac{3!}{2\pi V^3(\lambda(\bsA))}\int
_{\real} \dif\alpha
\alpha^{-3}e^{\mathrm{i}r\alpha}\sum^2_{k=1}e^{-\mathrm{i}\lambda_k(\bsA)\alpha}
((\lambda_1(\bsA) -\lambda_2(\bsA))\alpha+(-1)^k2\mathrm{i}) .
\end{eqnarray*}
Let
$\varphi(\alpha)=\alpha^{-3}e^{\mathrm{i}r\alpha}\sum^2_{k=1}e^{-\mathrm{i}\lambda_k(\bsA)\alpha}
((\lambda_1(\bsA), -\lambda_2(\bsA))\alpha+(-1)^k2\mathrm{i})$ and
$H(\cdot )$ be the  Heaviside function, then
\begin{eqnarray*}
\int_{\real} \varphi(\alpha)\dif\alpha=\int^\infty_0
(\varphi(\alpha)+\varphi(-\alpha))\dif\alpha =
2\pi(\lambda_2(\bsA)-r)(r-\lambda_1(\bsA))\Pa{H(r-\lambda_1(\bsA))-H(r-\lambda_2(\bsA))},
\end{eqnarray*}
and we come to the result.
\end{proof}

\begin{prop}\label{prop5}
For any  qubit observable $\bsA=a_0\I+\bsa\cdot\boldsymbol{\sigma},\
(a_0,\bsa)\in\real^4,$ the probability distribution density of the
uncertainty $\Delta_\rho \bsA$, where $\rho$ is resulted from
partially tracing over a subsystem $\complex^2$ of a
Haar-distributed random pure state on $\complex^2\ot\complex^2$, is
given by
\begin{eqnarray*}
 f_{\Delta\bsA}(x) = \frac{3x^3}{2\abs{\bsa}^3\sqrt{\abs{\bsa}^2-x^2}}.
\end{eqnarray*}
\end{prop}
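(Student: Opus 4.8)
The plan is to reduce the problem to Proposition~\ref{prop4} by exploiting the fact that, for a single qubit observable acting on a qubit state, the uncertainty $\Delta_\rho\bsA$ is a \emph{deterministic} function of the mean value $\langle\bsA\rangle_\rho$. Writing a generic qubit state in Bloch form $\rho=\tfrac12(\I+\bsr\cdot\boldsymbol{\sigma})$ with $\abs{\bsr}\leqslant1$, and using the identity $(\bsa\cdot\boldsymbol{\sigma})^2=\abs{\bsa}^2\I$, a direct computation gives $\langle\bsA\rangle_\rho=a_0+\bsa\cdot\bsr$ and $\langle\bsA^2\rangle_\rho=a_0^2+\abs{\bsa}^2+2a_0\,\bsa\cdot\bsr$. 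Subtracting, the $a_0$-dependence cancels and I obtain the key identity
\begin{eqnarray*}
(\Delta_\rho\bsA)^2=\abs{\bsa}^2-(\bsa\cdot\bsr)^2=\abs{\bsa}^2-(\langle\bsA\rangle_\rho-a_0)^2,
\end{eqnarray*}
so that $\Delta_\rho\bsA=\sqrt{\abs{\bsa}^2-(\langle\bsA\rangle_\rho-a_0)^2}$ depends on $\rho$ only through the scalar $\langle\bsA\rangle_\rho$.

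Next I would feed the distribution of the mean from Proposition~\ref{prop4} into this relation. Specializing to the simple spectrum $\lambda_1(\bsA)=a_0-\abs{\bsa}$, $\lambda_2(\bsA)=a_0+\abs{\bsa}$, so that $V(\lambda(\bsA))=2\abs{\bsa}$, Proposition~\ref{prop4} yields the density of $m:=\langle\bsA\rangle_\rho$ as $\tfrac{6}{8\abs{\bsa}^3}(m-a_0+\abs{\bsa})(a_0+\abs{\bsa}-m)$ on $[a_0-\abs{\bsa},a_0+\abs{\bsa}]$. Shifting by $t=m-a_0\in[-\abs{\bsa},\abs{\bsa}]$, the centered mean $t=\bsa\cdot\bsr$ then carries the even density $\tfrac{3}{4\abs{\bsa}^3}(\abs{\bsa}^2-t^2)$.

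Finally I would transport this density through the map $t\mapsto x=\sqrt{\abs{\bsa}^2-t^2}$, which carries $[-\abs{\bsa},\abs{\bsa}]$ onto $[0,\abs{\bsa}]$ two-to-one with preimages $t=\pm\sqrt{\abs{\bsa}^2-x^2}$. At each preimage one has $\abs{\bsa}^2-t^2=x^2$ and Jacobian $\abs{\dif t/\dif x}=x/\sqrt{\abs{\bsa}^2-x^2}$, so summing the two symmetric branches gives
\begin{eqnarray*}
f_{\Delta\bsA}(x)=2\cdot\frac{3}{4\abs{\bsa}^3}\,x^2\cdot\frac{x}{\sqrt{\abs{\bsa}^2-x^2}}=\frac{3x^3}{2\abs{\bsa}^3\sqrt{\abs{\bsa}^2-x^2}},
\end{eqnarray*}
which is the asserted density (and one checks it integrates to $1$ on $[0,\abs{\bsa}]$ via the substitution $x=\abs{\bsa}\sin\phi$).

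The only genuinely conceptual step is the first one: recognizing that for one qubit observable the variance is slaved to the mean, which collapses the whole problem onto the already-computed one-dimensional law of Proposition~\ref{prop4}. After that the work is a routine transformation of densities, where the one thing to watch is the factor of $2$ coming from the two-to-one nature of $t\mapsto\sqrt{\abs{\bsa}^2-t^2}$.
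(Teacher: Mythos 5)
Your proposal is correct and follows essentially the same route as the paper: both reduce the problem to Proposition~\ref{prop4} by observing that for a qubit the variance is a deterministic function of the mean, $(\Delta_\rho\bsA)^2=\abs{\bsa}^2-(\langle\bsA\rangle_\rho-a_0)^2$ (the paper derives this from $\bsA^2=\Tr{\bsA}\bsA-\det(\bsA)\I$ rather than from the Bloch form), and both account for the two-to-one preimages $r_\pm(x)=a_0\pm\sqrt{\abs{\bsa}^2-x^2}$. The only difference is bookkeeping: you push the density forward with a Jacobian, while the paper does the equivalent computation with Dirac-delta composition rules; the results agree.
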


\begin{proof}
From $\delta(r^2-r^2_0)=\frac1{2\abs{r_0}} (
\delta(r-r_0)+\delta(r+r_0) ) $ we conclude that
\begin{eqnarray*}
\delta(x^2-(\Delta_\rho \bsA )^2) = \frac1{2x} (
\delta(x+\Delta_\rho \bsA)+\delta(x-\Delta_\rho \bsA))
=\frac1{2x}\delta(x-\Delta_\rho \bsA), \qquad x\geqslant 0.
\end{eqnarray*}
For any complex $2\times 2$ matrix $\bsA$,
\begin{eqnarray*}
\bsA^2=\Tr{\bsA}\bsA-\det(\bsA)\I ,\qquad \det(\bsA) = \frac{ ({\rm
Tr} \bsA )^2-\Tr{\bsA^2}}2,
\end{eqnarray*}
and thus
\begin{eqnarray*}
\delta\Pa{x^2-(\Delta _\rho \bsA)^2 } &=& \delta\Pa{x^2+\det(\bsA) -
\Tr{\bsA}\langle\bsA\rangle_\rho + \langle\bsA\rangle_\rho^2}.
\end{eqnarray*}
Consequently,
\begin{eqnarray*}
f_{\Delta\bsA}(x) & =& \int_{\rD(\complex^2)}\dif\mu(\rho)\delta(x-\Delta_\rho \bsA) \\
 &=&  2x\int_{\rD(\complex^2)} \dif\mu(\rho) \delta\Pa{x^2- (\Delta _\rho
\bsA)^2 }\\
&=&2x\int_{\real} \dif r\delta\Pa{\Pa{x^2+\det(\bsA)} - \Tr{\bsA}r
+r^2}\int _{\rD(\complex^2)} \dif\mu(\rho) \delta(r-\langle\bsA\rangle_\rho)\\
&=& 2x\int_{\real} \dif
rf_{\langle\bsA\rangle}(r)\delta\Pa{x^2+\det(\bsA) - \Tr{\bsA}r
+r^2} \\
 &=& 2x\int^{\lambda_2(\bsA)}_{\lambda_1(\bsA)}\dif
rf_{\langle\bsA\rangle}(r)
\delta\Pa{x^2+\lambda_1(\bsA)\lambda_2(\bsA) -
(\lambda_1(\bsA)+\lambda_2(\bsA))r +r^2}\\
&=&\frac{12x}{V^3(\lambda(\bsA))
}\int^{\lambda_2(\bsA)}_{\lambda_1(\bsA)}\dif
r(r-\lambda_1(\bsA))(\lambda_2(\bsA)-r) \delta\Pa{x^2 -
(r-\lambda_1(\bsA))(\lambda_2(\bsA)-r)},
\end{eqnarray*}
where we used Proposition~\ref{prop4} in the last equality. For
$\bsA=a_0\I+\bsa\cdot\boldsymbol{\sigma},$ we have
$V(\lambda(\bsA))=2\abs{\bsa}$. For any fixed $x$, let
$g_x(r)=x^2-(r-\lambda_1(\bsA))(\lambda_2(\bsA)-r),$ then $g'_x(r)
=\partial_rg_x(r)=2r-\lambda_1(\bsA)-\lambda_2(\bsA)$. For fixed
$x$, the equation $g_x(r)=0$ has two distinct roots
\begin{eqnarray*}
r_\pm(x)=\frac{\lambda_1(\bsA)+\lambda_2(\bsA)\pm\sqrt{V^2
(\lambda(\bsA))-4x^2}}2=a_0\pm\sqrt {\abs{\bsa}^2-x^2}
\end{eqnarray*}
in $[\lambda_1(\bsA),\lambda_2(\bsA)]$ if and only if $x\in[0,
V(\lambda(\bsA))/2)$. In this case
\begin{eqnarray*}
\delta\Pa{g_x(r)}=\frac1{\abs{g'_x(r_+(x))}}\delta_{r_+(x)}+\frac1{
\abs{g'_x(r_-(x))}}\delta_{r_-(x)},
\end{eqnarray*}
which implies  that
\begin{eqnarray*}
f_{\Delta \bsA}(x) &=&
\frac{12x}{V^3 (\lambda(\bsA))}\Pa{\frac{(r_+(x) -\lambda_1(\bsA))(\lambda_2(\bsA)- r_+(x))}{2r_+(x) -\lambda_1(\bsA)-\lambda_2(\bsA)}+\frac{(r_-(x) -\lambda_1(\bsA))(\lambda_2(\bsA)- r_-(x))}{\lambda_1(\bsA)+\lambda_2(\bsA)-2 r_-(x)}}\\
&=& \frac{24x^3}{V^3(\lambda(\bsA)) \sqrt{V(\lambda(\bsA))^2-4x^2}}=
\frac{3x^3}{2\abs{\bsa}^3\sqrt{\abs{\bsa}^2-x^2}}.
\end{eqnarray*}
This completes the proof.
\end{proof}

\begin{prop}\label{prop6}
Let $J_0(z)= \frac1\pi\int^\pi_0\cos(z\cos\theta)\dif\theta$ be the
Bessel function of first kind. Then we have the following identity:
\begin{eqnarray*}
\int^\infty_0 \frac{\sin q-q\cos q}{q^2}J_0(\lambda q)\dif
q=\begin{cases}\sqrt{1-\lambda^2},&\text{if
}\abs{\lambda} < 1;\\
0,&\text{if }\abs{\lambda} \geqslant 1.
\end{cases}
\end{eqnarray*}
\end{prop}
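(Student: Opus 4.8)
The plan is to collapse the oscillatory kernel into a one-parameter family of sines and then trade the order of integration against a classical discontinuous Bessel integral. The starting observation is the exact identity
\[
\frac{\sin q - q\cos q}{q^2} = \int^1_0 t\sin(qt)\,\dif t,
\]
obtained from a single integration by parts in $t$. Substituting this into the integral of interest and interchanging the $q$- and $t$-integrations (to be justified below) gives
\[
\int^\infty_0 \frac{\sin q - q\cos q}{q^2}J_0(\lambda q)\,\dif q
= \int^1_0 t\Pa{\int^\infty_0 J_0(\lambda q)\sin(tq)\,\dif q}\dif t.
\]

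The inner integral is the classical discontinuous Weber integral, which I would quote from standard tables (or recover as the $\epsilon\to0^+$ limit of the Laplace transform $\int^\infty_0 e^{-\epsilon q}J_0(\lambda q)\,\dif q=(\epsilon^2+\lambda^2)^{-1/2}$ analytically continued in the frequency):
\[
\int^\infty_0 J_0(aq)\sin(bq)\,\dif q
=\begin{cases}
\dfrac{1}{\sqrt{b^2-a^2}}, & 0\leqslant a<b,\\[2mm]
0, & 0<b<a.
\end{cases}
\]
Since $J_0$ is even we may assume $\lambda\geqslant0$; taking $a=\lambda$ and $b=t\in[0,1]$, the inner integral is supported on $t>\lambda$. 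Hence for $\abs{\lambda}<1$,
\[
\int^1_\lambda \frac{t}{\sqrt{t^2-\lambda^2}}\,\dif t
= \Br{\sqrt{t^2-\lambda^2}}^{t=1}_{t=\lambda}
= \sqrt{1-\lambda^2},
\]
whereas for $\abs{\lambda}\geqslant1$ the whole range $t\in[0,1]$ satisfies $t\leqslant1\leqslant\lambda$, so the inner integral vanishes identically and the result is $0$. This reproduces the two claimed cases.

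The genuine obstacle is \emph{justifying the interchange}, not the algebra. The left-hand integrand decays like $q^{-3/2}$ at infinity, so the original integral converges absolutely; however, the inner Weber integral is only conditionally convergent, because $J_0(\lambda q)\sin(tq)$ decays merely like $q^{-1/2}$, and $\int^\infty_0\abs{J_0(\lambda q)\sin(tq)}\,\dif q=\infty$. Thus Tonelli does not apply directly and a naive application of Fubini is illegitimate. I would resolve this by inserting a convergence factor $e^{-\epsilon q}$ with $\epsilon>0$: for each fixed $\epsilon$ the double integral is absolutely convergent, Fubini applies cleanly, and the regularized inner integral $\int^\infty_0 e^{-\epsilon q}J_0(\lambda q)\sin(tq)\,\dif q$ has an elementary closed form converging pointwise in $t$ to the Weber values as $\epsilon\to0^+$. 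Passing to the limit in $\epsilon$ and then controlling the outer $t$-integral by dominated convergence yields the formula above. Establishing an integrable dominating function for the $t$-integral, uniformly in $\epsilon$ near the threshold $t=\lambda$ where the kernel develops the integrable singularity $(t^2-\lambda^2)^{-1/2}$, is the one step that requires genuine care.
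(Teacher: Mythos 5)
Your proof is correct, but it takes a genuinely different route from the paper's. The paper integrates by parts once in $q$, using $\frac{\sin q - q\cos q}{q^2}\,\dif q = -\dif\bigl(\frac{\sin q}{q}\bigr)$ to move the derivative onto the Bessel factor; the boundary term contributes $1$ and the remaining integral is evaluated by quoting the tabulated discontinuous integral $\int_0^\infty \frac{\sin q}{q}J_1(\lambda q)\,\dif q$, which equals $(1-\sqrt{1-\lambda^2})/\lambda$ for $\lambda\in(0,1)$ and $1/\lambda$ for $\lambda\geqslant 1$. You instead unfold the kernel as $\int_0^1 t\sin(qt)\,\dif t$, exchange the order of integration, and invoke Weber's discontinuous factor $\int_0^\infty J_0(\lambda q)\sin(tq)\,\dif q$, after which the $t$-integral $\int_\lambda^1 t(t^2-\lambda^2)^{-1/2}\,\dif t=\sqrt{1-\lambda^2}$ is elementary. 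Both arguments ultimately lean on a classical conditionally convergent Bessel integral quoted without proof; yours has the advantage that the quoted ingredient is the most standard of the discontinuous integrals and the residual computation is trivial, while the paper's has the advantage of sidestepping any Fubini issue entirely (at the cost of needing the less familiar $\sin q\cdot J_1$ table entry). Your identification of the interchange of integrals as the one non-routine step is accurate, and the Abel-regularization scheme you sketch (insert $e^{-\epsilon q}$, apply Fubini for fixed $\epsilon>0$, pass to the limit with dominated convergence, watching the integrable singularity at $t=\lambda$) is the standard and correct way to close that gap; the paper simply does not confront this issue because its integration-by-parts route never produces a divergent inner integral.
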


\begin{proof}
Denote
\begin{eqnarray*}
\Phi(\lambda) = \int^\infty_0 \frac{\sin q-q\cos q}{q^2}J_0(\lambda
q)\dif q\quad(\forall\lambda\in\real).
\end{eqnarray*}
Clearly,  $\Phi(\lambda)$ is even and
\begin{eqnarray*}
\Phi(0) = \int^\infty_0 \frac{\sin q-q\cos q}{q^2}\dif \tau =
-\frac{\sin q}{q}\Big|^\infty_0 = 1.
\end{eqnarray*}
Without loss of generality, we assume that $\lambda>0,$ then
\begin{eqnarray*}
\Phi(\lambda) &=& -\int^\infty_0\dif \Pa{\frac{\sin q}{q}}
J_0(\lambda q) \\
&=& -J_0(\lambda q)\frac{\sin q}{q}\Big|^\infty_0+ \int^\infty_0
\frac{\sin q}{q}\dif(J_0(\lambda q))\\
&=& 1 - \lambda\int^\infty_0 \frac{\sin q}{q}J_1(\lambda q)\dif q,
\end{eqnarray*}
where $J_1(z) = \frac1\pi\int^\pi_0\dif\theta
\cos\theta\sin(z\cos\theta)$. Noting that
\begin{eqnarray*}
\int^\infty_0\frac{\sin q}{q}J_1(\lambda q)\dif q =
\begin{cases}
\frac{1-\sqrt{1-\lambda ^2}}{\lambda} ,& \lambda\in(0,1)  \\
\frac1\lambda, & \lambda\in [1, +\infty)
\end{cases}
\end{eqnarray*}
we get
\begin{eqnarray*}
\Phi(\lambda)=\begin{cases}\sqrt{1-\lambda^2},&\text{if
}\abs{\lambda} < 1; \\
0,&\text{if }\abs{\lambda}\geqslant1.
\end{cases}
\end{eqnarray*}
This completes the proof.
\end{proof}

Recall that a support of a function $f$, defined on the domain
$D(f)$, is defined by
\begin{eqnarray*}
\supp(f)=\overline{\set{x\in D(f):f(x)\neq0}}.
\end{eqnarray*}
That is, the closure of the subset of $D(f)$ in which $f$ does not
vanish. From this we see that
$\supp(f_{\Delta\bsA})=[0,\abs{\bsa}]$.

Before we study the probability distribution density
\begin{eqnarray*}
f_{\Delta\bsA,\Delta\bsB}(x,y) = \int
_{\rD(\complex^2)}\delta(x-\Delta_\rho \bsA)\delta(y-\Delta_\rho
\bsB)\dif\mu (\rho)
\end{eqnarray*}
of the uncertainties $(\Delta_\rho\bsA,\Delta_\rho\bsB)$ for a pair
of qubit observables $\bsA, \bsB$, we first consider the joint
probability distribution density
\begin{eqnarray*}
f_{\langle\bsA\rangle,\langle\bsB\rangle}(r,s)
=\int_{\rD(\complex^2)}\delta(r-\langle\bsA\rangle_\rho)\delta(s-\langle\bsB\rangle_\rho)\dif\mu(\rho)
\end{eqnarray*}
of the mean values
$(\langle\bsA\rangle_\rho,\langle\bsB\rangle_\rho)$, where $\rho$ is
resulted from partially tracing over a subsystem $\complex^2$ of a
Haar-distributed random pure state on $\complex^2\ot\complex^2.$

\begin{prop}\label{prop7}
Let  $\bsA =a_0\I+\bsa\cdot\boldsymbol{\sigma}, \ \bsB
=b_0\I+\bsb\cdot\boldsymbol{\sigma} , \ (a_0,\bsa),(b_0,\bsb)\in
\real^4 $ be a pair of qubit observables. Let
\begin{eqnarray*}
\bsT_{\bsa,\bsb}=\Pa{\begin{array}{cc}
                        \Inner{\bsa}{\bsa} & \Inner{\bsa}{\bsb} \\
                        \Inner{\bsb}{\bsa} & \Inner{\bsb}{\bsb}
                      \end{array}
}.
\end{eqnarray*}

(i) If $\set{\bsa,\bsb}$ is linearly independent, then
\begin{eqnarray*}\label{eq:meanAB}
f_{\langle\bsA\rangle,\langle\bsB\rangle}(r,s)=\frac3{2\pi\sqrt{\det(\bsT_{\bsa,\bsb})}}\sqrt{1-\omega^2_{\bsA,\bsB}(r,s)}H(1-\omega_{\bsA,\bsB}(r,s)),
\end{eqnarray*}
where
$\omega_{\bsA,\bsB}(r,s)=\sqrt{(r-a_0,s-b_0)\bsT^{-1}_{\bsa,\bsb}(r-a_0,s-b_0)^\t}$.

(ii) If $\set{\bsa,\bsb}$ is linearly dependent, without loss of
generality, we assume that $\bsb=\kappa\cdot\bsa$ for some nonzero
$\kappa$, then
\begin{eqnarray*}
f_{\langle\bsA\rangle,\langle\bsB\rangle}(r,s)=
\delta((s-b_0)-\kappa(r-a_0))f_{\langle\bsA\rangle}(r),
\end{eqnarray*}
where $f_{\langle\bsA\rangle}(r)$ is from Proposition~\ref{prop4}.

\end{prop}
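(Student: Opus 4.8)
The plan is to compute the joint characteristic function of the two mean values by the same Fourier technique used in Propositions~\ref{prop4} and~\ref{prop5}, and then invert it. First I would represent both Dirac deltas by $\delta(r-\langle\bsA\rangle_\rho)=\frac1{2\pi}\int\dif\alpha\,e^{\mathrm{i}\alpha(r-\langle\bsA\rangle_\rho)}$ and the analogous expression for $\bsB$ with a dual variable $\beta$, so that
\[
f_{\langle\bsA\rangle,\langle\bsB\rangle}(r,s)=\frac1{(2\pi)^2}\int\dif\alpha\,\dif\beta\,e^{\mathrm{i}(\alpha r+\beta s)}\int_{\rD(\complex^2)}\dif\mu(\rho)\,e^{-\mathrm{i}\Tr{(\alpha\bsA+\beta\bsB)\rho}}.
\]
The key observation is that $\alpha\bsA+\beta\bsB=(\alpha a_0+\beta b_0)\I+(\alpha\bsa+\beta\bsb)\cdot\boldsymbol{\sigma}$ is again a qubit observable, so the inner integral is exactly the one evaluated in Proposition~\ref{prop3} (valid off the measure-zero set where $\alpha\bsa+\beta\bsb=0$). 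Writing $q=\abs{\alpha\bsa+\beta\bsb}$ and noting $q^2=(\alpha,\beta)\bsT_{\bsa,\bsb}(\alpha,\beta)^\t$, the inner integral becomes $3e^{-\mathrm{i}(\alpha a_0+\beta b_0)}(\sin q-q\cos q)/q^3$.

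For case (i), the quadratic form $q^2$ invites diagonalizing $\bsT_{\bsa,\bsb}$, which is positive definite precisely because $\set{\bsa,\bsb}$ is linearly independent. I would substitute $(\alpha,\beta)^\t=\bsT_{\bsa,\bsb}^{-1/2}\bsp$, whose Jacobian contributes the factor $1/\sqrt{\det(\bsT_{\bsa,\bsb})}$; this replaces $q$ by $\abs{\bsp}$ and turns the linear phase $\alpha(r-a_0)+\beta(s-b_0)$ into $\bsp\cdot\bsw$ with $\bsw=\bsT_{\bsa,\bsb}^{-1/2}(r-a_0,s-b_0)^\t$, so that $\abs{\bsw}=\omega_{\bsA,\bsB}(r,s)$. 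Passing to polar coordinates in $\bsp$, the angular integral produces a Bessel function through $\int_0^{2\pi}e^{\mathrm{i}q\abs{\bsw}\cos\phi}\dif\phi=2\pi J_0(q\abs{\bsw})$, and the remaining radial integral is $\int_0^\infty\frac{\sin q-q\cos q}{q^2}J_0(\abs{\bsw}q)\dif q$. This is exactly the integral of Proposition~\ref{prop6}, which evaluates to $\sqrt{1-\omega^2_{\bsA,\bsB}(r,s)}\,H(1-\omega_{\bsA,\bsB}(r,s))$; assembling the constants then gives the claimed formula.

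For case (ii), linear dependence $\bsb=\kappa\bsa$ forces $\langle\bsB\rangle_\rho-b_0=\kappa(\langle\bsA\rangle_\rho-a_0)$ identically in $\rho$, so the $\bsB$-delta may be rewritten as $\delta\Pa{(s-b_0)-\kappa(\langle\bsA\rangle_\rho-a_0)}$. Under the $\bsA$-delta, which pins $\langle\bsA\rangle_\rho=r$, this factor reduces to the $\rho$-independent $\delta\Pa{(s-b_0)-\kappa(r-a_0)}$ and pulls out of the integral, leaving $\int_{\rD(\complex^2)}\delta(r-\langle\bsA\rangle_\rho)\dif\mu(\rho)=f_{\langle\bsA\rangle}(r)$ from Proposition~\ref{prop4}.

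The main obstacle is the change of variables in case (i): one must check carefully that the single symmetric matrix $\bsT_{\bsa,\bsb}^{-1/2}$ simultaneously rationalizes the quadratic form $q^2$ into $\abs{\bsp}^2$ and the linear phase into $\bsp\cdot\bsw$, so that both collapse onto the one radial variable needed to invoke Proposition~\ref{prop6}, and that the Jacobian $1/\sqrt{\det(\bsT_{\bsa,\bsb})}$ combines correctly with the prefactor $3/(2\pi)^2$ and the $2\pi$ from the angular integral to yield the stated normalization $3/\Pa{2\pi\sqrt{\det(\bsT_{\bsa,\bsb})}}$. The rest is routine bookkeeping.
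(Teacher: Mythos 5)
Your proof is correct, and for part (i) it is essentially the paper's own argument: the same double Fourier representation, the same appeal to Proposition~\ref{prop3} for the inner integral, the same change of variables by $\bsT_{\bsa,\bsb}^{1/2}$ (your $\bsp=\bsT_{\bsa,\bsb}^{1/2}(\alpha,\beta)^\t$ is the paper's $(\tilde\alpha,\tilde\beta)$), the same reduction to $2\pi J_0$ via polar coordinates, and the same invocation of Proposition~\ref{prop6} for the radial integral; your bookkeeping of the Jacobian and constants matches the stated normalization. Where you genuinely diverge is part (ii): the paper stays inside the Fourier representation, performs the shear $(\alpha,\beta)\mapsto(\alpha+\kappa\beta,\beta)$, extracts the delta from the $\beta'$ integral, and then re-derives $f_{\langle\bsA\rangle}(r)$ by explicitly evaluating $\int_0^\infty\cos(pq)\tfrac{\sin q-q\cos q}{q^3}\dif q$ and matching it against Proposition~\ref{prop4}. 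You instead note the pointwise identity $\langle\bsB\rangle_\rho-b_0=\kappa\Pa{\langle\bsA\rangle_\rho-a_0}$ and use the first delta to freeze $\langle\bsA\rangle_\rho=r$ in the second, so that $\delta\Pa{(s-b_0)-\kappa(r-a_0)}$ factors out of the defining integral directly. This is shorter and avoids any further integral evaluation; its only cost is that the manipulation of the product of deltas should be read distributionally (i.e., verified against test functions in $(r,s)$), which is routine here since the constraint is affine. Both routes are valid; yours buys economy in (ii), the paper's buys an independent consistency check that the residual $\alpha'$-integral really reproduces $f_{\langle\bsA\rangle}(r)$.
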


\begin{proof}
By using integral representation of delta function twice, we get
\begin{eqnarray*}
f_{\langle\bsA\rangle,\langle\bsB\rangle}(r,s)=\frac1{(2\pi)^2}\int_{\real^2}\dif\alpha\dif\beta
e^{\mathrm{i}(r\alpha+s\beta)}\int_{\rD(\complex^2)}\dif\mu(\rho)e^{-\mathrm{i}\Tr{(\alpha\bsA+\beta\bsB)\rho}},
\end{eqnarray*}
and by Proposition~\ref{prop3}, we have get
\begin{eqnarray*}
\int_{\rD(\complex^2)}e^{-\mathrm{i}\Tr{(\alpha\bsA+\beta\bsB)\rho}}\dif\mu(\rho)
=
3e^{-\mathrm{i}(a_0\alpha+b_0\beta)}\frac{\sin(\abs{\alpha\bsa+\beta\bsb})-\abs{\alpha\bsa+\beta\bsb}\cos(\abs{\alpha\bsa+\beta\bsb})}{\abs{\alpha\bsa+\beta\bsb}^3}.
\end{eqnarray*}
Thus,
\begin{eqnarray*}
f_{\langle\bsA\rangle,\langle\bsB\rangle}(r,s)=\frac3{(2\pi)^2}\int_{\real^2}\dif\alpha\dif\beta
e^{\mathrm{i}((r-a_0)\alpha+(s-b_0)\beta)}\frac{\sin(\abs{\alpha\bsa+\beta\bsb})-\abs{\alpha\bsa+\beta\bsb}\cos(\abs{\alpha\bsa+\beta\bsb})}{\abs{\alpha\bsa+\beta\bsb}^3}.
\end{eqnarray*}

(i) Noting that $\set{\bsa,\bsb}$ is linearly independent if and
only if $\bsT_{\bsa,\bsb}$ is invertible, it follows that
$\abs{\alpha\bsa+\beta\bsb}$ can be rewritten as $
\abs{\alpha\bsa+\beta\bsb}=\sqrt{\tilde\alpha^2+\tilde\beta^2} $
with $(\tilde\alpha , \tilde\beta)^\t=
\bsT^{\frac12}_{\bsa,\bsb}(\alpha , \beta)^\t$. Let $(\tilde
r,\tilde s)=(r,s)\bsT^{-\frac12}_{\bsa,\bsb},$ then
$\dif\tilde\alpha\dif\tilde\beta=\sqrt{\det(\bsT_{\bsa,\bsb})}\dif\alpha\dif\beta,$
or equivalently,
$\dif\alpha\dif\beta=\frac1{\sqrt{\det(\bsT_{\bsa,\bsb})}}\dif\tilde\alpha\dif\tilde\beta$.
By change of variables
$(\alpha,\beta)\to(\tilde\alpha,\tilde\beta)$, we have
\begin{eqnarray*}
f_{\langle\bsA\rangle,\langle\bsB\rangle}(r,s)=\frac3{(2\pi)^2\sqrt{\det(\bsT_{\bsa,\bsb})}}\int_{\real^2}\dif\tilde\alpha\dif\tilde\beta
e^{\mathrm{i}((\tilde r-\tilde a_0)\tilde\alpha+(\tilde s-\tilde
b_0)\tilde\beta)}\frac{\sin\Pa{\sqrt{\tilde\alpha^2+\tilde
\beta^2}}-\sqrt{\tilde\alpha^2+\tilde
\beta^2}\cos\Pa{\sqrt{\tilde\alpha^2+\tilde
\beta^2}}}{\Pa{\sqrt{\tilde\alpha^2+\tilde \beta^2}}^3}.
\end{eqnarray*}
Furthermore, using the polar coordinate  $\tilde\alpha =
q\cos\theta,\tilde\beta=q\sin\theta$,  $q\geqslant0,
\theta\in[0,2\pi],$  then  $\omega_{\bsA,\bsB}(r,s)=\sqrt{(\tilde
r-\tilde a_0)^2+(\tilde s-\tilde b_0)^2}.$ Noting the fact that
\begin{eqnarray*}
\int^{2\pi}_0e^{\mathrm{i}(u\cos\theta+v\sin\theta)}\dif\theta =2\pi
J_0(\sqrt{u^2+v^2})\quad (\forall u,v\in\real),
\end{eqnarray*}
where $J_0(z)= \frac1\pi\int^\pi_0\cos(z\cos\theta)\dif\theta$ is
the Bessel function of first kind, we have
\begin{eqnarray*}
f_{\langle\bsA\rangle,\langle\bsB\rangle}(r,s)
&=&\frac3{(2\pi)^2\sqrt{\det(\bsT_{\bsa,\bsb})}}\int^\infty_0\dif
q\frac{\sin q-q\cos q}{q^2}\int^{2\pi}_0\dif\theta
e^{\mathrm{i}q[(\tilde r-\tilde a_0)\cos\theta+(\tilde s-\tilde
b_0)\sin\theta]}\\
&=&\frac3{2\pi\sqrt{\det(\bsT_{\bsa,\bsb})}}\int^\infty_0\dif
q\frac{\sin q-q\cos q}{q^2}J_0\Pa{q\sqrt{(\tilde r-\tilde
a_0)^2+(\tilde s-\tilde b_0)^2}} \\
&=&\frac3{2\pi\sqrt{\det(\bsT_{\bsa,\bsb})}}\int^\infty_0\frac{\sin
q-q\cos q}{q^2}J_0\Pa{q\cdot\omega_{\bsA,\bsB}(r,s)}\dif q \\
&=&\frac3{2\pi\sqrt{\det(\bsT_{\bsa,\bsb})}}\sqrt{1-\omega^2_{\bsA,\bsB}(r,s)}H(1-\omega_{\bsA,\bsB}(r,s)).
\end{eqnarray*}
Here in the last equality, we used Proposition~\ref{prop6}.

(ii) If $\set{\bsa,\bsb}$ is linearly dependent, without loss of
generality, we assume that $\bsb=\kappa\cdot\bsa$ for some
$\kappa\neq0$. Performing change of variables
$(\alpha,\beta)\to(\alpha',\beta')$ where
$\alpha'=\alpha+\kappa\beta$ and $\beta'=\beta,$ the Jacobian is
given by
\begin{eqnarray*}
\det\Pa{\frac{\partial(\alpha',\beta')}{\partial(\alpha,\beta)}}=\abs{\begin{array}{cc}
                                                                        1 & \kappa \\
                                                                        0 &
                                                                        1
                                                                      \end{array}
}=1\neq0.
\end{eqnarray*}
Now
\begin{eqnarray*}
f_{\langle\bsA\rangle,\langle\bsB\rangle}(r,s)&=&
\frac3{4\pi^2}\int_{\real^2} \dif \alpha'\dif \beta'
e^{\mathrm{i}((r-a_0)(\alpha'-\kappa\beta')+(s-b_0)\beta')}\frac{\sin(\abs{\bsa}\abs{\alpha'})-\abs{\bsa}\abs{\alpha'}\cos(\abs{\bsa}\abs{\alpha'})}{\abs{\bsa}^3\abs{\alpha'}^3}\\
&=&\frac1{2\pi}\int_\real
e^{\mathrm{i}((s-b_0)-\kappa(r-a_0))\beta'}\dif\beta'\times\frac3{2\pi}\int_\real\dif\alpha'e^{\mathrm{i}(r-a_0)\alpha'}\frac{\sin(\abs{\bsa}\abs{\alpha'})-\abs{\bsa}\abs{\alpha'}\cos(\abs{\bsa}\abs{\alpha'})}{\abs{\bsa}^3\abs{\alpha'}^3}\\
&=&\delta\Pa{(s-b_0)-\kappa(r-a_0)}\times\frac3{2\pi}\int_\real\dif\alpha'e^{\mathrm{i}(r-a_0)\alpha'}\frac{\sin(\abs{\bsa}\abs{\alpha'})-\abs{\bsa}\abs{\alpha'}\cos(\abs{\bsa}\abs{\alpha'})}{(\abs{\bsa}\abs{\alpha'})^3}
\end{eqnarray*}
where
\begin{eqnarray*}
&&\frac3{2\pi}\int_\real\dif\alpha'e^{\mathrm{i}(r-a_0)\alpha'}\frac{\sin(\abs{\bsa}\abs{\alpha'})-\abs{\bsa}\abs{\alpha'}\cos(\abs{\bsa}\abs{\alpha'})}{(\abs{\bsa}\abs{\alpha'})^3}\\
&&=\frac3{2\pi}\int^\infty_0\dif\alpha' \Pa{e^{\mathrm{i}(r-a_0)\alpha'}+e^{-\mathrm{i}(r-a_0)\alpha'}}\frac{\sin(\abs{\bsa}\abs{\alpha'})-\abs{\bsa}\abs{\alpha'}\cos(\abs{\bsa}\abs{\alpha'})}{(\abs{\bsa}\abs{\alpha'})^3}\\
&&=\frac3{2\pi \abs{\bsa}}\int^\infty_0
\Pa{e^{\mathrm{i}\frac{r-a_0}{\abs{\bsa}}q}+e^{-\mathrm{i}\frac{r-a_0}{\abs{\bsa}
}q}}\frac{\sin q-q\cos q}{q^3}\dif q\\
&&= \frac6{2\pi \abs{\bsa}}\int^\infty_0\dif
q\cos\Pa{\frac{r-a_0}{\abs{\bsa}}q}\frac{\sin q-q\cos q}{q^3}
\end{eqnarray*}
Due to the fact that
\begin{eqnarray*}
\int^\infty_0\dif q\cos(pq)\frac{\sin q-q\cos
q}{q^3}=\begin{cases}\frac\pi4(1-p^2),&\text{if
}\abs{p}\leqslant1\\
0,&\text{if }\abs{p}\geqslant1
\end{cases}
\end{eqnarray*}
we obtain
\begin{eqnarray*}
\frac3{2\pi}\int _{\real}
\dif\alpha'e^{\mathrm{i}(r-a_0)\alpha'}\frac{\sin(\abs{\bsa}\abs{\alpha'})-\abs{\bsa}\abs{\alpha'}\cos(\abs{\bsa}\abs{\alpha'})}{(\abs{\bsa}\abs{\alpha'})^3}
&=&\begin{cases}\frac{3(r-(a_0-\abs{\bsa}))((a_0+\abs{\bsa})-r)}{4
\abs{\bsa}^3},&\text{if
}\frac{\abs{r-a_0}}{\abs{\bsa}}\leqslant1 \\
0,&\text{if }\frac{\abs{r-a_0}}{\abs{\bsa}}\geqslant1
\end{cases}
\end{eqnarray*}
and  use the fact that
$\abs{\bsa}=\frac{\lambda_2(\bsA)-\lambda_1(\bsA)}2$ and
$a_0=\frac12\Tr{\bsA}=\frac{\lambda_1(\bsA)+\lambda_2(\bsA)}2$, we
have
\begin{eqnarray*}
&&\frac3{2\pi}\int_{\real} \dif\alpha'e^{\mathrm{i}(r-a_0)\alpha'}\frac{\sin(\abs{\bsa}\abs{\alpha'})-\abs{\bsa}\abs{\alpha'}\cos(\abs{\bsa}\abs{\alpha'})}{(\abs{\bsa}\abs{\alpha'})^3}\\
&&=\frac{3!}{V^3
(\lambda(\bsA))}(r-\lambda_1(\bsA))(\lambda_2(\bsA)-r)(H(r-\lambda_1(\bsA))-H(r-\lambda_2(\bsA)))=f_{\langle\bsA\rangle}(r).
\end{eqnarray*}
Therefore
\begin{eqnarray*}
f_{\langle\bsA\rangle,\langle\bsB\rangle}(r,s)=
\delta((s-b_0)-\kappa(r-a_0))f_{\langle\bsA\rangle}(r),
\end{eqnarray*}
where $f_{\langle\bsA\rangle}(r)$ is from Proposition~\ref{prop4},
which is the desired result.
\end{proof}

\begin{prop}\label{prop8}
The joint probability distribution density of the uncertainties
$(\Delta_\rho \bsA,\Delta_\rho \bsB)$ for a pair of qubit
observables $\label{AB} \bsA =a_0\I+\bsa\cdot\boldsymbol{\sigma}, \
\bsB =b_0\I+\bsb\cdot\boldsymbol{\sigma} , \
(a_0,\bsa),(b_0,\bsb)\in \real^4,$ where $\set{\bsa,\bsb}$ is
linearly independent, and $\rho$ is resulted from partially tracing
a subsystem over a Haar-distributed random pure state on
$\complex^2\ot\complex^2$, is given by
\begin{eqnarray*}
f_{\Delta\bsA,\Delta\bsB}(x,y) = \frac{2xy\sum_{j\in\set{\pm}}
f_{\langle\bsA\rangle,\langle\bsB\rangle}(r_+(x)
,s_j(y))}{\sqrt{(\abs{\bsa}^2-x^2)(\abs{\bsb}^2-y^2)}},
\end{eqnarray*}
where $r_\pm(x)=a_0\pm\sqrt{\abs{\bsa}^2-x^2}, s_\pm(y)
=b_0\pm\sqrt{\abs{\bsb}^2-y^2}$, and
$f_{\langle\bsA\rangle,\langle\bsB\rangle}(\cdot,\cdot)$ is given in
Proposition~\ref{prop7}.
\end{prop}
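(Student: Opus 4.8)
The plan is to obtain $f_{\Delta\bsA,\Delta\bsB}$ as the push-forward of the mean-value density $f_{\langle\bsA\rangle,\langle\bsB\rangle}$ of Proposition~\ref{prop7}. The algebraic fact that makes this possible—already the engine behind Proposition~\ref{prop5}—is that for a qubit observable the uncertainty is a \emph{deterministic} function of the mean. Indeed, from the $2\times2$ Cayley--Hamilton identity $\bsA^2=\Tr{\bsA}\bsA-\det(\bsA)\I$ together with $\Tr{\bsA}=2a_0$ and $\det(\bsA)=a_0^2-\abs{\bsa}^2$, one gets $(\Delta_\rho\bsA)^2=\abs{\bsa}^2-(\langle\bsA\rangle_\rho-a_0)^2$, and likewise $(\Delta_\rho\bsB)^2=\abs{\bsb}^2-(\langle\bsB\rangle_\rho-b_0)^2$. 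Thus $\Delta_\rho\bsA$ depends on $\langle\bsA\rangle_\rho$ alone and $\Delta_\rho\bsB$ on $\langle\bsB\rangle_\rho$ alone.

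Starting from the definition of $f_{\Delta\bsA,\Delta\bsB}(x,y)$, I would insert integrations over $r=\langle\bsA\rangle_\rho$ and $s=\langle\bsB\rangle_\rho$ to write
\begin{eqnarray*}
f_{\Delta\bsA,\Delta\bsB}(x,y)=\int_{\real^2}\dif r\,\dif s\,f_{\langle\bsA\rangle,\langle\bsB\rangle}(r,s)\,\delta(x-u(r))\,\delta(y-v(s)),
\end{eqnarray*}
where $u(r)=\sqrt{\abs{\bsa}^2-(r-a_0)^2}$ and $v(s)=\sqrt{\abs{\bsb}^2-(s-b_0)^2}$. For fixed $x\in[0,\abs{\bsa}]$ the equation $u(r)=x$ has exactly the two roots $r_\pm(x)=a_0\pm\sqrt{\abs{\bsa}^2-x^2}$ with $\abs{u'(r_\pm(x))}=\sqrt{\abs{\bsa}^2-x^2}/x$; the delta-composition rule $\delta(g(r))=\sum_{r_0\in Z(g)}\delta(r-r_0)/\abs{g'(r_0)}$ recorded before Proposition~\ref{prop2} then gives $\delta(x-u(r))=\tfrac{x}{\sqrt{\abs{\bsa}^2-x^2}}\bigl(\delta(r-r_+(x))+\delta(r-r_-(x))\bigr)$, and analogously for $\delta(y-v(s))$.

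Performing the $r$ and $s$ integrations yields the four-term expression
\begin{eqnarray*}
f_{\Delta\bsA,\Delta\bsB}(x,y)=\frac{xy}{\sqrt{(\abs{\bsa}^2-x^2)(\abs{\bsb}^2-y^2)}}\sum_{i,j\in\set{\pm}}f_{\langle\bsA\rangle,\langle\bsB\rangle}(r_i(x),s_j(y)).
\end{eqnarray*}
The decisive—and least routine—step is to collapse this into the two-term sum of the statement. This follows from a central symmetry of the mean-value density: by Proposition~\ref{prop7}(i), $f_{\langle\bsA\rangle,\langle\bsB\rangle}$ depends on $(r,s)$ only through $\omega_{\bsA,\bsB}(r,s)^2=(r-a_0,s-b_0)\bsT^{-1}_{\bsa,\bsb}(r-a_0,s-b_0)^\t$, a quadratic form invariant under the joint inversion $(r-a_0,s-b_0)\mapsto-(r-a_0,s-b_0)$. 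Since $r_-(x)-a_0=-(r_+(x)-a_0)$ and $s_-(y)-b_0=-(s_+(y)-b_0)$, this inversion interchanges $(r_+,s_+)\leftrightarrow(r_-,s_-)$ and $(r_+,s_-)\leftrightarrow(r_-,s_+)$, so the four summands coincide in pairs and the sum equals $2\sum_{j\in\set{\pm}}f_{\langle\bsA\rangle,\langle\bsB\rangle}(r_+(x),s_j(y))$, which is the claimed formula.

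I expect the symmetry argument to be where the real content lies: flipping only one of $r,s$ reverses the sign of the cross term $\Inner{\bsa}{\bsb}(r-a_0)(s-b_0)$ in the quadratic form and therefore does \emph{not} preserve $\omega_{\bsA,\bsB}$, so it is essential that the preimages be paired by the \emph{simultaneous} sign flip—this is exactly what fuses the four branch points into two equal-valued pairs and produces the factor $2$.
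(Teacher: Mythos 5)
Your proposal is correct and follows essentially the same route as the paper: both reduce $f_{\Delta\bsA,\Delta\bsB}$ to an integral of $f_{\langle\bsA\rangle,\langle\bsB\rangle}$ against delta functions whose roots are $r_\pm(x)$, $s_\pm(y)$, obtain the same four-term sum with prefactor $xy/\sqrt{(\abs{\bsa}^2-x^2)(\abs{\bsb}^2-y^2)}$, and collapse it to two terms using $\omega_{\bsA,\bsB}(r_+,s_+)=\omega_{\bsA,\bsB}(r_-,s_-)$ and $\omega_{\bsA,\bsB}(r_+,s_-)=\omega_{\bsA,\bsB}(r_-,s_+)$. The only cosmetic difference is that you apply the delta-composition rule to $x-u(r)$ directly, whereas the paper first passes through $\delta(x^2-(\Delta_\rho\bsA)^2)=\delta(g_x(r))$; your explicit justification of the pairing via central symmetry of the quadratic form is a welcome elaboration of what the paper states as ``easily checked.''
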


\begin{proof}
Noting that
\begin{eqnarray*}
\delta(x^2-(\Delta _\rho\bsA )^2) =
\delta(x^2-(r-\lambda_1(\bsA))(\lambda_2(\bsA)-r))) =\delta(g_x(r)),
\end{eqnarray*}
where $g_x(r)=x^2-(r-\lambda_1(\bsA))(\lambda_2(\bsA)-r)$.
Similarly, $\delta(y^2-(\Delta _\rho\bsB)^2 ) = \delta(h_y(s)),$
where $h_y(s)=y^2-(s-\lambda_1(\bsB))(\lambda_2(\bsB)-s).$
Consequently,
\begin{eqnarray*}
f_{\Delta\bsA,\Delta\bsB}(x,y) = 4xy\int _{\rD(\complex^2)}
\dif\mu(\rho) \delta(x^2-(\Delta _\rho\bsA)^2 )\delta(y^2-(\Delta
_\rho\bsB)^2 ) = 4 xy\int _{\real^2}\dif r\dif
sf_{\langle\bsA\rangle,\langle\bsB\rangle}(r,s)\delta(g_x(r))\delta(h_y(s)),
\end{eqnarray*}
where $f_{\langle\bsA\rangle,\langle\bsB\rangle}(r,s)$ is determined
by Proposition~\ref{prop7}. Noting that
\begin{eqnarray*}
\delta\Pa{g_x(r)}=\frac1{\abs{g'_x(r_+(x))}}\delta_{r_+(x)}+\frac1{\abs{g'_x(
r_-(x))}}\delta_{r_-(x)},\quad
\delta\Pa{h_y(s)}=\frac1{\abs{h'_y(s_+(y))}}\delta_{s_+(y)}+\frac1{\abs{h'_y(
s_-(y))}}\delta_{s_-(y)},
\end{eqnarray*}
we obtain
\begin{eqnarray*}
\delta(g_x(r))\delta(h_y(s)) =
\frac{\delta_{(r_+,s_+)}+\delta_{(r_+,s_-)}+\delta_{(r_-,s_+)}+\delta_{(r_-,s_-)}}{4\sqrt{(\abs{\bsa}^2-x^2)(\abs{\bsb}^2-y^2)}}.
\end{eqnarray*}
Based on this observation, we get
\begin{eqnarray*}
f_{\Delta\bsA,\Delta\bsB}(x,y) =
\frac{xy}{\sqrt{(\abs{\bsa}^2-x^2)(\abs{\bsb}^2-y^2)}}\sum_{i,j\in\set{\pm}}f_{\langle\bsA\rangle,\langle\bsB\rangle}(r_i,s_j).
\end{eqnarray*}
It is easily checked that $\omega_{\bsA,\bsB}(r_+ , s_+
)=\omega_{\bsA,\bsB}(r_- ,s_-), \ \omega_{\bsA,\bsB}(r_+ , s_-
)=\omega_{\bsA,\bsB}(r_- , s_+)$, therefore
\begin{eqnarray*}
\sum_{i,j\in\set{\pm}}f_{\langle\bsA\rangle,\langle\bsB\rangle}(r_i,s_j)=2\sum_{j\in\set{\pm}}f_{\langle\bsA\rangle,\langle\bsB\rangle}(r_+
,s_j),
\end{eqnarray*}
and
\begin{eqnarray*}
f_{\Delta\bsA,\Delta\bsB}(x,y) =
\frac{2xy\sum_{j\in\set{\pm}}f_{\langle\bsA\rangle,\langle\bsB\rangle}(r_+
,s_j)}{\sqrt{(\abs{\bsa}^2-x^2)(\abs{\bsb}^2-y^2)}},
\end{eqnarray*}
which is  the desired result.
\end{proof}

\begin{proof}[Proof of Lemma~\ref{lem1}]
With Proposition~\ref{prop8}, we now make an analysis of the support
of $f_{\Delta\bsA,\Delta\bsB}$. In fact, due to the relation between
$f_{\Delta\bsA,\Delta\bsB}$ and
$f_{\langle\bsA\rangle,\langle\bsB\rangle}$, the support of
$f_{\Delta\bsA,\Delta\bsB}$ can be identified by the support of
$f_{\langle\bsA\rangle,\langle\bsB\rangle}$ which can be seen from
Proposition~\ref{prop7} (i),
\begin{eqnarray*}
\supp(f_{\langle\bsA\rangle,\langle\bsB\rangle}) =
\Set{(r,s)\in\real^2:\omega_{\bsA,\bsB}(r,s)\leqslant1}.
\end{eqnarray*}
Note that $f_{\Delta\bsA,\Delta\bsB}(x,y)$ is defined on the first
quadrant $\real^2_{\geqslant0}$, if $xy>0$, then
$f_{\Delta\bsA,\Delta\bsB}(x,y)=0$ if and only if
$$
\sum_{i,j\in\set{\pm}}f_{\langle\bsA\rangle,\langle\bsB\rangle}(r_i(x),s_j(y))=0,
$$
i.e., $f_{\langle\bsA\rangle,\langle\bsB\rangle}(r_i(x),s_j(y))=0$
because $f_{\langle\bsA\rangle,\langle\bsB\rangle}$ is a
non-negative function. This means that all four points
$(r_\pm(x),s_\pm(y))$ are not in the support of
$f_{\langle\bsA\rangle,\langle\bsB\rangle}$. Therefore the
uncertainty region (i.e.,the support of $f_{\Delta\bsA,\Delta\bsB}$)
of $\bsA$ and $\bsB$ is given by the following set:
\begin{eqnarray*}
\cU_{\Delta\bsA,\Delta\bsB}=\supp(f_{\Delta\bsA,\Delta\bsB})=D^+_{\bsa,\bsb}\cup
D^-_{\bsa,\bsb},
\end{eqnarray*}
where, via $\bsu_\epsilon(x,y)=\Pa{\sqrt{\abs{\bsa}^2-x^2},
\epsilon\sqrt{\abs{\bsb}^2-y^2}}$,
\begin{eqnarray*}
D^{\epsilon}_{\bsa,\bsb}:=\Set{(x,y)\in\real^2_{\geqslant0}\cap([0,\abs{\bsa}]\times[0,\abs{\bsb}]):
\bsu_\epsilon(x,y)\bsT_{\bsa,\bsb}^{-1}\bsu^\t_\epsilon(x,y)\leqslant1},\quad\epsilon\in\set{\pm}.
\end{eqnarray*}
This is what we want. Furthermore, we have {\small\begin{eqnarray*}
D^{\epsilon}_{\bsa,\bsb}=\Set{(x,y)\in\real^2_{\geqslant0}\cap([0,\abs{\bsa}]\times[0,\abs{\bsb}]):
\abs{\bsb}^2x^2+\abs{\bsa}^2y^2+2\epsilon\Inner{\bsa}{\bsb}\sqrt{(\abs{\bsa}^2-x^2)(\abs{\bsb}^2-y^2)}\geqslant
\abs{\bsa}^2\abs{\bsb}^2+\Inner{\bsa}{\bsb}^2}.
\end{eqnarray*}}
Therefore the uncertainty region
$\cU_{\Delta\bsA,\Delta\bsB}=\Set{(\Delta_\rho \bsA,\Delta_\rho
\bsB)\in\real^2_{\geqslant0}: \rho\in\density{\complex^2}}$ of
$\bsA$ and $\bsB$ is determined by the following inequality:
\begin{eqnarray*}
\abs{\bsb}^2x^2+\abs{\bsa}^2y^2+2\abs{\Inner{\bsa}{\bsb}}
\sqrt{(\abs{\bsa}^2-x^2)(\abs{\bsb}^2-y^2)}\geqslant
\abs{\bsa}^2\abs{\bsb}^2+\Inner{\bsa}{\bsb}^2,
\end{eqnarray*}
where $x\in[0,\abs{\bsa}]$ and $y\in[0,\abs{\bsb}]$.
\end{proof}

\subsection{Proof of formula \eqref{vol2}}\label{app-vol2}

Since the $\cU(\theta)$ is defined by
\begin{eqnarray*}
\cU(\theta):=\Set{(x,y)\in[0,1]^2:x^2+y^2+2\abs{\cos\theta}\sqrt{(1-x^2)(1-y^2)}\geqslant1+\cos^2\theta}.
\end{eqnarray*}
The volume (i.e., the area for 2D domain) of the uncertainty region
$\cU(\theta)(\theta\in[0,\pi/2])$ is calculated as follows
\begin{eqnarray*}
\vol(\cU(\theta))=\frac12(\pi-3\theta) \sin\theta-\cos\theta+1.
\end{eqnarray*}
Indeed, if $\theta\in[0,\pi/4]$, then $\cU(\theta)$ becomes
\begin{eqnarray*}
0\leqslant &x&\leqslant\sin\theta, -x
\cos\theta+\sin\theta\sqrt{1-x^2}\leqslant y\leqslant x
\cos\theta+\sin\theta\sqrt{1-x^2};\\
\sin\theta\leqslant &x&\leqslant \cos\theta, x
\cos\theta-\sin\theta\sqrt{1-x^2}\leqslant y\leqslant x
\cos\theta+\sin\theta\sqrt{1-x^2};\\
\cos\theta\leqslant &x&\leqslant1,
x\cos\theta-\sin\theta\sqrt{1-x^2}\leqslant y\leqslant1.
\end{eqnarray*}
Hence
\begin{eqnarray*}
\vol(\cU(\theta))&=&\int^{\sin\theta}_0\dif x\int^{\sqrt{1-x^2}\sin
\theta+x \cos \theta}_{\sqrt{1-x^2}\sin \theta-x \cos\theta}\dif y
+\int^{\cos\theta}_{\sin \theta}\int^{x \cos\theta+\sqrt{1-x^2}
\sin\theta}_{x \cos\theta-\sqrt{1-x^2} \sin\theta}\dif
y+\int^1_{\cos\theta}\dif x\int_{x\cos\theta-\sqrt{1-x^2} \sin
\theta}^1\dif y\\
&=&\frac12(\pi-3\theta) \sin\theta-\cos\theta+1.
\end{eqnarray*}
If $\theta\in[\pi/4,\pi/2]$, then $\cU(\theta)$ becomes
\begin{eqnarray*}
0\leqslant &x&\leqslant\cos\theta, -x
\cos\theta+\sin\theta\sqrt{1-x^2}\leqslant y\leqslant x
\cos\theta+\sin\theta\sqrt{1-x^2};\\
\cos\theta\leqslant &x&\leqslant \sin\theta, -x
\cos\theta+\sin\theta\sqrt{1-x^2}\leqslant y\leqslant 1;\\
\sin\theta\leqslant &x&\leqslant1,
x\cos\theta-\sin\theta\sqrt{1-x^2}\leqslant y\leqslant1,
\end{eqnarray*}
implying that
\begin{eqnarray*}
\vol(\cU(\theta))&=&\int^{\cos\theta}_0\dif x\int^{\sqrt{1-x^2}\sin
\theta+x \cos\theta}_{\sqrt{1-x^2}\sin \theta-x \cos\theta}\dif y
+\int^{\sin\theta}_{\cos\theta}\dif x\int^1_{\sqrt{1-x^2} \sin
\theta-x\cos\theta}\dif y+\int^1_{\sin\theta}\dif x\int^1_{x
\cos\theta-\sqrt{1-x^2}
\sin\theta}\dif y\\
&=&\frac12(\pi-3\theta) \sin\theta-\cos\theta+1.
\end{eqnarray*}
In summary, we get the desired result. We remark here that for
general lengths $\abs{\bsa}$ and $\abs{\bsb}$, we immediately get
that
$\vol(\cU_{\Delta\bsA,\Delta\bsB})=\abs{\bsa}\abs{\bsb}\vol(\cU(\theta))$.

As some representatives, in Fig.~\ref{fig:2DUR&PDF}, we plot the
PDFs (probability density functions) over the their respective
uncertainty regions $\cU(\theta)$ for
$\theta\in\set{\frac\pi8,\frac\pi4,\frac{3\pi}8}$, and the
probability density functions on $\cU(\theta)$.
\begin{figure}[ht]
\subfigure[The uncertainty region $\cU(\pi/8)$]
{\begin{minipage}[b]{.49\linewidth}
\includegraphics[width=.75\textwidth]{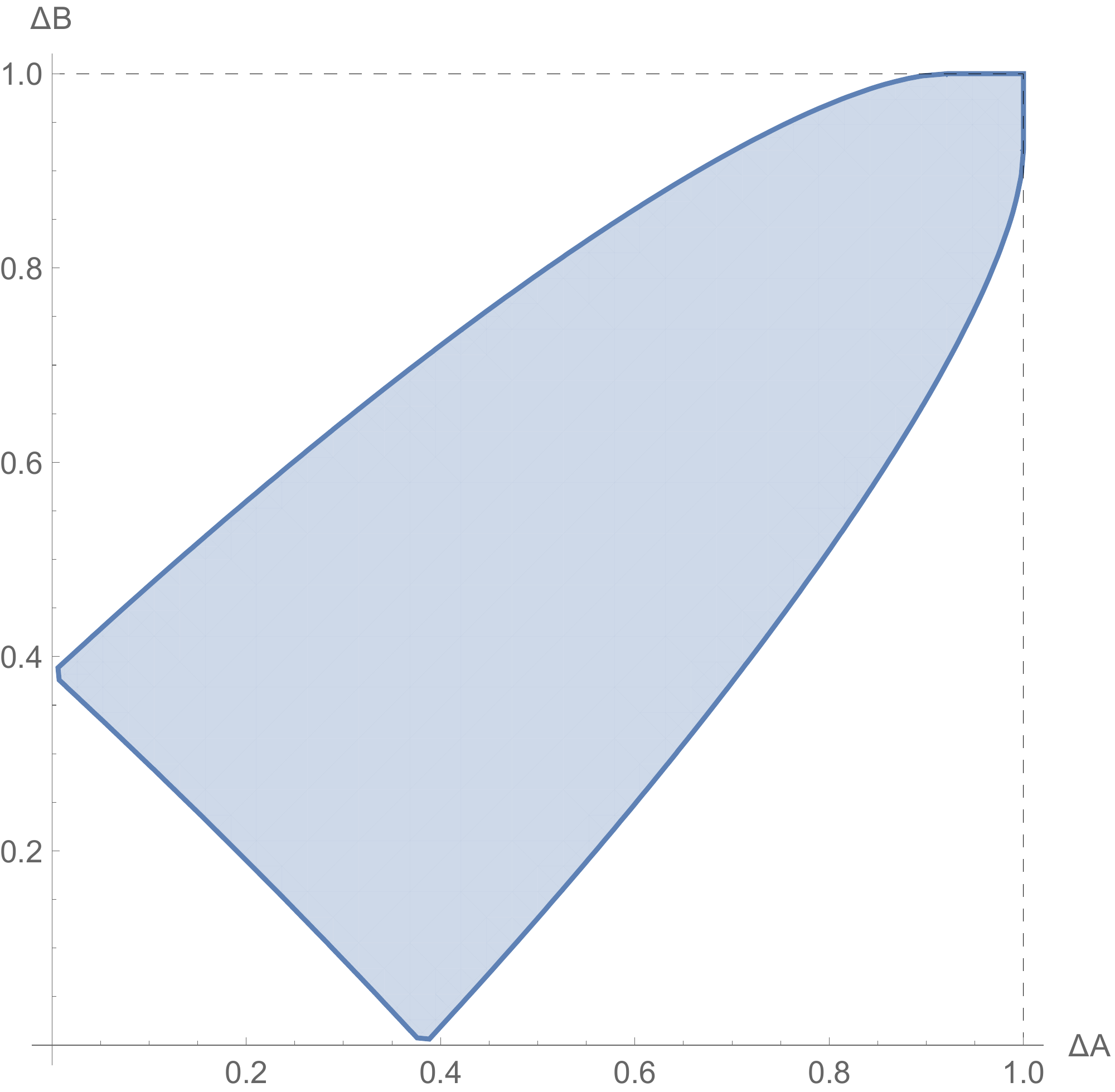}
\end{minipage}}
\subfigure[The pdf on $\cU(\pi/8)$]
{\begin{minipage}[b]{.49\linewidth}
\includegraphics[width=1\textwidth]{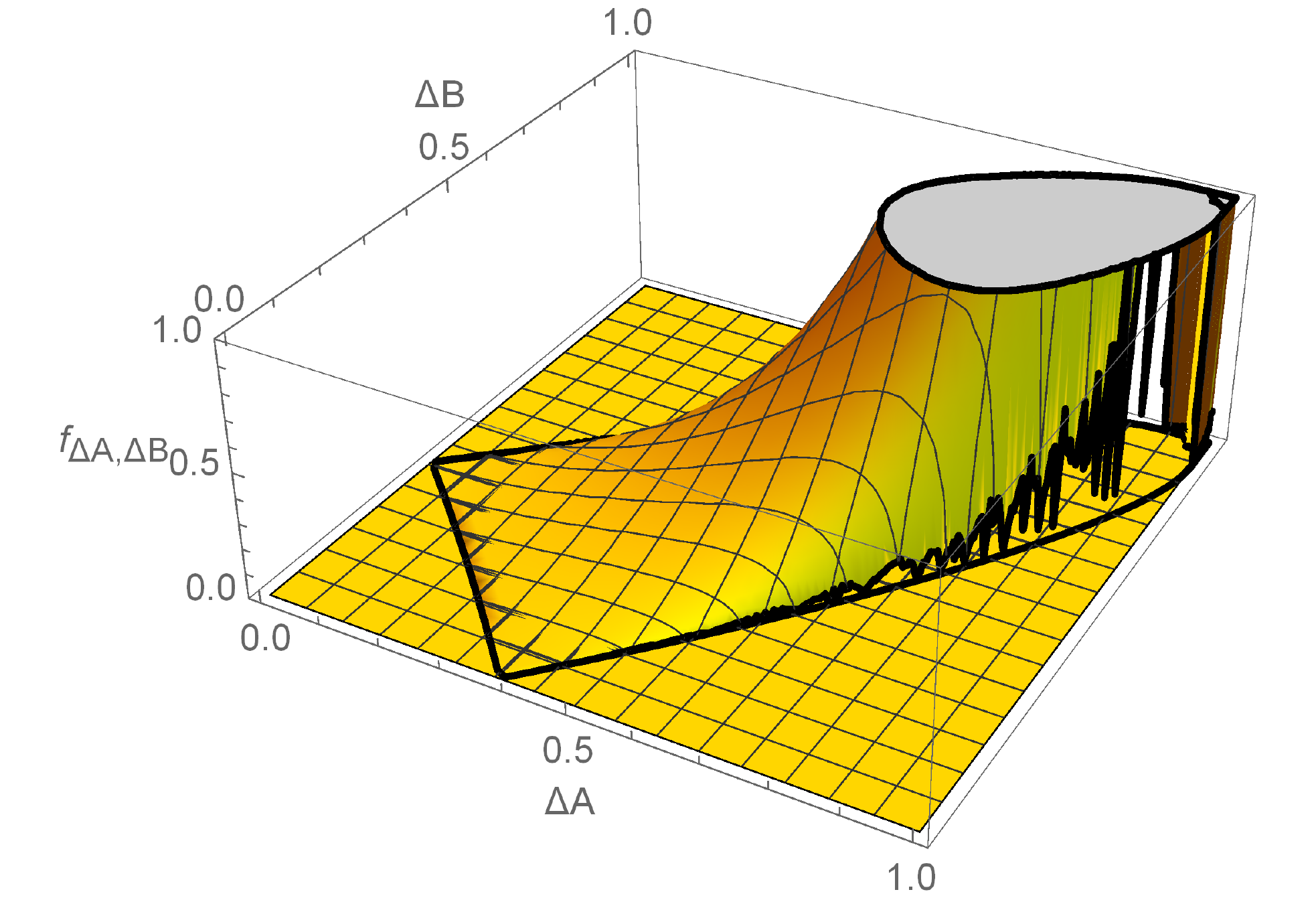}
\end{minipage}}
\subfigure[The uncertainty region $\cU(\pi/4)$]
{\begin{minipage}[b]{.49\linewidth}
\includegraphics[width=.75\textwidth]{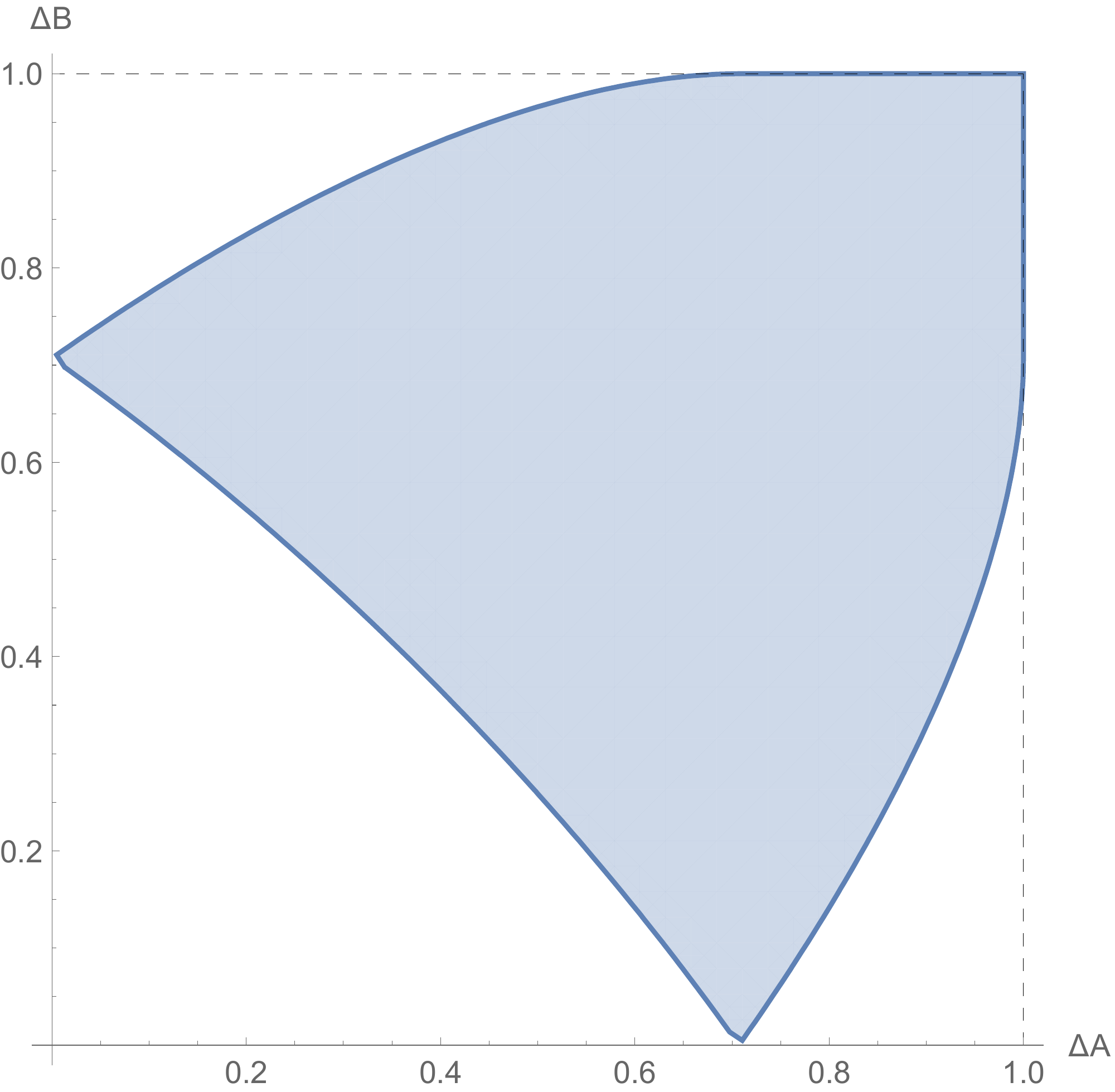}
\end{minipage}}
\subfigure[The pdf on $\cU(\pi/4)$]
{\begin{minipage}[b]{.49\linewidth}
\includegraphics[width=1\textwidth]{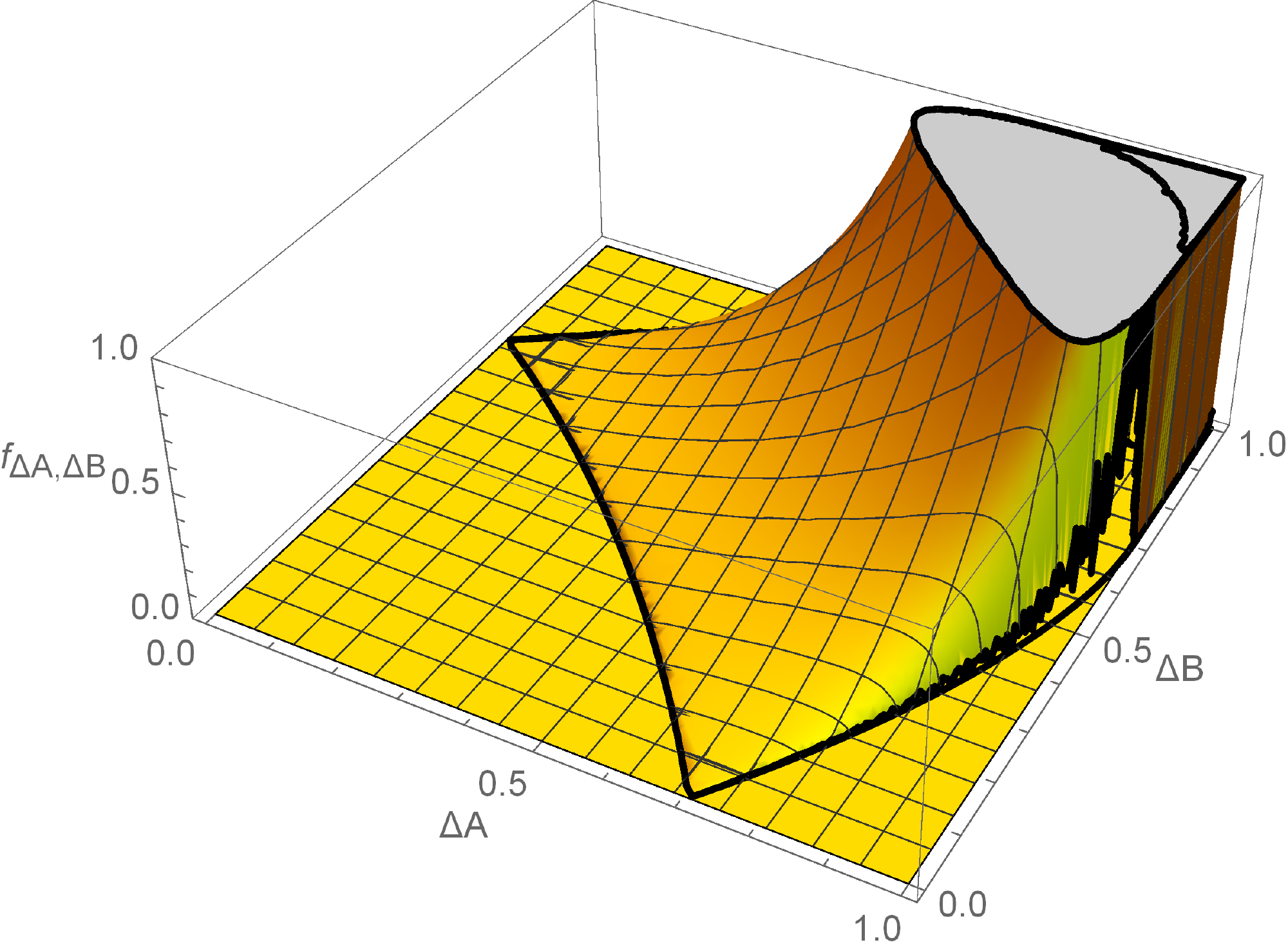}
\end{minipage}}
\subfigure[The uncertainty region $\cU(\pi/2)$]
{\begin{minipage}[b]{.49\linewidth}
\includegraphics[width=.75\textwidth]{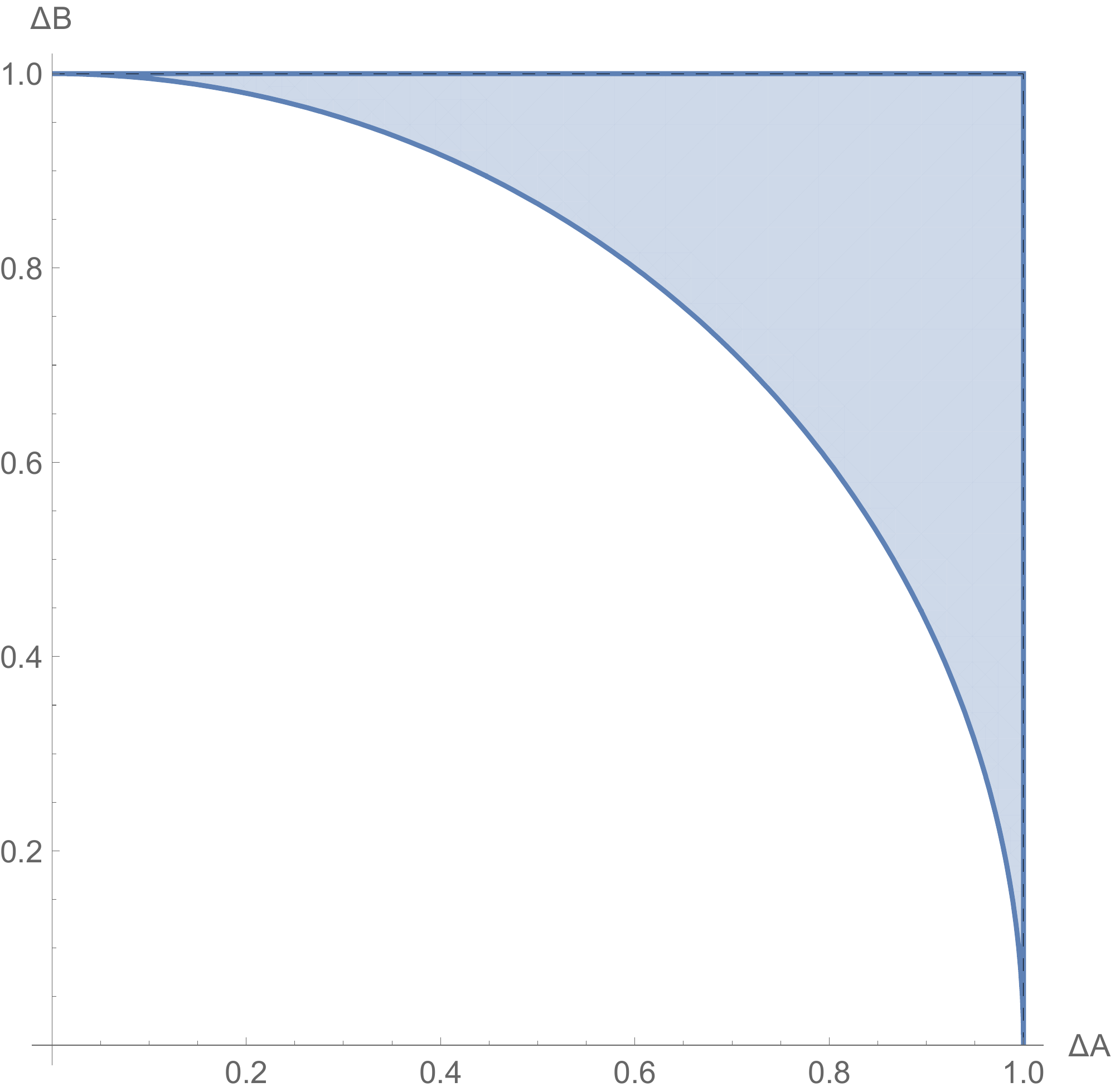}
\end{minipage}}
\subfigure[The pdf on $\cU(\pi/2)$]
{\begin{minipage}[b]{.49\linewidth}
\includegraphics[width=1\textwidth]{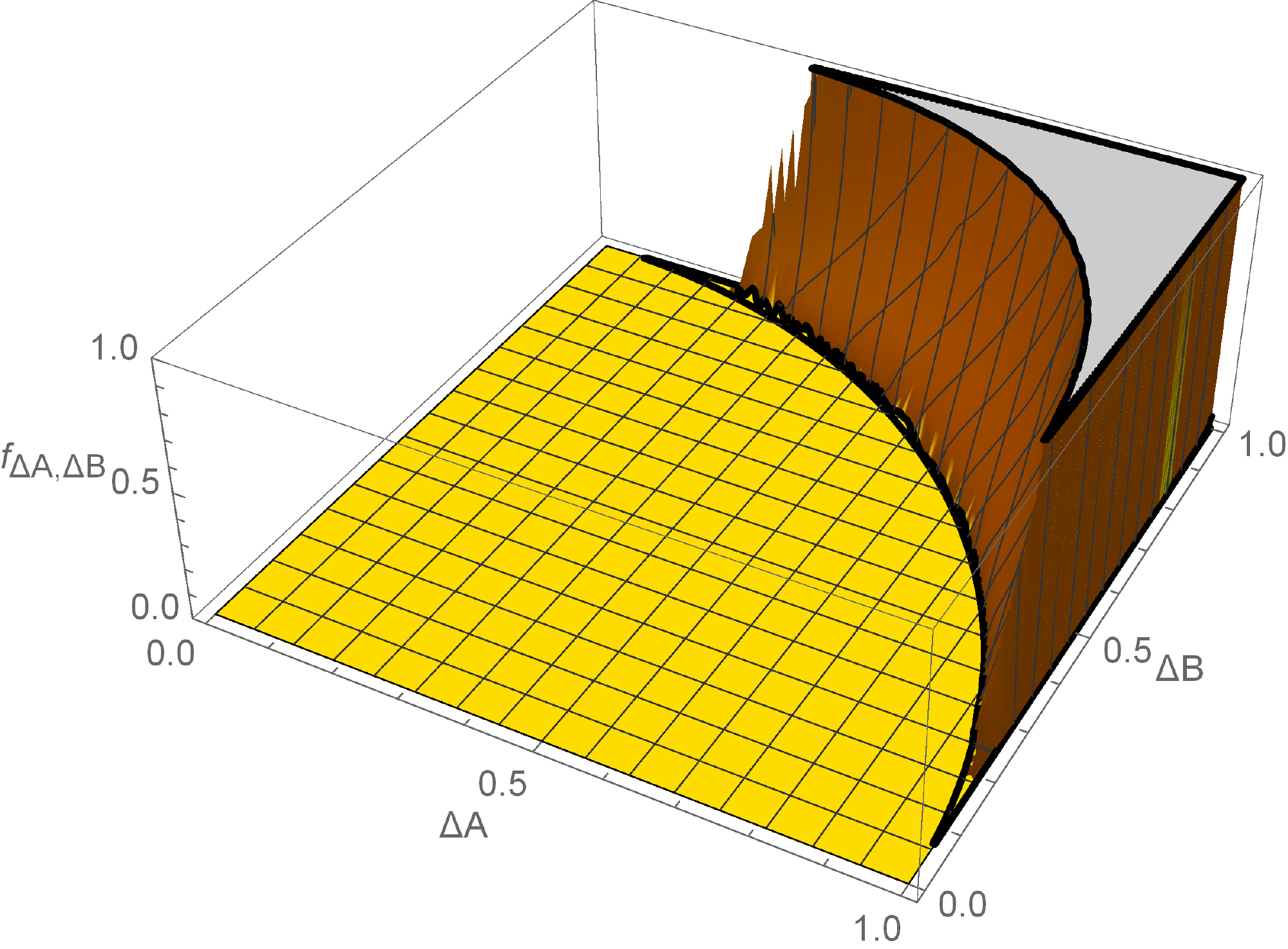}
\end{minipage}}
\caption{Plots of the uncertainty region $\cU(\theta)$ and the pdf
on it for a pair of qubit observables
$\bsA=a_0\I+\bsa\cdot\boldsymbol{\sigma}$ and
$\bsB=b_0\I+\bsb\cdot\boldsymbol{\sigma}$, where
$\abs{\bsa}=\abs{\bsb}=1$ and
$\Inner{\bsa}{\bsb}=\cos(\theta)$.}\label{fig:2DUR&PDF}
\end{figure}

\subsection{Proof of Theorem~\ref{th1}}\label{app-theorem1}

Based on the proof of Lemma~\ref{lem1}, we now present the proof of
Theorem~\ref{th1}.

\begin{proof}[Proof of Theorem~\ref{th1}]
Let $\sqrt{\abs{\bsa}^2-x^2}=X,\sqrt{\abs{\bsb}^2-y^2}=Y$ in
Lemma~\ref{lem1}, where $X\in[0,\abs{\bsa}],Y\in[0,\abs{\bsb}]$ due
to the fact that $x\in[0,\abs{\bsa}],y\in[0,\abs{\bsb}]$. Thus we
get that
\begin{eqnarray*}
\min
\Set{x^2+y^2:(x,y)\in\cU_{\Delta\bsA,\Delta\bsB}}=\Tr{\bsT_{\bsa,\bsb}}-\max\Set{
X^2+Y^2: (X,\pm Y)\bsT^{-1}_{\bsa,\bsb}(X,\pm Y)^\t\leqslant1}.
\end{eqnarray*}
It is easily seen that the objection function $x^2+y^2$, where
$(x,y)\in\cU_{\Delta\bsA,\Delta\bsB}$, attains its minimal value on
the boundary curve $\partial\cU_{\Delta\bsA,\Delta\bsB}$ of the
uncertainty region $\cU_{\Delta\bsA,\Delta\bsB}$; this also
corresponds to the objection function $X^2+Y^2$ attains its maximal
value on the boundary curve $(X,\pm Y)\bsT^{-1}_{\bsa,\bsb}(X,\pm
Y)^\t=1$.

Denote by $\lambda_k(\bsT_{\bsa,\bsb})(k=1,2)$ the two eigenvalues
of the matrix $\bsT_{\bsa,\bsb}$. By Spectral Decomposition Theorem,
we get that there exists orthogonal $\bsO\in\O(2)$ such that
\begin{eqnarray*}
\bsT_{\bsa,\bsb} =
\bsO\diag(\lambda_1(\bsT_{\bsa,\bsb}),\lambda_2(\bsT_{\bsa,\bsb}))\bsO^\t.
\end{eqnarray*}
Now let $(X',Y')^\t=\bsO(X,\pm Y)^\t$. Then
\begin{eqnarray*}
(X,\pm Y)\bsT^{-1}_{\bsa,\bsb}(X,\pm Y)^\t
=(X',Y')\diag(\lambda^{-1}_1(\bsT_{\bsa,\bsb}),\lambda^{-1}_2(\bsT_{\bsa,\bsb}))(X',Y')^\t=
\frac{{X'}^2}{\lambda_1(\bsT_{\bsa,\bsb})}+\frac{{Y'}^2}{\lambda_2(\bsT_{\bsa,\bsb})}.
\end{eqnarray*}
Because these rotations do not change the length of vectors, we get
that
\begin{eqnarray*}
\max \set{X^2+Y^2:(X,\pm Y)\bsT^{-1}_{\bsa,\bsb}(X,\pm Y)^\t=1}
&=&\max \Set{X^2+(\pm Y)^2:(X,\pm Y)\bsT^{-1}_{\bsa,\bsb}(X,\pm
Y)^\t=1}\\
&=&\max
\Set{{X'}^2+{Y'}^2:\frac{{X'}^2}{\lambda_1(\bsT_{\bsa,\bsb})}+\frac{{Y'}^2}{\lambda_2(\bsT_{\bsa,\bsb})}=1}.
\end{eqnarray*}
Again, the above optimization problem becomes
\begin{eqnarray*}
\min
\Set{x^2+y^2:(x,y)\in\partial\cU_{\Delta\bsA,\Delta\bsB}}=\Tr{\bsT_{\bsa,\bsb}}-\max\Set{{X'}^2+{Y'}^2:\frac{{X'}^2}{\lambda_1(\bsT_{\bsa,\bsb})}+\frac{{Y'}^2}{\lambda_2(\bsT_{\bsa,\bsb})}=1}.
\end{eqnarray*}
Therefore
\begin{eqnarray*}
\min
\Set{x^2+y^2:(x,y)\in\cU_{\Delta\bsA,\Delta\bsB}}=\Tr{\bsT_{\bsa,\bsb}}-\lambda_{\max}(\bsT_{\bsa,\bsb})
\end{eqnarray*}
implying that
\begin{eqnarray*}
(\Delta_\rho\bsA)^2+(\Delta_\rho\bsB)^2\geqslant\lambda_{\min}(\bsT_{\bsa,\bsb})
=
\frac12\Pa{\abs{\bsa}^2+\abs{\bsb}^2-\sqrt{(\abs{\bsa}^2-\abs{\bsb}^2)^2+4\Inner{\bsa}{\bsb}^2}}.
\end{eqnarray*}
Next we show the second inequality concerning standard deviations.
We recall that the boundary curve of uncertainty region
$\cU_{\Delta\bsA,\Delta\bsB}$ is given by
\begin{eqnarray*}
\abs{\bsb}^2x^2+\abs{\bsa}^2y^2+2\abs{\Inner{\bsa}{\bsb}}
\sqrt{(\abs{\bsa}^2-x^2)(\abs{\bsb}^2-y^2)}=
\abs{\bsa}^2\abs{\bsb}^2+\Inner{\bsa}{\bsb}^2.
\end{eqnarray*}
Let $\theta$ be the angle between $\bsa$ and $\bsb$. It is easily
seen that such boundary curve intersects two points, respectively,
with two axis, whose coordinates are $(\abs{\bsa}\sin\theta,0)$ and
$(0,\abs{\bsb}\sin\theta)$. It suffices to consider the part of such
boundary curve over the interval $[0,\abs{\bsa}\sin\theta]$, whose
equation is, indeed, given by
\begin{eqnarray*}
y = \frac{\abs{\bsb}}{\abs{\bsa}}\Pa{-\cos\theta x
+\sin\theta\sqrt{\abs{\bsa}^2-x^2}}.
\end{eqnarray*}
Then the optimization problem
$\min\Set{x+y:(x,y)\in\cU_{\Delta\bsA,\Delta\bsB}}$ is equivalent to
that choosing minimal positive real number $R$ such that the
straight line, described by the equation $x+y=R$, and the part of
the boundary curve over the closed interval
$[0,\abs{\bsa}\sin\theta]$, intersects only one point. Clearly $y =
\frac{\abs{\bsb}}{\abs{\bsa}}\Pa{-\cos\theta x
+\sin\theta\sqrt{\abs{\bsa}^2-x^2}}$, where
$x\in[0,\abs{\bsa}\sin\theta]$, is a concave function, which means
the part of the boundary curve on $[0,\abs{\bsa}\sin\theta]$ should
be above the straight line $x+y=R$. This situation appears if and
only if the straight line $x+y=R$ gets through one point whose
horizontal or vertical coordinate is
$\min(\abs{\bsa}\sin\theta,\abs{\bsb}\sin\theta)=\min(\abs{\bsa},\abs{\bsb})\sin\theta$.
Therefore $R=\min(\abs{\bsa},\abs{\bsb})\sin\theta$, which can be
also rewritten as
\begin{eqnarray*}
R=\min(\abs{\bsa},\abs{\bsb})\sin\theta
=\frac{\min(\abs{\bsa},\abs{\bsb})\max(\abs{\bsa},\abs{\bsb})}{\max(\abs{\bsa},\abs{\bsb})}\sin\theta=
\frac{\abs{\bsa\times\bsb}}{\max(\abs{\bsa},\abs{\bsb})}.
\end{eqnarray*}
Finally we obtain that
\begin{eqnarray*}
\Delta_\rho\bsA+\Delta_\rho\bsB
\geqslant\min\Set{x+y:(x,y)\in\cU_{\Delta\bsA,\Delta\bsB}}=\frac{\abs{\bsa\times\bsb}}{\max(\abs{\bsa},\abs{\bsb})},
\end{eqnarray*}
This completes the proof.
\end{proof}

\subsection{Proof of Lemma~\ref{lem2}}\label{app-lemma2}

The proof of Lemma~\ref{lem2} will be also recognized as
Propositions~\ref{prop9} and \ref{prop10}. In order to present the
proof of Lemma~\ref{lem2}, we next derive the joint probability
distribution density
\begin{eqnarray*}
f_{\Delta\bsA,\Delta\bsB,\Delta\bsC}(x,y,z) =
\int_{\rD(\complex^2)}\dif\mu (\rho)\delta(x-\Delta_\rho
\bsA)\delta(y-\Delta_\rho \bsB)\delta(z-\Delta_\rho \bsC)
\end{eqnarray*}
of the uncertainties
$(\Delta_\rho\bsA,\Delta_\rho\bsB,\Delta_\rho\bsC)$ of the three
qubit observables $\bsA, \bsB, \bsC$. For this purpose, we first
derive the joint probability distribution density
\begin{eqnarray*}
f_{\langle\bsA\rangle,\langle\bsB\rangle,\langle\bsC\rangle}(r,s,t)
=\int_{\rD(\complex^2)}
\dif\mu(\rho)\delta(r-\langle\bsA\rangle_\rho)\delta(s-\langle\bsB\rangle_\rho)\delta(t-\langle\bsC\rangle_\rho)
\end{eqnarray*}
of the mean value
$(\langle\bsA\rangle_\rho,\langle\bsB\rangle_\rho,\langle\bsC\rangle_\rho)$
of $\bsA, \bsB, \bsC, $ where $\rho$ is resulted from partially
tracing a subsystem over a Haar-distributed random pure state on
$\complex^2\ot\complex^2.$

\begin{prop}\label{prop9}
For three qubit observables $ \bsA
=a_0\I+\bsa\cdot\boldsymbol{\sigma}, \ \bsB
=b_0\I+\bsb\cdot\boldsymbol{\sigma}, \ \bsC
=c_0\I+\bsc\cdot\boldsymbol{\sigma} \
(a_0,\bsa),(b_0,\bsb),(c_0,\bsc)\in \real^4, $ let \begin{eqnarray*}
\bsT_{\bsa,\bsb,\bsc}=\Pa{\begin{array}{ccc}
                        \Inner{\bsa}{\bsa} & \Inner{\bsa}{\bsb} & \Inner{\bsa}{\bsc} \\
                        \Inner{\bsb}{\bsa} & \Inner{\bsb}{\bsb} & \Inner{\bsb}{\bsc} \\
                        \Inner{\bsc}{\bsa} & \Inner{\bsc}{\bsb} & \Inner{\bsc}{\bsc}
                      \end{array}
}
\end{eqnarray*}
(i) If $\rank(\bsT_{\bsa,\bsb,\bsc})=3$, i.e.,
$\set{\bsa,\bsb,\bsc}$ is linearly independent, then
\begin{eqnarray*}\label{eq:meanABC}
f_{\langle \bsA\rangle ,\langle \bsB\rangle , \langle \bsC\rangle
}(r,s,t)
=\frac{3(1+\sign(1-\omega_{\bsA,\bsB,\bsC}(r,s,t)))}{8\pi\sqrt{\det(\bsT_{\bsa,\bsb,\bsc})}}=
\begin{cases}
\frac3{4\pi\sqrt{\det(\bsT_{\bsa,\bsb,\bsc})}},&\text{if }\omega_{\bsA,\bsB,\bsC}(r,s,t)<1 \\
\frac3{8\pi\sqrt{\det(\bsT_{\bsa,\bsb,\bsc})}},&\text{if }\omega_{\bsA,\bsB,\bsC}(r,s,t)=1 \\
0,&\text{if }\omega_{\bsA,\bsB,\bsC}(r,s,t)>1
\end{cases}
\end{eqnarray*}
where
$\omega_{\bsA,\bsB,\bsC}(r,s,t)=\sqrt{(r-a_0,s-b_0,t-c_0)\bsT^{-1}_{\bsa,\bsb,\bsc}(r-a_0,s-b_0,t-c_0)^\t}$.

(ii) If $\rank(\bsT_{\bsa,\bsb,\bsc})=2,$ without loss of
generality, we assume that $\set{\bsa,\bsb}$ are linearly
independent and $\bsc=\kappa_{\bsa}\cdot\bsa+\kappa_{\bsb}\cdot\bsb$
for some $\kappa_{\bsa}$ and $\kappa_{\bsb}$ with
$\kappa_{\bsa}\kappa_{\bsb}\neq0$, then
\begin{eqnarray*}
f_{\langle\bsA\rangle,\langle\bsB\rangle,\langle\bsC\rangle}(r,s,t)=
\delta((t-c_0)-\kappa_{\bsa}(r-a_0)-\kappa_{\bsb}(s-b_0))f_{\langle\bsA\rangle,\langle\bsB\rangle}(r,s).
\end{eqnarray*}
where $f_{\langle\bsA\rangle,\langle\bsB\rangle}(r,s)$ is from
Proposition~\ref{prop7}.

(iii) If $\rank(\bsT_{\bsa,\bsb,\bsc})=1$, without loss of
generality, we may assume that
$\bsb=\kappa_{\bsb\bsa}\cdot\bsa,\bsc=\kappa_{\bsc\bsa}\cdot\bsa$
for some $\kappa_{\bsb\bsa}$ and $\kappa_{\bsc\bsa}$ with
$\kappa_{\bsb\bsa}\kappa_{\bsc\bsa}\neq0$, then
\begin{eqnarray*}
f_{\langle\bsA\rangle,\langle\bsB\rangle,\langle\bsC\rangle}(r,s,t)
=\delta((s-b_0)-\kappa_{\bsb\bsa}(r-a_0))\delta((t-c_0)-\kappa_{\bsc\bsa}(r-a_0))
f_{\langle\bsA\rangle}(r),
\end{eqnarray*}
where $f_{\langle\bsA\rangle}(r)$ is from Proposition~\ref{prop4}.
\end{prop}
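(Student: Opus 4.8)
The plan is to follow the same route as in the proof of Proposition~\ref{prop7}, replacing the planar Fourier analysis by its three--dimensional counterpart. First I would write the three Dirac deltas through their integral representation, introducing auxiliary variables $\alpha,\beta,\gamma$, so that
\begin{eqnarray*}
f_{\langle\bsA\rangle,\langle\bsB\rangle,\langle\bsC\rangle}(r,s,t)=\frac1{(2\pi)^3}\int_{\real^3}\dif\alpha\dif\beta\dif\gamma\,e^{\mathrm{i}(r\alpha+s\beta+t\gamma)}\int_{\rD(\complex^2)}\dif\mu(\rho)\,e^{-\mathrm{i}\Tr{(\alpha\bsA+\beta\bsB+\gamma\bsC)\rho}}.
\end{eqnarray*}
Since $\alpha\bsA+\beta\bsB+\gamma\bsC=(a_0\alpha+b_0\beta+c_0\gamma)\I+(\alpha\bsa+\beta\bsb+\gamma\bsc)\cdot\boldsymbol\sigma$, Proposition~\ref{prop3} evaluates the inner unitary average to $3e^{-\mathrm{i}(a_0\alpha+b_0\beta+c_0\gamma)}(\sin w-w\cos w)/w^3$ with $w=\abs{\alpha\bsa+\beta\bsb+\gamma\bsc}$.

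For part (i), linear independence of $\set{\bsa,\bsb,\bsc}$ makes $\bsT_{\bsa,\bsb,\bsc}$ positive definite, and $w^2=(\alpha,\beta,\gamma)\bsT_{\bsa,\bsb,\bsc}(\alpha,\beta,\gamma)^\t$. Setting $(\tilde\alpha,\tilde\beta,\tilde\gamma)^\t=\bsT^{1/2}_{\bsa,\bsb,\bsc}(\alpha,\beta,\gamma)^\t$ turns $w$ into the Euclidean length $\sqrt{\tilde\alpha^2+\tilde\beta^2+\tilde\gamma^2}$ at the cost of the Jacobian factor $1/\sqrt{\det(\bsT_{\bsa,\bsb,\bsc})}$, while the phase becomes $\tilde{\bsw}\cdot(\tilde\alpha,\tilde\beta,\tilde\gamma)$ with $\abs{\tilde{\bsw}}=\omega_{\bsA,\bsB,\bsC}(r,s,t)$. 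Passing to spherical coordinates $(\tilde\alpha,\tilde\beta,\tilde\gamma)=q\bsn$ and using the standard three--dimensional angular integral $\int_{S^2}e^{\mathrm{i}q\,\bsn\cdot\tilde{\bsw}}\dif\Omega(\bsn)=4\pi\sin(q\omega_{\bsA,\bsB,\bsC})/(q\omega_{\bsA,\bsB,\bsC})$ collapses everything to a single radial integral.

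The crux, which is the three--dimensional analogue of Proposition~\ref{prop6}, is then to establish
\begin{eqnarray*}
\Psi(\omega):=\int^\infty_0\frac{(\sin q-q\cos q)\sin(q\omega)}{q^2}\dif q=\begin{cases}\tfrac{\pi\omega}2,&0<\omega<1,\\[1mm]\tfrac\pi4,&\omega=1,\\[1mm]0,&\omega>1.\end{cases}
\end{eqnarray*}
I would obtain this by splitting $\Psi$ into $\int_0^\infty q^{-2}\sin q\sin(q\omega)\dif q$ and $\int_0^\infty q^{-1}\cos q\sin(q\omega)\dif q$, linearizing the products of trigonometric factors, and invoking the Dirichlet integral $\int_0^\infty q^{-1}\sin(aq)\dif q=\tfrac\pi2\sign(a)$ together with $\int_0^\infty q^{-2}(1-\cos(aq))\dif q=\tfrac{\pi}2\abs a$. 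Feeding $\Psi$ back yields $f_{\langle\bsA\rangle,\langle\bsB\rangle,\langle\bsC\rangle}=\frac{3}{2\pi^2\sqrt{\det(\bsT_{\bsa,\bsb,\bsc})}}\,\Psi(\omega_{\bsA,\bsB,\bsC})/\omega_{\bsA,\bsB,\bsC}$, which is exactly the claimed $\frac{3(1+\sign(1-\omega_{\bsA,\bsB,\bsC}))}{8\pi\sqrt{\det(\bsT_{\bsa,\bsb,\bsc})}}$; the value $\sign(0)=0$ automatically produces the boundary factor $\tfrac12$ at $\omega_{\bsA,\bsB,\bsC}=1$. I expect this radial integral, and in particular the careful treatment of the discontinuity at $\omega=1$, to be the only genuine obstacle.

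Finally, for the degenerate parts (ii) and (iii) I would mimic part (ii) of Proposition~\ref{prop7}. When $\rank(\bsT_{\bsa,\bsb,\bsc})=2$ with $\bsc=\kappa_{\bsa}\bsa+\kappa_{\bsb}\bsb$, I perform the triangular substitution $\alpha'=\alpha+\kappa_{\bsa}\gamma$, $\beta'=\beta+\kappa_{\bsb}\gamma$, $\gamma'=\gamma$, whose Jacobian is $1$ and under which $\alpha\bsa+\beta\bsb+\gamma\bsc=\alpha'\bsa+\beta'\bsb$ no longer depends on $\gamma'$. The $\gamma'$--integral then factors off as $\delta((t-c_0)-\kappa_{\bsa}(r-a_0)-\kappa_{\bsb}(s-b_0))$, while the remaining $(\alpha',\beta')$--integral reproduces $f_{\langle\bsA\rangle,\langle\bsB\rangle}(r,s)$ from Proposition~\ref{prop7}. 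The rank--one case (iii) is handled identically, peeling off two delta factors and landing on $f_{\langle\bsA\rangle}(r)$ from Proposition~\ref{prop4}.
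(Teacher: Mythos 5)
Your proposal follows essentially the same route as the paper's proof: the integral representation of the deltas, reduction via Proposition~\ref{prop3}, the $\bsT^{1/2}_{\bsa,\bsb,\bsc}$ change of variables, the spherical angular average $4\pi\sin(q\omega)/(q\omega)$, and the radial integral whose value you state (and verify correctly, matching the paper's $\pi^2(\omega-\abs{\omega-1}-\sign(\omega-1))/\omega$ form), with the degenerate cases handled by the same triangular substitutions. The only cosmetic difference is that you derive the radial integral explicitly from Dirichlet-type integrals where the paper simply quotes it, so the argument is correct and complete.
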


\begin{proof}
(i) If $\rank(\bsT_{\bsa,\bsb,\bsc})=3$, then
$\bsT_{\bsa,\bsb,\bsc}$ is invertible. Using the integral
representation of delta function, we have
\begin{eqnarray*}\label{eq:fABC}
f_{\langle\bsA\rangle,\langle\bsB\rangle,\langle\bsC\rangle}(r,s,t)
= \frac1{8\pi^3}\int_{\real^3} \dif \alpha\dif \beta\dif\gamma
e^{\mathrm{i}(r\alpha+s\beta+t\gamma)} \int_{\rD(\complex^2)}
\dif\mu(\rho) e^{-\mathrm{i}\Tr{\bsS^{(\alpha,\beta,\gamma)}\rho}}.
\end{eqnarray*}
where $\dif\alpha\dif\beta\dif \gamma$ is the Lebesgue volume
element in $\real^3$,
$\bsS^{(\alpha,\beta,\gamma)}=\alpha\bsA+\beta\bsB+\gamma\bsC$. By
Proposition~\ref{prop3}, we have
\begin{eqnarray*}
\int_{\rD(\complex^2)}
\dif\mu(\rho)e^{-\mathrm{i}\Tr{\bsS^{(\alpha,\beta,\gamma)}\rho}}
=3e^{-\mathrm{i}(a_0\alpha+b_0\beta+c_0\gamma)}
\frac{\sin\Pa{\abs{\alpha\bsa+\beta\bsb+\gamma\bsc}}-\abs{\alpha\bsa+\beta\bsb+\gamma\bsc}\cos\Pa{\abs{\alpha\bsa+\beta\bsb+\gamma\bsc}}}{\abs{\alpha\bsa+\beta\bsb+\gamma\bsc}^3}.
\end{eqnarray*}
Consequently,
\begin{eqnarray*}
& & f_{\langle\bsA\rangle,\langle\bsB\rangle,\langle\bsC\rangle}(r,s,t)\\
&&= \frac3{8\pi^3}\int_{\real^3} \dif \alpha\dif \beta \dif\gamma
e^{\mathrm{i}\Pa{(r-a_0)\alpha+(s-b_0)\beta+(t-c_0)\gamma}}
\frac{\sin\Pa{\abs{\alpha\bsa+\beta\bsb+\gamma\bsc}}-\abs{\alpha\bsa+\beta\bsb+\gamma\bsc}\cos\Pa{\abs{\alpha\bsa+\beta\bsb+\gamma\bsc}}}{\abs{\alpha\bsa+\beta\bsb+\gamma\bsc}^3}.
\end{eqnarray*}
Let
$(\tilde\alpha,\tilde\beta,\tilde\gamma)^\t=\bsT^{\frac12}_{\bsa,\bsb,\bsc}(\alpha,\beta,\gamma)^\t$,
then $\dif\alpha\dif\beta\dif\gamma =
\det^{-\frac12}\Pa{\bsT_{\bsa,\bsb,\bsc}}\dif\tilde\alpha\dif\tilde\beta\dif\tilde\gamma,$
and
\begin{eqnarray*}
\int _{\rD(\complex^2)}\dif\mu(\rho)
e^{-\mathrm{i}\Tr{\bsS^{(\alpha,\beta,\gamma)}\rho}} =
3e^{-\mathrm{i}(\tilde a_0\tilde\alpha+\tilde b_0\tilde\beta+\tilde
c_0\tilde\gamma)}\frac{\sin\Pa{\sqrt{\tilde\alpha^2+\tilde\beta^2+\tilde\gamma^2}}-\sqrt{\tilde\alpha^2+\tilde\beta^2+\tilde\gamma^2}\cos\Pa{\sqrt{\tilde\alpha^2+\tilde\beta^2+\tilde\gamma^2}}}{\Pa{\tilde\alpha^2+\tilde\beta^2+\tilde\gamma^2}^{\frac32}}.
\end{eqnarray*}
Let $(\tilde r,\tilde s,\tilde
t)=(r,s,t)\bsT_{\bsa,\bsb,\bsc}^{-\frac12}$ and $(\tilde a_0,\tilde
b_0,\tilde c_0)=(a_0,b_0,c_0)\bsT_{\bsa,\bsb,\bsc}^{-\frac12},$ then
\begin{eqnarray*}
f_{\langle\bsA\rangle,\langle\bsB\rangle,\langle\bsC\rangle}(r,s,t)
&=& \frac3{8\pi^3\sqrt{\det(\bsT_{\bsa,\bsb,\bsc})}}\int _{\real^3}
\dif \tilde\alpha\dif \tilde\beta \dif\tilde \gamma
e^{\mathrm{i}\Pa{(\tilde r-\tilde a_0)\tilde\alpha+(\tilde s-\tilde
b_0)\tilde\beta+(\tilde t-\tilde
c_0)\tilde\gamma}}\\
&&\times\frac{\sin\Pa{\sqrt{\tilde\alpha^2+\tilde\beta^2+\tilde\gamma^2}}-\sqrt{\tilde\alpha^2+\tilde\beta^2+\tilde\gamma^2}\cos\Pa{\sqrt{\tilde\alpha^2+\tilde\beta^2+\tilde\gamma^2}}}{\Pa{\tilde\alpha^2+\tilde\beta^2+\tilde\gamma^2}^{\frac32}}.
\end{eqnarray*}
Denote $\tilde\bsz_0=(\tilde r-\tilde a_0,\tilde s-\tilde b_0,\tilde
t-\tilde c_0)$ and
$\tilde\bsz=(\tilde\alpha,\tilde\beta,\tilde\gamma),$ the last
integral can be rewritten as
\begin{eqnarray*}
&&\int _{\real^3} \dif\tilde\bsz
e^{\mathrm{i}\Inner{\tilde\bsz_0}{\tilde\bsz}}\frac{\sin\abs{\tilde\bsz}-\abs{\tilde\bsz}\cos\abs{\tilde\bsz}}{\abs{\tilde\bsz}^3}
=\int^\infty_0\dif q q^2\int _{\real^3}\dif
\tilde\bsu\delta(1-\abs{\tilde\bsu})
e^{\mathrm{i}q\Inner{\tilde\bsz_0}{\tilde\bsu}}\frac{\sin q-q\cos q}{q^3}\\
&&=\int^\infty_0\dif q\frac{\sin q-q\cos q}q\int_{\real^3} \dif
\tilde\bsu\delta(1-\abs{\tilde\bsu})
e^{\mathrm{i}q\Inner{\tilde\bsz_0}{\tilde\bsu}},
\end{eqnarray*}
where $\dif\tilde \bsz=\dif\tilde\alpha\dif\tilde\beta\dif\tilde
\gamma$ is the Lebesgue volume element in $\real^3.$
 From
\begin{eqnarray*}
\int _{\real^3} \dif \tilde\bsu\delta(1-\abs{\tilde\bsu})
e^{\mathrm{i}q\Inner{\tilde\bsz_0}{\tilde\bsu}} = 4\pi\int
\dif\mu_{\haar}(\tilde\bsu)e^{\mathrm{i}q\Inner{\tilde\bsz_0}{\tilde\bsu}}
= 4\pi\times\frac12\int^1_{-1}\dif \tau
e^{\mathrm{i}q\abs{\tilde\bsz_0}\tau} =
4\pi\frac{\sin(q\abs{\tilde\bsz_0})}{q\abs{\tilde\bsz_0}},
\end{eqnarray*}
where, in the second equality, we used the probability density
function of inner product of two random unit vectors, a result has
been already obtained in \cite{LZ2019PRE}, we obtain
\begin{eqnarray*}
\int _{\real^3} \dif  \tilde\bsz
e^{\mathrm{i}\Inner{\tilde\bsz_0}{\tilde\bsz}}\frac{\sin\abs{\tilde\bsz}-\abs{\tilde\bsz}\cos\abs{\tilde\bsz}}{\abs{\tilde\bsz}^3}
&=&\frac{4\pi}{\abs{\tilde\bsz_0}}\int^\infty_0\dif
q\frac{\sin(\abs{\tilde\bsz_0}q)(\sin q-q\cos q)}{q^2}\\
&=&\pi^2\frac{\abs{\tilde\bsz_0}-\abs{\abs{\tilde\bsz_0}-1}-\sign(\abs{\tilde\bsz_0}-1)}{\abs{\tilde\bsz_0}}.
\end{eqnarray*}
Therefore
\begin{eqnarray*}
f_{\langle\bsA\rangle,\langle\bsB\rangle,\langle\bsC\rangle}(r,s,t)
&=&
\frac{3\Pa{\abs{\tilde\bsz_0}-\abs{\abs{\tilde\bsz_0}-1}-\sign(\abs{\tilde\bsz_0}-1)}}{8\pi\sqrt{\det(\bsT_{\bsa,\bsb,\bsc})}\abs{\tilde\bsz_0}}.
\end{eqnarray*}
Noting that $ \abs{\tilde\bsz_0}=\sqrt{(\tilde r-\tilde
a_0)^2+(\tilde s-\tilde b_0)^2+(\tilde t-\tilde
c_0)^2}=\omega_{\bsA,\bsB,\bsC}(r,s,t), $ we finally get
\begin{eqnarray*}
f_{\langle \bsA\rangle ,\langle \bsB\rangle , \langle \bsC\rangle
}(r,s,t)
=\frac{3(1+\sign(1-\omega_{\bsA,\bsB,\bsC}(r,s,t)))}{8\pi\sqrt{\det(\bsT_{\bsa,\bsb,\bsc})}}.
\end{eqnarray*}

(ii) In this case,
\begin{eqnarray*}
f_{\langle\bsA\rangle,\langle\bsB\rangle,\langle\bsC\rangle}(r,s,t)
&=& \frac3{(2\pi)^3}\int_{\real^3} \dif \alpha\dif\beta\dif\gamma
e^{\mathrm{i}\Pa{(r-a_0)\alpha+(s-b_0)\beta+(t-c_0)\gamma}}\notag\\
&&\times\Pa{\frac{\sin\abs{(\alpha+\gamma
\kappa_{\bsa})\bsa+(\beta+\gamma
\kappa_{\bsb})\bsb}}{\abs{(\alpha+\gamma
\kappa_{\bsa})\bsa+(\beta+\gamma
\kappa_{\bsb})\bsb}^3}-\frac{\cos\abs{(\alpha+\gamma
\kappa_{\bsa})\bsa+(\beta+\gamma
\kappa_{\bsb})\bsb}}{\abs{(\alpha+\gamma
\kappa_{\bsa})\bsa+(\beta+\gamma \kappa_{\bsb})\bsb}^2}}.
\end{eqnarray*}
Let $(\alpha',\beta',\gamma')=(\alpha+\gamma
\kappa_{\bsa},\beta+\gamma \kappa_{\bsb},\gamma),$ then the Jacobian
of $(\alpha,\beta,\gamma)\to(\alpha',\beta',\gamma')$ is given by
\begin{eqnarray*}
\det\Pa{\frac{\partial(\alpha',\beta',\gamma')}{\partial(\alpha,\beta,\gamma)}}
= \abs{\begin{array}{ccc}
         1 & 0 & \kappa_{\bsa} \\
         0 & 1 & \kappa_{\bsb} \\
         0 & 0 & 1
       \end{array}
}=1\neq0.
\end{eqnarray*}
Thus we have
\begin{eqnarray*}
f_{\langle\bsA\rangle,\langle\bsB\rangle,\langle\bsC\rangle}(r,s,t)
&=& \frac1{2\pi}\int _{\real }
e^{\mathrm{i}\gamma'\Pa{(t-c_0)-\kappa_{\bsa}(r-a_0)-\kappa_{\bsb}(s-b_0)}}
\dif\gamma'
\notag\\
&&\times\frac3{(2\pi)^2}\int _{\real^2}\dif
\alpha'\dif\beta'e^{\mathrm{i}\Pa{(r-a_0)\alpha'+(s-b_0)\beta'}}\frac{\sin\abs{\alpha'\bsa+\beta'\bsb}-\abs{\alpha'\bsa+\beta'\bsb}
\cos\abs{\alpha'\bsa+\beta'\bsb}}{\abs{\alpha'\bsa+\beta'\bsb}^3}\notag\\
&=&\delta\Pa{(t-c_0)-\kappa_{\bsa}(r-a_0)-\kappa_{\bsb}(s-b_0)}f_{\langle\bsA\rangle,\langle\bsB\rangle}(r,s),
\end{eqnarray*}
where $f_{\langle\bsA\rangle,\langle\bsB\rangle}(r,s)$ is from
Proposition~\ref{prop7}.

(iii) In this case, we have
\begin{eqnarray*}
f_{\langle\bsA\rangle,\langle\bsB\rangle,\langle\bsC\rangle}(r,s,t)
&=& \frac3{(2\pi)^3}\int _{\real^3} \dif \alpha\dif\beta\dif\gamma
e^{\mathrm{i}\Pa{(r-a_0)\alpha+(s-b_0)\beta+(t-c_0)\gamma}}\notag\\
&&\times\frac{\sin\Pa{\abs{\bsa}\abs{\alpha+\kappa_{\bsb\bsa}\beta+\kappa_{\bsc\bsa}\gamma}}-\abs{\bsa}\abs{\alpha+\kappa_{\bsb\bsa}\beta+\kappa_{\bsc\bsa}\gamma}
\cos\Pa{\abs{\bsa}\abs{\alpha+\kappa_{\bsb\bsa}\beta+\kappa_{\bsc\bsa}\gamma}}}{\abs{\bsa}^3\abs{\alpha+\kappa_{\bsb\bsa}\beta+\kappa_{\bsc\bsa}\gamma}^3}.
\end{eqnarray*}
Let $(\alpha',\beta',\gamma')=(\alpha+
\kappa_{\bsb\bsa}\beta+\kappa_{\bsc\bsa}\gamma,\beta,\gamma), $ then
the Jacobian of the transformation
$(\alpha,\beta,\gamma)\to(\alpha',\beta',\gamma')$ is given by
\begin{eqnarray*}
\det\Pa{\frac{\partial(\alpha',\beta',\gamma')}{\partial(\alpha,\beta,\gamma)}}
= \abs{\begin{array}{ccc}
         1 & \kappa_{\bsb\bsa} & \kappa_{\bsc\bsa} \\
         0 & 1 & 0 \\
         0 & 0 & 1
       \end{array}
}=1\neq0,
\end{eqnarray*}
and we have
\begin{eqnarray*}
& &f_{\langle\bsA\rangle,\langle\bsB\rangle,\langle\bsC\rangle}(r,s,t) \\
&&= \frac3{(2\pi)^3}\int _{\real^3} \dif
\alpha'\dif\beta'\dif\gamma'
e^{\mathrm{i}\Pa{(r-a_0)(\alpha'-\kappa_{\bsb\bsa}\beta'-\kappa_{\bsc\bsa}\gamma')+(s-b_0)\beta'+(t-c_0)\gamma'}}
\frac{\sin\Pa{\abs{\bsa}\abs{\alpha'}}-\abs{\bsa}\abs{\alpha'}
\cos\Pa{\abs{\bsa}\abs{\alpha'}}}{\abs{\bsa}^3\abs{\alpha'}^3}\notag\\
&&=\frac1{2\pi}\int _{\real }
e^{\mathrm{i}[(s-b_0)-\kappa_{\bsb}(r-a_0)]\beta'}\dif\beta'\times\frac1{2\pi}\int_\real
e^{\mathrm{i}[(t-c_0)-\kappa_{\bsc}(r-a_0)]\gamma'}\dif\gamma'\\
&&~~~~~~\times \frac3{2\pi}\int_{\real}
\dif\alpha'e^{\mathrm{i}(r-a_0)\alpha'}\frac{\sin\Pa{\abs{\bsa}\abs{\alpha'}}-\abs{\bsa}\abs{\alpha'}
\cos\Pa{\abs{\bsa}\abs{\alpha'}}}{\abs{\bsa}^3\abs{\alpha'}^3} \\
&&=\delta\Pa{(s-b_0)-\kappa_{\bsb\bsa}(r-a_0)}\delta\Pa{(t-c_0)-\kappa_{\bsc\bsa}(r-a_0)}f_{\langle\bsA\rangle}(r),
\end{eqnarray*}
where $f_{\langle\bsA\rangle}(r)$ is from Proposition~\ref{prop4}.
\end{proof}

\begin{prop}\label{prop10}
The joint probability distribution density of $(\Delta_\rho
\bsA,\Delta_\rho\bsB,\Delta_\rho \bsC)$ for a triple of qubit
observables $ \bsA =a_0\I+\bsa\cdot\boldsymbol{\sigma}, \ \bsB
=b_0\I+\bsb\cdot\boldsymbol{\sigma}, \ \bsC
=c_0\I+\bsc\cdot\boldsymbol{\sigma}, \
(a_0,\bsa),(b_0,\bsb),(c_0,\bsc)\in \real^4$, where
$\set{\bsa,\bsb,\bsc}$ is linearly independent, and $\rho$ is
resulted from partially tracing a subsystem over a Haar-distributed
random pure state on $\complex^2\ot\complex^2$, is given by
\begin{eqnarray*}
f_{\Delta\bsA,\Delta\bsB,\Delta\bsC}(x,y,z)
=\frac{2xyz}{\sqrt{(a^2-x^2)(b^2-y^2)(c^2-z^2)}}\sum_{j,k\in\set{\pm}}
f_{\langle \bsA\rangle ,\langle \bsB\rangle,\langle \bsC\rangle}(
x_+ ,y_j,z_k).
\end{eqnarray*}
Here $f_{\langle \bsA\rangle,\langle \bsB\rangle, \langle
\bsC\rangle}(r,s,t)$ is the joint probability distribution density
of the mean values  $(\langle \bsA\rangle_\rho,\langle
\bsB\rangle_\rho,\langle \bsC\rangle_\rho)$, determined by
Proposition~\ref{prop9}, and
\begin{eqnarray*}
x_\pm =a_0\pm\sqrt{\abs{\bsa}^2-x^2},\qquad y_\pm
=b_0\pm\sqrt{\abs{\bsb}^2-y^2},\qquad z_\pm
=c_0\pm\sqrt{\abs{\bsc}^2-z^2}.
\end{eqnarray*}
\end{prop}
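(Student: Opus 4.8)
The plan is to mirror the two-observable argument of Proposition~\ref{prop8}, now carrying three uncertainty variables through at once. First I would convert each standard-deviation delta into a delta of a quadratic in the corresponding mean value. Using $\delta(x-\Delta_\rho\bsA)=2x\,\delta(x^2-(\Delta_\rho\bsA)^2)$ for $x\geqslant0$ together with the qubit identity $(\Delta_\rho\bsA)^2=(\langle\bsA\rangle_\rho-\lambda_1(\bsA))(\lambda_2(\bsA)-\langle\bsA\rangle_\rho)$, one writes $\delta(x-\Delta_\rho\bsA)=2x\,\delta(g_x(\langle\bsA\rangle_\rho))$ with $g_x(r)=x^2-(r-\lambda_1(\bsA))(\lambda_2(\bsA)-r)$, and likewise introduces $h_y$ for $\bsB$ and $k_z$ for $\bsC$. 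Substituting these into the definition of $f_{\Delta\bsA,\Delta\bsB,\Delta\bsC}$ and inserting the identity $1=\int_{\real^3}\dif r\dif s\dif t\,\delta(r-\langle\bsA\rangle_\rho)\delta(s-\langle\bsB\rangle_\rho)\delta(t-\langle\bsC\rangle_\rho)$ pulls out the mean-value density of Proposition~\ref{prop9}(i), yielding
\begin{eqnarray*}
f_{\Delta\bsA,\Delta\bsB,\Delta\bsC}(x,y,z)=8xyz\int_{\real^3}\dif r\dif s\dif t\,
f_{\langle\bsA\rangle,\langle\bsB\rangle,\langle\bsC\rangle}(r,s,t)\,
\delta(g_x(r))\delta(h_y(s))\delta(k_z(t)).
\end{eqnarray*}

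Next I would resolve the three one-dimensional deltas. Each quadratic $g_x$ has the two simple roots $x_\pm=a_0\pm\sqrt{\abs{\bsa}^2-x^2}$ lying in $[\lambda_1(\bsA),\lambda_2(\bsA)]$ exactly when $x\in[0,\abs{\bsa})$, with $\abs{g_x'(x_\pm)}=2\sqrt{\abs{\bsa}^2-x^2}$, and analogously for $h_y$ (roots $y_\pm$) and $k_z$ (roots $z_\pm$). Applying the rule $\delta(g_x(r))=\sum_{r\in Z(g_x)}\abs{g_x'(r)}^{-1}\delta_r$ in each variable produces $2^3=8$ point masses located at $(x_i,y_j,z_k)$, $i,j,k\in\set{\pm}$, each carrying the common weight $\bigl(8\sqrt{(\abs{\bsa}^2-x^2)(\abs{\bsb}^2-y^2)(\abs{\bsc}^2-z^2)}\bigr)^{-1}$. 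After cancelling the factor $8$ against the prefactor $8xyz$, this leaves $\tfrac{xyz}{\sqrt{(\abs{\bsa}^2-x^2)(\abs{\bsb}^2-y^2)(\abs{\bsc}^2-z^2)}}\sum_{i,j,k\in\set{\pm}}f_{\langle\bsA\rangle,\langle\bsB\rangle,\langle\bsC\rangle}(x_i,y_j,z_k)$.

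Finally I would collapse the eight terms to four by exploiting the evenness of the quadratic form defining $\omega_{\bsA,\bsB,\bsC}$. Since $(x_i-a_0,y_j-b_0,z_k-c_0)=(i\sqrt{\abs{\bsa}^2-x^2},j\sqrt{\abs{\bsb}^2-y^2},k\sqrt{\abs{\bsc}^2-z^2})$, reversing all three signs leaves $\omega_{\bsA,\bsB,\bsC}$ unchanged, so by Proposition~\ref{prop9}(i) (where $f_{\langle\bsA\rangle,\langle\bsB\rangle,\langle\bsC\rangle}$ is a function of $\omega_{\bsA,\bsB,\bsC}$ alone) one has $f_{\langle\bsA\rangle,\langle\bsB\rangle,\langle\bsC\rangle}(x_i,y_j,z_k)=f_{\langle\bsA\rangle,\langle\bsB\rangle,\langle\bsC\rangle}(x_{-i},y_{-j},z_{-k})$. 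Pairing each $(i,j,k)$ with $(-i,-j,-k)$ groups the eight summands into two copies of the four terms with the first index fixed to $+$, so $\sum_{i,j,k}=2\sum_{j,k\in\set{\pm}}f_{\langle\bsA\rangle,\langle\bsB\rangle,\langle\bsC\rangle}(x_+,y_j,z_k)$, which gives precisely the claimed identity. I expect the only real care to lie in the bookkeeping of this sign symmetry, and in noting that the measure-zero boundary set $\set{\omega_{\bsA,\bsB,\bsC}=1}$, where Proposition~\ref{prop9}(i) takes its half-value, does not affect the density.
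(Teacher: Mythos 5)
Your proposal is correct and follows essentially the same route as the paper's own proof: convert each $\delta(x-\Delta_\rho\bsA)$ into a delta of the quadratic $g_x$ in the mean value, integrate against $f_{\langle\bsA\rangle,\langle\bsB\rangle,\langle\bsC\rangle}$, resolve the three one-dimensional deltas into eight point masses with the common weight $\bigl(8\sqrt{(\abs{\bsa}^2-x^2)(\abs{\bsb}^2-y^2)(\abs{\bsc}^2-z^2)}\bigr)^{-1}$, and halve the sum via the global sign symmetry of $\omega_{\bsA,\bsB,\bsC}$. No gaps; your remark about the measure-zero set $\set{\omega_{\bsA,\bsB,\bsC}=1}$ is a harmless extra precaution the paper leaves implicit.
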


\begin{proof}
Noting that
\begin{eqnarray*}
f_{\Delta\bsA,\Delta\bsB,\Delta\bsC}(x,y,z)&=&\int
_{\rD(\complex^2)}\dif\mu(\rho)\delta(x-\Delta_\rho
\bsA)\delta(y-\Delta_\rho
\bsB)\delta(z-\Delta_\rho \bsC) \notag\\
&=& 8xyz\int _{\rD(\complex^2)}\dif\mu(\rho)\delta\Pa{x^2-(\Delta
_\rho \bsA)^2 }\delta\Pa{y^2-( \Delta _\rho \bsB)^2
}\delta\Pa{z^2-(\Delta _\rho \bsC)^2 },
\end{eqnarray*}
and
\begin{eqnarray*}
\delta\Pa{x^2-(\Delta _\rho \bsA)^2 }\delta\Pa{y^2-(\Delta _\rho
\bsB)^2 }\delta\Pa{z^2-(\Delta_\rho \bsC)^2
}=\delta(g_x(r))\delta(h_y(s))\delta(l_z(t)),
\end{eqnarray*}
where $g_x(r)=x^2-(r-\lambda_1(\bsA))(\lambda_2(\bsA)-r),\
h_y(s)=y^2-(s-\lambda_1(\bsB))(\lambda_2(\bsB)-s),\
l_z(t)=z^2-(t-\lambda_1(\bsC))(\lambda_2(\bsC)-t),$ we have
\begin{eqnarray*}
f_{\Delta\bsA,\Delta\bsB,\Delta\bsC}(x,y,z)&=&8xyz\int _{\real^3}
\dif r\dif s\dif t
\delta(g_x(r))\delta(h_y(s))\delta(l_z(t))f_{\langle\bsA\rangle,\langle\bsB\rangle,\langle\bsC\rangle}(r,s,t) \\ &=&\frac{xyz}{\sqrt{(\abs{\bsa}^2-x^2)(\abs{\bsb}^2-y^2)(\abs{\bsc}^2-z^2)}}\sum_{i,j,k\in\set{\pm}}\Inner{\delta_{(r_i(x), s_j(y),t_k(z))}}{f_{\langle\bsA\rangle,\langle\bsB\rangle,\langle\bsC\rangle}} \\
& =&
\frac{xyz\sum_{i,j,k\in\set{\pm}}f_{\langle\bsA\rangle,\langle\bsB\rangle,\langle\bsC\rangle}( r_i(x), s_j(y) ,t_k(z))}{\sqrt{(\abs{\bsa}^2-x^2)(\abs{\bsb}^2-y^2)(\abs{\bsc}^2-z^2)}} \\
&=&
\frac{2xyz\sum_{j,k\in\set{\pm}}f_{\langle\bsA\rangle,\langle\bsB\rangle,\langle\bsC\rangle}(
r_+,
s_j,t_k)}{\sqrt{(\abs{\bsa}^2-x^2)(\abs{\bsb}^2-y^2)(\abs{\bsc}^2-z^2)}},
\end{eqnarray*}
where we have used the fact that
\begin{eqnarray*}
& & f_{\langle\bsA\rangle,\langle\bsB\rangle,\langle\bsC\rangle}(r_+, s_+, t_+ )=f_{\langle\bsA\rangle,\langle\bsB\rangle,\langle\bsC\rangle}(r_-, s_-, t_- ),\quad f_{\langle\bsA\rangle,\langle\bsB\rangle,\langle\bsC\rangle}(r_+, s_+, t_-)=f_{\langle\bsA\rangle,\langle\bsB\rangle,\langle\bsC\rangle}(r_-, s_-, t_+),\\
& &
f_{\langle\bsA\rangle,\langle\bsB\rangle,\langle\bsC\rangle}(r_+,
s_-,
t_+)=f_{\langle\bsA\rangle,\langle\bsB\rangle,\langle\bsC\rangle}(r_-,
s_+, t_-),\quad
f_{\langle\bsA\rangle,\langle\bsB\rangle,\langle\bsC\rangle}(r_+,
s_-,
t_-)=f_{\langle\bsA\rangle,\langle\bsB\rangle,\langle\bsC\rangle}(
r_-, s_+, t_+).
\end{eqnarray*}
This completes the proof.
\end{proof}

We now turn to the uncertainty region for a triple
$(\bsA,\bsB,\bsC)$ of qubit observables
$\bsA=a_0\I+\bsa\cdot\boldsymbol{\sigma}$,
$\bsB=b_0\I+\bsb\cdot\boldsymbol{\sigma}$ and
$\bsC=c_0\I+\bsc\cdot\boldsymbol{\sigma}$ with $(a_0,\bsa)$,
$(b_0,\bsb)$, $(c_0,\bsc)\in\real^4$, and $\set{\bsa,\bsb,\bsc}$
being linearly independent. Denote
$\bsu_{\epsilon_b,\epsilon_c}(x,y,z)=((\abs{\bsa}^2-x^2)^{1/2},\epsilon_b(\abs{\bsb}^2-y^2)^{1/2},
\epsilon_c(\abs{\bsc}^2-z^2)^{1/2})$, where
$\epsilon_b,\epsilon_c\in\set{\pm1}$, and
\begin{eqnarray*}
\bsT_{\bsa,\bsb,\bsc}=\Pa{\begin{array}{ccc}
                               \Inner{\bsa}{\bsa} & \Inner{\bsa}{\bsb} & \Inner{\bsa}{\bsc} \\
                               \Inner{\bsb}{\bsa} & \Inner{\bsb}{\bsb} & \Inner{\bsb}{\bsc} \\
                               \Inner{\bsc}{\bsa} & \Inner{\bsc}{\bsb} & \Inner{\bsc}{\bsc}
                             \end{array}}.
\end{eqnarray*}
Let $\gamma$, $\beta$ and $\alpha$ be the angles between $\bsa$ and
$\bsb$, $\bsa$ and $\bsc$, $\bsb$ and $\bsc$, respectively, where
$\alpha,\beta,\gamma\in(0,\pi)$. Set $\phi(t_1,t_2,t_3)=\cos
(t_1)-\cos (t_2)\cos (t_3)$.

\begin{proof}[Proof of Lemma~\ref{lem2}]
In fact, due to the relation between
$f_{\Delta\bsA,\Delta\bsB,\Delta\bsC}$ and
$f_{\langle\bsA\rangle,\langle\bsB\rangle,\langle\bsC\rangle}$, in
Proposition~\ref{prop10}, the support of
$f_{\Delta\bsA,\Delta\bsB,\Delta\bsC}$ can be identified by the
support of
$f_{\langle\bsA\rangle,\langle\bsB\rangle,\langle\bsC\rangle}$ which
can be seen from Proposition~\ref{prop9} (i),
\begin{eqnarray*}
\supp(f_{\langle\bsA\rangle,\langle\bsB\rangle,\langle\bsC\rangle})
= \Set{(r,s,t)\in\real^2:\omega_{\bsA,\bsB,\bsC}(r,s,t)\leqslant1}.
\end{eqnarray*}
With the support of
$f_{\langle\bsA\rangle,\langle\bsB\rangle,\langle\bsC\rangle}$, now
we can make an analysis of the support of
$f_{\Delta\bsA,\Delta\bsB,\Delta\bsC}$. Note that
$f_{\Delta\bsA,\Delta\bsB,\Delta\bsC}(x,y,z)$ is defined on the
first quadrant $\real^3_{\geqslant0}$, if $xyz>0$, then
$f_{\Delta\bsA,\Delta\bsB,\Delta\bsC}(x,y,z)=0$ if and only if
$$
\sum_{i,j,k\in\set{\pm}}f_{\langle\bsA\rangle,\langle\bsB\rangle,\langle\bsC\rangle}(r_i(x),s_j(y),t_k(z))=0,
$$
i.e.,
$f_{\langle\bsA\rangle,\langle\bsB\rangle,\langle\bsC\rangle}(r_i(x),s_j(y),t_k(z))=0$
because
$f_{\langle\bsA\rangle,\langle\bsB\rangle,\langle\bsC\rangle}$ is a
non-negative function. This means that all eight points
$(r_\pm(x),s_\pm(y),t_\pm(z))$ are not in the support of
$f_{\langle\bsA\rangle,\langle\bsB\rangle,\langle\bsC\rangle}$.
Therefore the uncertainty region (i.e.,the support of
$f_{\Delta\bsA,\Delta\bsB,\Delta\bsC}$) of $\bsA,\bsB$, and $\bsC$
is given by the following set:
\begin{eqnarray*}
\cU_{\Delta\bsA,\Delta\bsB,\Delta\bsC}=\supp(f_{\Delta\bsA,\Delta\bsB,\Delta\bsC})=
D^{++}_{\bsa,\bsb,\bsc}\cup D^{+-}_{\bsa,\bsb,\bsc}\cup
D^{-+}_{\bsa,\bsb,\bsc}\cup D^{--}_{\bsa,\bsb,\bsc},
\end{eqnarray*}
where, via
$\bsu_{\epsilon_b,\epsilon_c}(x,y,z)=\Pa{\sqrt{\abs{\bsa}^2-x^2},\epsilon_b\sqrt{\abs{\bsb}^2-y^2},
\epsilon_c\sqrt{\abs{\bsc}^2-z^2}}$,
\begin{eqnarray*}
D^{\epsilon_b\epsilon_c}_{\bsa,\bsb,\bsc}=\Set{(x,y,z)\in\real^3_{\geqslant0}:
\bsu_{\epsilon_b,\epsilon_c}(x,y,z)\bsT^{-1}_{\bsa,\bsb,\bsc}\bsu^\t_{\epsilon_b,\epsilon_c}(x,y,z)\leqslant1}
\end{eqnarray*}
for $\epsilon_b,\epsilon_c\in\set{\pm}$. Now the inequality
\begin{eqnarray*}
\bsu_{\epsilon_b,\epsilon_c}(x,y,z)\bsT^{-1}_{\bsa,\bsb,\bsc}\bsu^\t_{\epsilon_b,\epsilon_c}(x,y,z)\leqslant1
\end{eqnarray*}
is reduced into the following:
\begin{eqnarray*}
&&\Big[\det(\bsT_{\bsb,\bsc})x^2+\det(\bsT_{\bsa,\bsc})y^2+\det(\bsT_{\bsa,\bsb})z^2+2(\Inner{\bsa}{\bsb}\Inner{\bsa}{\bsc}\Inner{\bsb}{\bsc}-\abs{\bsa}^2\abs{\bsb}^2\abs{\bsc}^2)\\
&&~~~+2\epsilon_b(\abs{\bsc}^2\Inner{\bsa}{\bsb}-\Inner{\bsa}{\bsc}\Inner{\bsb}{\bsc})\sqrt{(\abs{\bsa}^2-x^2)(\abs{\bsb}^2-y^2)}\\
&&~~~+2\epsilon_b\epsilon_c(\abs{\bsa}^2\Inner{\bsb}{\bsc}-\Inner{\bsa}{\bsb}\Inner{\bsa}{\bsc})\sqrt{(\abs{\bsb}^2-y^2)(\abs{\bsc}^2-z^2)}\\
&&~~~+2\epsilon_c(\abs{\bsb}^2\Inner{\bsa}{\bsc}-\Inner{\bsa}{\bsb}\Inner{\bsb}{\bsc})\sqrt{(\abs{\bsc}^2-z^2)(\abs{\bsa}^2-x^2)}\Big]\\
&&\times\sign
\Pa{\abs{\bsa}^2\Inner{\bsb}{\bsc}^2+\abs{\bsb}^2\Inner{\bsa}{\bsc}^2+c^2\Inner{\bsa}{\bsb}^2-\abs{\bsa}^2\abs{\bsb}^2\abs{\bsc}^2-2\Inner{\bsa}{\bsb}\Inner{\bsa}{\bsc}\Inner{\bsb}{\bsc}}\\
&&\leqslant0.
\end{eqnarray*}
Denote the angle between $\bsa$ and $\bsb$ by $\gamma$; the angle
between $\bsa$ and $\bsc$ by $\beta$; the angle between $\bsb$ and
$\bsc$ by $\alpha$. Thus
\begin{eqnarray*}
\Inner{\bsa}{\bsb} = \abs{\bsa}\abs{\bsb}\cos(\gamma),\quad
\Inner{\bsa}{\bsc} =\abs{\bsa}\abs{\bsc}\cos(\beta),\quad
\Inner{\bsb}{\bsc} = \abs{\bsb}\abs{\bsc}\cos(\alpha).
\end{eqnarray*}
We also write $\phi(t_1,t_2,t_3):=\cos (t_1)-\cos (t_2)\cos (t_3)$.
By scaling transformations:
\begin{eqnarray*}
(x,y,z)\longmapsto (ax,by,cz)
\end{eqnarray*}
without loss of generality, we assume
$\abs{\bsa}=\abs{\bsb}=\abs{\bsc}=1$, then the above inequalities is
equivalent to:
\begin{eqnarray*}
&&\big[\sin^2(\alpha)x^2 + \sin^2(\beta)y^2 + \sin^2(\gamma)z^2 \\
&&~~~+2\epsilon_b\phi(\gamma,\alpha,\beta)\sqrt{(1-x^2)(1-y^2)}+2\epsilon_b\epsilon_c\phi(\alpha,\beta,\gamma)\sqrt{(1-y^2)(1-z^2)}\\
&&~~~+2\epsilon_c\phi(\beta,\gamma,\alpha)\sqrt{(1-z^2)(1-x^2)}+2(\cos(\alpha)\cos(\beta)\cos(\gamma)-1)\big]\\
&&\times\sign\Pa{\cos^2(\alpha)+\cos^2(\beta)+\cos^2(\gamma)-2\cos(\alpha)\cos(\beta)\cos(\gamma)-1}\\
&&\leqslant0.
\end{eqnarray*}
Note that three angles $\alpha,\beta,\gamma\in(0,\pi)$ should be
such that
$$
\det(\bsT_{\bsa,\bsb,\bsc})=1-\cos^2(\alpha)-\cos^2(\beta)-\cos^2(\gamma)+2\cos(\alpha)\cos(\beta)\cos(\gamma)>0
$$
due to the fact that $\set{\bsa,\bsb,\bsc}$ is linearly independent.
Because
$$
-\det(\bsT_{\bsa,\bsb,\bsc})=\cos^2(\alpha)+\cos^2(\beta)+\cos^2(\gamma)-2\cos(\alpha)\cos(\beta)\cos(\gamma)-1=0
$$
if and only if $\cos(\gamma)=\cos(\alpha+\beta)$ or
$\cos(\gamma)=\cos(\alpha-\beta)$; if and only if
$\cos(\beta)=\cos(\alpha+\gamma)$ or
$\cos(\beta)=\cos(\alpha-\gamma)$; if and only if
$\cos(\alpha)=\cos(\beta+\gamma)$ or
$\cos(\alpha)=\cos(\beta-\gamma)$. Thus the necessary and sufficient
condition of $\det(\bsT_{\bsa,\bsb,\bsc})=0$ is either one of the
following statements:
\begin{enumerate}
\item[(i)] $\alpha=\beta+\gamma$;
\item[(ii)] $\beta=\gamma+\beta$;
\item[(iii)] $\gamma=\alpha+\beta$;
\item[(iv)] $\alpha+\beta+\gamma=2\pi$.
\end{enumerate}
In other words, $\det(\bsT_{\bsa,\bsb,\bsc})>0$ if and only if the
following four inequalities should be satisfied
\begin{eqnarray*}
\alpha,\beta,\gamma\in(0,\pi):\quad\alpha<\beta+\gamma,\quad
\beta<\gamma+\alpha, \quad\gamma<\alpha+\beta,\quad
\alpha+\beta+\gamma<2\pi.
\end{eqnarray*}
Under these conditions, the above inequalities is reduced to
\begin{eqnarray*}
&&2\epsilon_b\phi(\gamma,\alpha,\beta)\sqrt{(1-x^2)(1-y^2)}+2\epsilon_b\epsilon_c\phi(\alpha,\beta,\gamma)\sqrt{(1-y^2)(1-z^2)}+
2\epsilon_c\phi(\beta,\gamma,\alpha)\sqrt{(1-z^2)(1-x^2)}\notag\\
&&+\sin^2(\alpha)x^2 + \sin^2(\beta)y^2 +
\sin^2(\gamma)z^2\geqslant2(1-\cos(\alpha)\cos(\beta)\cos(\gamma)).
\end{eqnarray*}
In order to get a more compact form of the above formula, we find
that we can use the absolute function to get rid of $\epsilon_b$
with keeping $\epsilon_c\equiv\epsilon\in\set{\pm1}$ due to the
independence of $\epsilon_b$ and $\epsilon_c$. Thus they can also be
rewritten as the following form by multiplying $\frac12$ on both
sides:
\begin{eqnarray*}
&&\abs{\phi(\gamma,\alpha,\beta)\sqrt{1-x^2}
+\epsilon\phi(\alpha,\beta,\gamma)\sqrt{1-z^2}}\sqrt{1-y^2}+
\epsilon\phi(\beta,\gamma,\alpha)\sqrt{(1-z^2)(1-x^2)}\notag\\
&&+\frac12\Br{\sin^2(\alpha)x^2 + \sin^2(\beta)y^2 +
\sin^2(\gamma)z^2}\geqslant1-\cos(\alpha)\cos(\beta)\cos(\gamma)\quad
(\epsilon\in\set{\pm1}),
\end{eqnarray*}
where $x,y,z\in[0,1]$.
\end{proof}

\subsection{Proof of Theorem~\ref{th2}}\label{app-theorem2}

Based on the proof of Lemma~\ref{lem2}, we present the proof of
Theorem~\ref{th2} as follows.

\begin{proof}[Proof of Theorem~\ref{th2}]
Let $\sqrt{\abs{\bsa}^2-x^2}=X,\sqrt{\abs{\bsb}^2-y^2}=Y$, and
$\sqrt{\abs{\bsc}^2-z^2}=Z$ in Lemma~\ref{lem2}, where
$X\in[0,\abs{\bsa}],Y\in[0,\abs{\bsb}]$, and $Z\in[0,\abs{\bsc}]$
due to the fact that $x\in[0,\abs{\bsa}],y\in[0,\abs{\bsb}]$, and
$z\in[0,\abs{\bsc}]$. Thus we get that
\begin{eqnarray*}
&&\min
\Set{x^2+y^2+z^2:(x,y,z)\in\cU_{\Delta\bsA,\Delta\bsB,\Delta\bsC}}\\
&&=\Tr{\bsT_{\bsa,\bsb,\bsc}}-\max\Set{ X^2+Y^2+Z^2:
(X,\epsilon_bY,\epsilon_cZ)\bsT^{-1}_{\bsa,\bsb,\bsc}(X,\epsilon_bY,\epsilon_cZ)^\t\leqslant1}(\epsilon_b,\epsilon_c\in\set{\pm1}).
\end{eqnarray*}
It is easily seen that the objection function $x^2+y^2+z^2$, where
$(x,y,z)\in\cU_{\Delta\bsA,\Delta\bsB,\Delta\bsC}$, attains its
minimal value on the boundary surface
$\partial\cU_{\Delta\bsA,\Delta\bsB,\Delta\bsC}$ of the uncertainty
region $\cU_{\Delta\bsA,\Delta\bsB,\Delta\bsC}$; this also
corresponds to the objection function $X^2+Y^2+Z^2$ attains its
maximal value on the boundary surface
$$
(X,\epsilon_bY,\epsilon_cZ)\bsT^{-1}_{\bsa,\bsb,\bsc}(X,\epsilon_bY,\epsilon_cZ)^\t=1.
$$

Denote by $\lambda_k(\bsT_{\bsa,\bsb,\bsc})(k=1,2,3)$ the three
eigenvalues of the matrix $\bsT_{\bsa,\bsb,\bsc}$. By Spectral
Decomposition Theorem, we get that there exists orthogonal
$\bsO\in\O(3)$ such that
\begin{eqnarray*}
\bsT_{\bsa,\bsb,\bsc} =
\bsO\diag(\lambda_1(\bsT_{\bsa,\bsb,\bsc}),\lambda_2(\bsT_{\bsa,\bsb,\bsc}),\lambda_3(\bsT_{\bsa,\bsb,\bsc}))\bsO^\t.
\end{eqnarray*}
Now let $(X',Y',Z')^\t=\bsO(X,\epsilon_bY,\epsilon_cZ)^\t$. Then
\begin{eqnarray*}
(X,\epsilon_bY,\epsilon_cZ)\bsT^{-1}_{\bsa,\bsb,\bsc}(X,\epsilon_bY,\epsilon_cZ)^\t
&&=(X',Y',Z')\diag(\lambda^{-1}_1(\bsT_{\bsa,\bsb,\bsc}),\lambda^{-1}_2(\bsT_{\bsa,\bsb,\bsc}),\lambda^{-1}_3(\bsT_{\bsa,\bsb,\bsc}))(X',Y',Z')^\t\\
&&=
\frac{{X'}^2}{\lambda_1(\bsT_{\bsa,\bsb,\bsc})}+\frac{{Y'}^2}{\lambda_2(\bsT_{\bsa,\bsb,\bsc})}+\frac{{Z'}^2}{\lambda_3(\bsT_{\bsa,\bsb,\bsc})}.
\end{eqnarray*}
Because these rotations do not change the length of vectors, we get
that
\begin{eqnarray*}
&&\max
\set{X^2+Y^2+Z^2:(X,\epsilon_bY,\epsilon_cZ)\bsT^{-1}_{\bsa,\bsb,\bsc}(X,\epsilon_bY,\epsilon_cZ)^\t=1}\\
&&=\max
\Set{X^2+(\epsilon_bY)^2+(\epsilon_cZ)^2:(X,\epsilon_bY,\epsilon_cZ)\bsT^{-1}_{\bsa,\bsb,\bsc}(X,\epsilon_bY,\epsilon_cZ)^\t=1}\\
&&=\max
\Set{{X'}^2+{Y'}^2+{Z'}^2:\frac{{X'}^2}{\lambda_1(\bsT_{\bsa,\bsb,\bsc})}+\frac{{Y'}^2}{\lambda_2(\bsT_{\bsa,\bsb,\bsc})}+\frac{{Z'}^2}{\lambda_3(\bsT_{\bsa,\bsb,\bsc})}=1}.
\end{eqnarray*}
Again, the above optimization problem becomes
\begin{eqnarray*}
&&\min
\Set{x^2+y^2+z^2:(x,y,z)\in\partial\cU_{\Delta\bsA,\Delta\bsB,\Delta\bsC}}\\
&&=\Tr{\bsT_{\bsa,\bsb,\bsc}}-\max\Set{{X'}^2+{Y'}^2+{Z'}^2:\frac{{X'}^2}{\lambda_1(\bsT_{\bsa,\bsb,\bsc})}+\frac{{Y'}^2}{\lambda_2(\bsT_{\bsa,\bsb,\bsc})}+\frac{{Z'}^2}{\lambda_3(\bsT_{\bsa,\bsb,\bsc})}=1}.
\end{eqnarray*}
Therefore
\begin{eqnarray*}
\min
\Set{x^2+y^2+z^2:(x,y,z)\in\cU_{\Delta\bsA,\Delta\bsB,\Delta\bsC}}=\Tr{\bsT_{\bsa,\bsb,\bsc}}-\lambda_{\max}(\bsT_{\bsa,\bsb,\bsc})
\end{eqnarray*}
implying that
\begin{eqnarray*}
(\Delta_\rho\bsA)^2+(\Delta_\rho\bsB)^2+(\Delta_\rho\bsC)^2\geqslant
\Tr{\bsT_{\bsa,\bsb,\bsc}}-\lambda_{\max}(\bsT_{\bsa,\bsb,\bsc}).
\end{eqnarray*}
We are done.
\end{proof}

\subsection{Proof of Lemma~\ref{lem3}}\label{app-lemma3}

The proof of Lemma~\ref{lem3} is based on Propositions~\ref{prop11}
and \ref{prop12}, which are described as follows.

For an $n$-tuple of qubit observables $(\bsA_1,\ldots,\bsA_n)$ where
\begin{eqnarray*}\label{eq:k}
\bsA_k=a^{(k)}_0\I+\bsa_k\cdot\boldsymbol{\sigma},\qquad(a^{(k)}_0,
\bsa_k)\in\real^4, \quad k=1, 2,\cdots,n
\end{eqnarray*}
 the eigenvalues of $\bsA_k$ are given, respectively, by
\begin{eqnarray*}
\lambda_i(\bsA_k)=a^{(k)}_0+(-1)^i\abs{\bsa_k},\quad i=1,2; k=1,
\ldots,n
\end{eqnarray*}
By the assumption that $\lambda_2(\bsA_k)>\lambda_1(\bsA_k)$ for all
$k=1,\ldots,n$, we see that $\abs{\bsa_k}>0, k=1, 2, \cdots, n$. Let
\begin{eqnarray*}\label{eq:EABN-def}
f_{\langle\bsA_1\rangle,\cdots,\langle\bsA_n\rangle}(r_1,\ldots,r_n)
= \int_{\rD(\complex^2)} \dif\mu(\rho)
\prod^n_{k=1}\delta(r_k-\langle \bsA_k\rangle_\rho)
\end{eqnarray*}
be the joint probability density of the mean values of
$(\langle\bsA_1\rangle_\rho,\cdots,\langle\bsA_n\rangle_\rho)$.
Denote
\begin{eqnarray*}
\bsT_{\bsa_1,\ldots,\bsa_n} = \Pa{\begin{array}{cccc}
                                    \Inner{\bsa_1}{\bsa_1} & \Inner{\bsa_1}{\bsa_2} & \cdots & \Inner{\bsa_1}{\bsa_n} \\
                                    \Inner{\bsa_2}{\bsa_1} & \Inner{\bsa_2}{\bsa_2} & \cdots & \Inner{\bsa_2}{\bsa_n}\\
                                    \vdots & \vdots & \ddots &
                                    \vdots\\
                                    \Inner{\bsa_n}{\bsa_1} & \Inner{\bsa_n}{\bsa_2} & \cdots & \Inner{\bsa_n}{\bsa_n}
                                  \end{array}
}.
\end{eqnarray*}

\begin{prop}\label{prop11}
Let $(\bsA_1,\ldots,\bsA_n)$ be an $n$-tuple of qubit observables
given by $ \bsA_k=a^{(k)}_0\I+\bsa_k\cdot\boldsymbol{\sigma},\
(a^{(k)}_0, \bsa_k)\in\real^4, k=1,\cdots,n,$ and let
$f_{\langle\bsA_1\rangle,\cdots,\langle\bsA_n\rangle}(r_1,\ldots,r_n)$
be the joint probability distribution density of the mean values
$(\langle\bsA_1\rangle_\rho,\ldots,\langle\bsA_n\rangle_\rho),$
where $\rho$ is resulted from partially tracing a subsystem over a
Haar-distributed random pure state on $\complex^2\ot\complex^2.$

(i) If $\rank(\bsT_{\bsa_1,\ldots,\bsa_n})=3$, then without loss of
generality, we can take nonzero coefficients $\kappa_{lj},
l=4,\cdots,n, j=1,2,3$, such that $
\bsa_l=\kappa_{l1}\bsa_1+\kappa_{l2}\bsa_2+\kappa_{l3}\bsa_3$. In
this case,
\begin{eqnarray*}
f_{\langle\bsA_1\rangle,\cdots,\langle\bsA_n\rangle}(r_1,\ldots,r_n)
&=&
f_{\langle\bsA_1\rangle,\langle\bsA_2\rangle,\langle\bsA_3\rangle}(r_1,r_2,r_3)\prod^n_{l=4}\delta
\Big ((r_l-a^{(l)}_0)-\sum^3_{j=1}\kappa_{lj}(r_j-a^{(j)}_0)\Big ),
\end{eqnarray*}
where
$f_{\langle\bsA_1\rangle,\langle\bsA_2\rangle,\langle\bsA_3\rangle}$
is determined similarly by Proposition~\ref{prop9}.

(ii) If $\rank(\bsT_{\bsa_1,\ldots,\bsa_n})=2$, then  without loss
of generality, we  take nonzero coefficients $\eta_{li},
l=2,\cdots,n, i=1,2, $ such that $
\bsa_l=\eta_{l1}\bsa_1+\eta_{l2}\bsa_2.$ In this case
\begin{eqnarray*}
f_{\langle\bsA_1\rangle,\cdots,\langle\bsA_n\rangle}(r_1,\ldots,r_n)
&=&
f_{\langle\bsA_1\rangle,\langle\bsA_2\rangle}(r_1,r_2)\prod^n_{l=3}\delta
\Big ((r_l-a^{(l)}_0)-\sum^2_{i=1}\eta_{li}(r_i-a^{(i)}_0)\Big),
\end{eqnarray*}
where $f_{\langle\bsA_1\rangle,\langle\bsA_2\rangle}$ is determined
similarly by Proposition~\ref{prop7}.

(iii) If $\rank(\bsT_{\bsa_1,\ldots,\bsa_n})=1$, then   without loss
of generality, we can take nonzero coefficient $\kappa_{l},
l=2,\cdots,n, $ such that $\bsa_l=\kappa_{l}\bsa_1.$ In this case,
\begin{eqnarray*}
f_{\langle\bsA_1\rangle,\cdots,\langle\bsA_n\rangle}(r_1,\ldots,r_n)
&=&
f_{\langle\bsA_1\rangle}(r_1)\prod^n_{l=2}\delta\Pa{(r_l-a^{(l)}_0)-\kappa_{l}(r_1-a^{(1)}_0)},
\end{eqnarray*}
where $f_{\langle\bsA_1\rangle}$ is determined similarly by
Proposition~\ref{prop4}.
\end{prop}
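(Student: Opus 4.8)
The plan is to follow exactly the Fourier-analytic strategy already used in Propositions~\ref{prop7} and \ref{prop9}, now carried out for an arbitrary number of observables. First I would write the joint density through the integral representation of the Dirac delta,
\begin{eqnarray*}
f_{\langle\bsA_1\rangle,\cdots,\langle\bsA_n\rangle}(r_1,\ldots,r_n) = \frac1{(2\pi)^n}\int_{\real^n} \prod_{k=1}^n \dif\alpha_k\, e^{\mathrm{i}\sum_k r_k\alpha_k} \int_{\rD(\complex^2)} \dif\mu(\rho)\, e^{-\mathrm{i}\Tr{(\sum_k\alpha_k\bsA_k)\rho}},
\end{eqnarray*}
and then apply Proposition~\ref{prop3} to the inner Haar integral. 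The crucial observation is that the resulting integrand depends on $(\alpha_1,\ldots,\alpha_n)$ only through the single vector $\boldsymbol{v}=\sum_k\alpha_k\bsa_k\in\real^3$ entering the factor $\frac{\sin\abs{\boldsymbol{v}}-\abs{\boldsymbol{v}}\cos\abs{\boldsymbol{v}}}{\abs{\boldsymbol{v}}^3}$, together with the scalar phase $\sum_k a^{(k)}_0\alpha_k$. Since the $\bsa_k$ live in $\real^3$, the value of $\boldsymbol{v}$ is controlled by at most $r=\rank(\bsT_{\bsa_1,\ldots,\bsa_n})$ independent directions, and the whole point is to decouple those from the $n-r$ redundant ones.

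In case (i), where $r=3$ and $\bsa_l=\sum_{j=1}^3\kappa_{lj}\bsa_j$ for $l\geqslant4$, we have $\boldsymbol{v}=\sum_{j=1}^3(\alpha_j+\sum_{l\geqslant4}\kappa_{lj}\alpha_l)\bsa_j$. I would then perform the change of variables $\alpha_j'=\alpha_j+\sum_{l\geqslant4}\kappa_{lj}\alpha_l$ for $j=1,2,3$ and $\alpha_l'=\alpha_l$ for $l\geqslant4$, whose Jacobian equals $1$ because the transformation matrix is unipotent (triangular with unit diagonal), exactly as in the $\kappa$-substitutions of Propositions~\ref{prop7}(ii) and \ref{prop9}(ii)--(iii). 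Inverting as $\alpha_j=\alpha_j'-\sum_{l\geqslant4}\kappa_{lj}\alpha_l'$ and substituting into the phase $\sum_k(r_k-a^{(k)}_0)\alpha_k$, the exponent splits cleanly: $\alpha_1',\alpha_2',\alpha_3'$ couple only to $(r_j-a^{(j)}_0)$, while each $\alpha_l'$ $(l\geqslant4)$ couples precisely to $(r_l-a^{(l)}_0)-\sum_{j=1}^3\kappa_{lj}(r_j-a^{(j)}_0)$. The integral then factorizes: the surviving three-dimensional $\boldsymbol{\alpha}'$-integral carries prefactor $\frac3{8\pi^3}$ and is literally $f_{\langle\bsA_1\rangle,\langle\bsA_2\rangle,\langle\bsA_3\rangle}(r_1,r_2,r_3)$ by Proposition~\ref{prop9}(i), and each remaining factor $\frac1{2\pi}\int_\real\dif\alpha_l'\,e^{\mathrm{i}[(r_l-a^{(l)}_0)-\sum_j\kappa_{lj}(r_j-a^{(j)}_0)]\alpha_l'}$ produces the stated delta function. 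Cases (ii) and (iii) are handled identically, with $3$ replaced by $2$ and by $1$, invoking Propositions~\ref{prop7}(i) and \ref{prop4} respectively for the surviving low-dimensional factor.

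The main obstacle is purely organizational: I must keep the forward and inverse substitutions aligned so that the phase separates without error into the independent $r$-dimensional block and the $n-r$ redundant blocks, and confirm that the leftover integral is not merely proportional to but equal to the earlier density, matching both the prefactor $3$ and the Gram determinant $\det(\bsT_{\bsa_1,\bsa_2,\bsa_3})$ appearing in Proposition~\ref{prop9}(i). No new analytic input beyond Proposition~\ref{prop3} is needed; the entire content is this unimodular decoupling, which is why the statement is a direct extension of the two- and three-observable computations already established.
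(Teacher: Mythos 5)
Your proposal is correct and follows essentially the same route as the paper: the integral representation of the delta functions combined with Proposition~\ref{prop3}, the unimodular change of variables $\alpha_j'=\alpha_j+\sum_{l\geqslant4}\kappa_{lj}\alpha_l$ with unit Jacobian, and the resulting factorization into the low-dimensional density (Proposition~\ref{prop9}, \ref{prop7}, or \ref{prop4}) times the product of delta functions. The phase rearrangement and the identification of the surviving block are exactly the steps the paper carries out in case (i), with (ii) and (iii) declared analogous.
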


\begin{proof}
By using integral representation of delta function $n$ times in
$f_{\langle\bsA_1\rangle,\ldots,\langle\bsA_n\rangle}$ and
Proposition~\ref{prop3}, we get
\begin{eqnarray*}\label{eq:finalform}
f_{\langle\bsA_1\rangle,\ldots,\langle\bsA_n\rangle}(r_1,\ldots,r_n)
&=& \frac3{(2\pi)^n}\int _{\real^n} \prod^n_{k=1}\dif \alpha_k
e^{\mathrm{i}\Pa{\sum^n_{k=1}(r_k-a^{(k)}_0)\alpha_k}}\notag\\
&&\times\frac{\sin\Pa{\abs{\sum^n_{k=1}\alpha_k\bsa_k}}-\abs{\sum^n_{k=1}\alpha_k\bsa_k}
\cos\Pa{\abs{\sum^n_{k=1}\alpha_k\bsa_k}}}{\abs{\sum^n_{k=1}\alpha_k\bsa_k}^3}.
\end{eqnarray*}

(i) We have
\begin{eqnarray*}
\sum^n_{k=1}\alpha_k\bsa_k&=&(\alpha_1\bsa_1+\alpha_2\bsa_2+\alpha_3\bsa_3)
+\sum^n_{l=4}\alpha_l(\kappa_{l1}\bsa_1+\kappa_{l2}\bsa_2+\kappa_{l3}\bsa_3)\notag\\
&=& \Big ( \alpha_1+\sum^n_{l=4}\kappa_{l1}\alpha_l\Big ) \bsa_1+
\Big (\alpha_2+\sum^n_{l=4}\kappa_{l2}\alpha_l\Big )\bsa_2+\Big
(\alpha_3+\sum^n_{l=4}\kappa_{l3}\alpha_l\Big )\bsa_3.
\end{eqnarray*}
Let
\begin{eqnarray*}
(\alpha'_1,\alpha'_2,\alpha'_3)=\Big
(\alpha_1+\sum^n_{l=4}\kappa_{l1}\alpha_l,\alpha_2+\sum^n_{l=4}\kappa_{l2}\alpha_l,\alpha_3+\sum^n_{l=4}\kappa_{l3}\alpha_l\Big
), \quad (\alpha'_4,\ldots,\alpha'_n)=(\alpha_4,\ldots,\alpha_n),
\end{eqnarray*}
then the Jacobian of the transformation
$(\alpha_1,\ldots,\alpha_n)\to(\alpha'_1,\ldots,\alpha'_n)$ is given
by
\begin{eqnarray*}
\det\Pa{\frac{\partial(\alpha'_1,\ldots,\alpha'_n)}{\partial(\alpha_1,\ldots,\alpha_n)}}
= 1\neq0.
\end{eqnarray*}
Noting that
\begin{eqnarray*}
\sum^n_{j=1}(r_j-a^{(j)}_0)\alpha_j &=&
\sum^3_{j=1}(r_j-a^{(j)}_0)\Big (\alpha'_j-\sum^n_{l=4}\kappa_{lj}\alpha'_l\Big ) +\sum^n_{l=4}(r_l-a^{(l)}_0)\alpha'_l\notag\\
&=&\sum^3_{j=1}(r_j-a^{(j)}_0)\alpha'_j -\sum^n_{l=4}\Big
(\sum^3_{j=1}\kappa_{lj}(r_j-a^{(j)}_0)\Big )\alpha'_l
+\sum^n_{l=4}(r_l-a^{(l)}_0)\alpha'_l\notag\\
&=&\sum^3_{l=1}(r_l-a^{(l)}_0)\alpha'_l +\sum^n_{l=4}\Big
((r_l-a^{(l)}_0)-\sum^3_{j=1}\kappa_{lj}(r_j-a^{(j)}_0)\Big
)\alpha'_l,
\end{eqnarray*}
we have
\begin{eqnarray*}
& &f_{\langle\bsA_1\rangle,\ldots,\langle\bsA_n\rangle}(r_1,\ldots,r_n) \\
&=& \frac3{(2\pi)^3}\int _{\real^3} \prod^3_{l=1}\dif \alpha'_l
e^{\mathrm{i}\Pa{\sum^3_{l=1}(r_l-a^{(l)}_0)\alpha'_l}}\frac{\sin\Pa{\abs{\sum^3_{l=1}\alpha'_l\bsa_l}}-\abs{\sum^3_{l=1}\alpha'_l\bsa_l}
\cos\Pa{\abs{\sum^3_{l=1}\alpha'_l\bsa_l}}}{\abs{\sum^3_{l=1}\alpha'_l\bsa_l}^3}\notag\\
&&\times \prod^n_{l=4}\Pa{\frac1{2\pi}\int _{\real }\dif\alpha'_l
e^{\mathrm{i}\Pa{(r_l-a^{(l)}_0)-\sum^3_{j=1}\kappa_{lj}(r_j-a^{(j)}_0)}\alpha'_l}},
\end{eqnarray*}
therefore
\begin{eqnarray*}
f_{\langle\bsA_1\rangle,\ldots,\langle\bsA_n\rangle}(r_1,\ldots,r_n)
&=&
f_{\langle\bsA_1\rangle,\langle\bsA_2\rangle,\langle\bsA_3\rangle}(r_1,r_2,r_3)\prod^n_{l=4}\delta\Big
((r_l-a^{(l)}_0)-\sum^3_{j=1}\kappa_{lj}(r_j-a^{(j)}_0)\Big ).
\end{eqnarray*}
Items  (ii) and (iii) follow similarly.
\end{proof}

\begin{prop}\label{prop12}
The joint probability distribution density of $(\Delta_\rho
\bsA_1,\ldots,\Delta_\rho \bsA_n)$ for an $n$-triple of qubit
observables defined by Eq.~\eqref{eq:k}, where $\rho$ is resulted
from partial-tracing a subsystem over a Haar-distributed random pure
state on $\complex^2\ot\complex^2$, is given by
\begin{eqnarray*}
f_{\Delta\bsA_1,\ldots,\Delta\bsA_n}(x_1,\ldots,x_n)=\Pa{2\prod^n_{j=1}\frac{x_j}{\sqrt{\abs{\bsa_j}^2-x^2_j}}}\sum_{j_2,\ldots,j_n\in\set{\pm}}
f_{\langle \bsA_1\rangle,\ldots,\langle
\bsA_n\rangle}(r^{(1)}_+(x_1),r^{(2)}_{j_2}(x_2),\ldots,r^{(n)}_{j_n}(x_n)),
\end{eqnarray*}
where $f_{\langle \bsA_1\rangle,\ldots,\langle
\bsA_n\rangle}(\cdot,\ldots,\cdot)$ is is determined by
Proposition~\ref{prop11}, and
\begin{eqnarray*}
r^{(k)}_\pm(x_k):=a^{(k)}_0\pm\sqrt{\abs{\bsa_k}^2-x^2_k}\quad(k=1,\ldots,n).
\end{eqnarray*}
\end{prop}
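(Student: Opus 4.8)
The plan is to mirror, at general $n$, the two special cases already handled for $n=2$ (Proposition~\ref{prop8}) and $n=3$ (Proposition~\ref{prop10}): reduce the joint density of the uncertainties to the joint density of the \emph{mean values}, which is supplied by Proposition~\ref{prop11}, and then resolve the resulting delta functions. First I would rewrite each one-dimensional factor so that it is expressed through $\langle\bsA_k\rangle_\rho$ rather than $\Delta_\rho\bsA_k$. Using $\delta(x_k^2-r_0^2)=\tfrac1{2\abs{r_0}}(\delta(x_k-r_0)+\delta(x_k+r_0))$ together with $x_k,\Delta_\rho\bsA_k\geqslant0$ gives $\delta(x_k-\Delta_\rho\bsA_k)=2x_k\,\delta(x_k^2-(\Delta_\rho\bsA_k)^2)$, and the $2\times2$ Cayley--Hamilton identity used in Proposition~\ref{prop5} yields $(\Delta_\rho\bsA_k)^2=\Tr{\bsA_k}\langle\bsA_k\rangle_\rho-\langle\bsA_k\rangle_\rho^2-\det(\bsA_k)$, so that
\[
\delta\bigl(x_k^2-(\Delta_\rho\bsA_k)^2\bigr)=\delta\bigl(g_{x_k}(\langle\bsA_k\rangle_\rho)\bigr),\qquad g_{x_k}(r)=x_k^2-(r-\lambda_1(\bsA_k))(\lambda_2(\bsA_k)-r).
\]
Inserting $1=\int_\real\dif r_k\,\delta(r_k-\langle\bsA_k\rangle_\rho)$ for each $k$ and carrying out the $\rho$-integral turns the product of delta functions into $f_{\langle\bsA_1\rangle,\ldots,\langle\bsA_n\rangle}$, giving
\[
f_{\Delta\bsA_1,\ldots,\Delta\bsA_n}(x_1,\ldots,x_n)=2^n\Bigl(\prod_{k=1}^n x_k\Bigr)\int_{\real^n}\prod_{k=1}^n\dif r_k\,\delta(g_{x_k}(r_k))\,f_{\langle\bsA_1\rangle,\ldots,\langle\bsA_n\rangle}(r_1,\ldots,r_n).
\]

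Next I would resolve each $\delta(g_{x_k}(r_k))$ exactly as in Propositions~\ref{prop5} and \ref{prop8}. For fixed $x_k\in[0,\abs{\bsa_k})$ the equation $g_{x_k}(r)=0$ has the two simple roots $r^{(k)}_\pm(x_k)=a^{(k)}_0\pm\sqrt{\abs{\bsa_k}^2-x_k^2}$, at which $\abs{g_{x_k}'(r^{(k)}_\pm)}=2\sqrt{\abs{\bsa_k}^2-x_k^2}$, so $\delta(g_{x_k}(r_k))=\tfrac1{2\sqrt{\abs{\bsa_k}^2-x_k^2}}(\delta_{r^{(k)}_+(x_k)}+\delta_{r^{(k)}_-(x_k)})$. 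Substituting and integrating produces a sum over all $2^n$ sign choices, the prefactor $2^n$ cancelling against $\prod_k\tfrac12$:
\[
f_{\Delta\bsA_1,\ldots,\Delta\bsA_n}(x_1,\ldots,x_n)=\Bigl(\prod_{k=1}^n\frac{x_k}{\sqrt{\abs{\bsa_k}^2-x_k^2}}\Bigr)\sum_{j_1,\ldots,j_n\in\set{\pm}}f_{\langle\bsA_1\rangle,\ldots,\langle\bsA_n\rangle}\bigl(r^{(1)}_{j_1}(x_1),\ldots,r^{(n)}_{j_n}(x_n)\bigr).
\]

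The last step, which I expect to be the only point needing care, is to fold the $2^n$ summands into $2$ times a sum of $2^{n-1}$ terms by exploiting a reflection symmetry of the mean-value density. I claim $f_{\langle\bsA_1\rangle,\ldots,\langle\bsA_n\rangle}$ is invariant under the simultaneous reflection $r_k-a^{(k)}_0\mapsto-(r_k-a^{(k)}_0)$ for all $k$. Indeed, by Proposition~\ref{prop11} this density factors into a copy of $f_{\langle\bsA_1\rangle,\langle\bsA_2\rangle,\langle\bsA_3\rangle}$ (or its rank-$2$, rank-$1$ analogue) times delta functions enforcing the linear relations $\bsa_l=\sum_j\kappa_{lj}\bsa_j$; the first factor depends on the $r_k-a^{(k)}_0$ only through the quadratic form $\omega_{\bsA,\bsB,\bsC}$ of Proposition~\ref{prop9}, which is even under the total sign flip, while each delta factor is even because its argument merely changes sign. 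Since the reflection interchanges $r^{(k)}_+(x_k)\leftrightarrow r^{(k)}_-(x_k)$, the summand indexed by $(j_1,\ldots,j_n)$ equals the one indexed by $(-j_1,\ldots,-j_n)$; pairing these and fixing $j_1=+$ gives
\[
\sum_{j_1,\ldots,j_n\in\set{\pm}}f_{\langle\bsA_1\rangle,\ldots,\langle\bsA_n\rangle}\bigl(r^{(1)}_{j_1},\ldots,r^{(n)}_{j_n}\bigr)=2\sum_{j_2,\ldots,j_n\in\set{\pm}}f_{\langle\bsA_1\rangle,\ldots,\langle\bsA_n\rangle}\bigl(r^{(1)}_+,r^{(2)}_{j_2},\ldots,r^{(n)}_{j_n}\bigr),
\]
which yields exactly the stated formula. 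Everything else is routine bookkeeping identical to the $n=2,3$ proofs; the genuinely new content is verifying the reflection invariance uniformly across the three ranks of $\bsT_{\bsa_1,\ldots,\bsa_n}$ via Proposition~\ref{prop11}.
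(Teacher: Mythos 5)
Your proof is correct and follows exactly the route the paper intends: the paper omits the proof of this proposition, stating only that it ``goes similarly for Propositions~\ref{prop5}, \ref{prop8}, and \ref{prop10},'' and your writeup is a faithful expansion of that argument (delta-function reduction to the mean-value density, resolution at the two roots $r^{(k)}_\pm$, and the reflection symmetry that folds the $2^n$ terms into $2\cdot 2^{n-1}$). Your explicit verification of the reflection invariance across all three ranks of $\bsT_{\bsa_1,\ldots,\bsa_n}$ via Proposition~\ref{prop11} supplies the one detail the paper leaves implicit.
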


\begin{proof}
The proof goes similarly for Propositions~\ref{prop5}, \ref{prop8},
and \ref{prop10}. We omitted here.
\end{proof}

\begin{proof}[Proof of Lemma~\ref{lem3}]
From Proposition~\ref{prop11}, we get the support of
$f_{\langle\bsA_1\rangle,\ldots,\langle\bsA_n\rangle}$ which is
given by, via
$$
\omega_{\bsA_1,\bsA_2,\bsA_3}(r_1,r_2,r_3)=\sqrt{\Pa{r_1-a^{(1)}_0,r_2-a^{(2)}_0,r_3-a^{(3)}_0}\bsT^{-1}_{\bsa_1,\bsa_2,\bsa_3}\Pa{r_1-a^{(1)}_0,r_2-a^{(2)}_0,r_3-a^{(3)}_0}^\t},
$$
\begin{eqnarray*}
\supp(f_{\langle\bsA_1\rangle,\ldots,\langle\bsA_n\rangle})=\Set{(r_1,\ldots,r_n)\in\real^n:\omega_{\bsA_1,\bsA_2,\bsA_3}(r_1,r_2,r_3)\leqslant1,r_l-a^{(l)}_0=\sum^3_{j=1}\kappa_{lj}(r_j-a^{(j)}_0)(\forall
l=4,\ldots,n)}.
\end{eqnarray*}
By similar analysis as in the proofs of Propositions~\ref{prop4},
\ref{prop7}, and \ref{prop9}, we obtain the support of
$f_{\Delta\bsA_1,\ldots,\Delta\bsA_n}$, i.e., the uncertainty region
$\cU_{\Delta\bsA_1,\ldots,\Delta\bsA_n}=\supp(f_{\Delta\bsA_1,\ldots,\Delta\bsA_n})$,
\begin{eqnarray*}
\supp(f_{\Delta\bsA_1,\ldots,\Delta\bsA_n}) =
\Set{(x_1,\ldots,x_n)\in\real^n_{\geqslant0}:\begin{cases}
\bsu_{\epsilon_1,\epsilon_2,\epsilon_3}(x_1,x_2,x_3)\bsT^{-1}_{\bsa_1,\bsa_2,\bsa_3}\bsu^\t_{\epsilon_1,\epsilon_2,\epsilon_3}(x_1,x_2,x_3)\leqslant1\\
\epsilon_l\sqrt{\abs{\bsa_l}^2-x^2_l}=\sum^3_{j=1}\kappa_{lj}\epsilon_j\sqrt{\abs{\bsa_j}^2-x^2_j}(\forall
l=4,\ldots,n)
\end{cases}}.
\end{eqnarray*}
Here $\epsilon_k\in\set{\pm1}$ and $x_k\in[0,\abs{\bsa_k}]$, where
$k=1,\ldots,n$.
\end{proof}

\subsection{Proof of Theorem~\ref{th3}}\label{app-theorem3}

We consider now the uncertainty regions for multiple qubit
observables. For an $n$-tuple of qubit observables
$(\bsA_1,\ldots,\bsA_n)$, where
$\bsA_k=a^{(k)}_0\I+\bsa_k\cdot\boldsymbol{\sigma}$ with
$(a^{(k)}_0, \bsa_k)\in\real^4$, $k=1,\ldots,n$, denote
$\bsT_{\bsa_1,\ldots,\bsa_n}=(\inner{\bsa_i}{\bsa_j})$. Note that
$\set{\bsa_1,\bsa_2,...,\bsa_n}$ has at most three vectors that are
linearly independent. Without loss of generality, we assume
$\set{\bsa_1,\bsa_2,\bsa_3}$ is linearly independent. The rest
vectors can be linearly expressed by $\set{\bsa_1,\bsa_2,\bsa_3}$,
$\bsa_l=\kappa_{l1}\bsa_1+\kappa_{l2}\bsa_2+\kappa_{l3}\bsa_3$, for
some coefficients $\kappa_{lj}$, $l=4,\cdots,n$, $j=1,2,3$. Set
$\bsu_{\epsilon_1,\epsilon_2,\epsilon_3}(x_1,x_2,x_3)=\Pa{\epsilon_1\sqrt{\abs{\bsa_1}^2-x^2_1},\epsilon_2\sqrt{\abs{\bsa_2}^2-x^2_2},
\epsilon_3\sqrt{\abs{\bsa_3}^2-x^2_3}}$, where
$\epsilon_k\in\set{\pm1}(k=1,2,3)$.

Based on the proof of Lemma~\ref{lem3}, we give the proof of
Theorem~\ref{th3}.

\begin{proof}[Proof of Theorem~\ref{th3}]
In Lemma~\ref{lem3}, let
$\sqrt{\abs{\bsa_k}^2-x^2_k}:=X_k\in[0,\abs{\bsa_k}](k=1,\ldots,n)$
due to the fact that $x_k\in[0,\abs{\bsa_k}]$. This implies that
\begin{eqnarray*}
&&\min_{(x_1,\ldots,x_n)\in\cU_{\Delta\bsA_1,\ldots,\Delta\bsA_n}}\sum^n_{k=1}x^2_k\\
&&= \Tr{\bsT_{\bsa_1,\ldots,\bsa_n}} -
\max\Set{\sum^n_{k=1}X^2_k:\begin{cases}(\epsilon_1 X_1,\epsilon_2
X_2,\epsilon_3 X_3)\bsT^{-1}_{\bsa_1,\bsa_2,\bsa_3}(\epsilon_1
X_1,\epsilon_2 X_2,\epsilon_3 X_3)\leqslant1\\
\epsilon_lX_l=\sum^3_{j=1}\kappa_{lj}\epsilon_jX_j(\forall
l=4,\ldots,n)
\end{cases}}\\
&&=\Tr{\bsT_{\bsa_1,\ldots,\bsa_n}} -
\max\Set{\sum^n_{k=1}Y^2_k:\begin{cases}(Y_1,Y_2,Y_3)\bsT^{-1}_{\bsa_1,\bsa_2,\bsa_3}(Y_1,Y_2,Y_3)^\t\leqslant1\\
Y_l=\sum^3_{j=1}\kappa_{lj}Y_j(\forall l=4,\ldots,n)
\end{cases}}.
\end{eqnarray*}
Next, we show that
\begin{eqnarray*}
\max\Set{\sum^n_{k=1}Y^2_k:\begin{cases}(Y_1,Y_2,Y_3)\bsT^{-1}_{\bsa_1,\bsa_2,\bsa_3}(Y_1,Y_2,Y_3)^\t\leqslant1\\
Y_l=\sum^3_{j=1}\kappa_{lj}Y_j(\forall l=4,\ldots,n)
\end{cases}} = \lambda_{\max}(\bsT_{\bsa_1,\ldots,\bsa_n}).
\end{eqnarray*}
Indeed, let $\bsP=(\bsa_1,\bsa_2,\bsa_3)$, where each $\bsa_j$ is a
column vector in $\real^3$. Due to the fact that
$\bsa_l=\sum^3_{j=1}\kappa_{lj}\bsa_j(\forall l=4,\ldots,n)$, we see
that
\begin{eqnarray*}
(\bsa_1,\bsa_2,\bsa_3)\Pa{\begin{array}{c}
      \kappa_{l1} \\
      \kappa_{l2} \\
      \kappa_{l3}
    \end{array}
} =\bsP\Pa{\begin{array}{c}
      \kappa_{l1} \\
      \kappa_{l2} \\
      \kappa_{l3}
    \end{array}
} = \bsa_l\Longleftrightarrow\Pa{\begin{array}{c}
      \kappa_{l1} \\
      \kappa_{l2} \\
      \kappa_{l3}
    \end{array}
} = \bsP^{-1}\bsa_l.
\end{eqnarray*}
Denote by $\bsy:=(Y_1,Y_2,Y_3)^\t$. Then
$Y_l=\sum^3_{j=1}\kappa_{lj}Y_j(l=4,\ldots,n)$ can be rewritten as
\begin{eqnarray*}
Y_l = \Inner{\Pa{\begin{array}{c}
                   \kappa_{l1} \\
                   \kappa_{l2} \\
                   \kappa_{l3}
                 \end{array}
}}{\Pa{\begin{array}{c}
         Y_1 \\
         Y_2 \\
         Y_3
       \end{array}
}} = \Inner{\bsP^{-1}\bsa_l}{\bsy}.
\end{eqnarray*}
Based on the above observation, we have immediately that
\begin{eqnarray*}
\sum^n_{j=1}Y^2_j &=& \sum^3_{j=1}Y^2_j + \sum^n_{l=4}Y^2_l =
\Inner{\bsy}{\bsy}+
\sum^n_{l=4}\Inner{\bsy}{\bsP^{-1}\bsa_l}\Inner{\bsP^{-1}\bsa_l}{\bsy}\\
&=&\Innerm{\bsy}{\I}{\bsy}+\Innerm{\bsy}{\sum^n_{l=4}\Pa{\bsP^{-1}\bsa_l}\Pa{\bsP^{-1}\bsa_l}^\t}{\bsy}\\
&=&\Innerm{\bsy}{\I+\sum^n_{l=4}\Pa{\bsP^{-1}\bsa_l}\Pa{\bsP^{-1}\bsa_l}^\t}{\bsy},
\end{eqnarray*}
where, via $\bsQ:=\bsP^{-1}$,
\begin{eqnarray*}
\I+\sum^n_{l=4}\Pa{\bsP^{-1}\bsa_l}\Pa{\bsP^{-1}\bsa_l}^\t
=\bsQ\Pa{\sum^n_{j=1}\bsa_j\bsa^\t_j}\bsQ^\t
\end{eqnarray*}
implying that
\begin{eqnarray*}
\sum^n_{j=1}Y^2_j =
\Innerm{\bsy}{\bsQ\Pa{\sum^n_{j=1}\bsa_j\bsa^\t_j}\bsQ^\t}{\bsy}.
\end{eqnarray*}
Note that $\bsT_{\bsa_1,\bsa_2,\bsa_3}=\bsP^\t\bsP$. From this, we
see that $\bsy\bsT^{-1}_{\bsa_1,\bsa_2,\bsa_3}\bsy^\t\leqslant1$,
which is equivalent to $\Innerm{\bsy}{\bsQ\bsQ^\t}{\bsy}\leqslant1$.
Denote $\bsv=Q^\t\bsy$. This indicates that
\begin{eqnarray*}
&&\max\Set{\sum^n_{k=1}Y^2_k:\begin{cases}(Y_1,Y_2,Y_3)\bsT^{-1}_{\bsa_1,\bsa_2,\bsa_3}(Y_1,Y_2,Y_3)^\t\leqslant1\\
Y_l=\sum^3_{j=1}\kappa_{lj}Y_j(\forall l=4,\ldots,n)
\end{cases}} \\
&&=\max_{\Inner{\bsv}{\bsv}\leqslant1}\Innerm{\bsv}{\sum^n_{j=1}\bsa_j\bsa^\t_j}{\bsv}=\lambda_{\max}\Pa{\sum^n_{j=1}\bsa_j\bsa^\t_j}\\
&&=\lambda_{\max}(\bsT_{\bsa_1,\ldots,\bsa_n}).
\end{eqnarray*}
Here the last equality is true since both $\bsM\bsM^\t$ and
$\bsM^\t\bsM$ have the same spectrum when ignoring the zero
eigenvalues for real matrix $\bsM$. Hence the same maximal
eigenvalues
$\lambda_{\max}(\bsM\bsM^\t)=\lambda_{\max}(\bsM^\t\bsM)$. Let
$\bsM=(\bsa_1,\ldots,\bsa_n)$. Then
\begin{eqnarray*}
\sum^n_{j=1}\bsa_j\bsa^\t_j = \bsM\bsM^\t,\quad
\bsT_{\bsa_1,\ldots,\bsa_n}=\bsM^\t\bsM.
\end{eqnarray*}
These can give our desired result.
\end{proof}

\subsection*{Acknowledgments}

This work is supported by the NSF of China under Grant Nos.
11971140, 61771174 and 12075159, Beijing Natural Science Foundation
(Z190005), Key Project of Beijing Municipal Commission of Education
(KZ201810028042), the Academician Innovation Platform of Hainan
Province, Academy for Multidisciplinary Studies, Capital Normal
University, and Shenzhen Institute for Quantum Science and
Engineering, Southern University of Science and Technology (Grant
No. SIQSE202001).


\end{document}